\def\c{\mathrm{c}}
\DeclareMathOperator{\birth}{birth}
\DeclareMathOperator{\death}{death}
\DeclareMathOperator{\coker}{coker}
\DeclareMathOperator{\im}{Im}
\DeclareMathOperator{\PH}{PH}
\def\N{\mathbb{N}}
\def\R{\mathbb{R}}
\def\Z{\mathbb{Z}}
\def\T{\mathbb{T}}
\def\cH{\mathcal{H}}
\def\cL{\mathcal{L}}
\def\cM{\mathcal{M}}
\def\cP{\mathcal{P}}
\def\cX{\mathcal{X}}
\def\cL{\mathcal{L}}
\newcommand{\Hg}{\mathrm{H}}
\newcommand{\E}{\mathbb{E}} %expectation
\newcommand{\given}{\;|\;}
\newcommand{\mean}[1] {\E\left\{{#1}\right\}}
\newcommand{\meanx}[1] {\E\{{#1}\}}
\newcommand{\bX}{{\mathbf{X}}}
\newcommand{\cov}[2]{\mathrm{Cov}\param{{#1},{#2}}}
\newcommand{\set}[1]{\left\{#1\right\}}
\newcommand{\param}[1]{\left(#1\right)}
\newcommand{\floor}[1] {\left\lfloor{#1}\right\rfloor}
\newcommand{\prob}[1]{\mathbb{P}\left(#1\right)}
\newcommand{\cB}{{\cal{B}}}
\providecommand{\setthms}[1]{#1}
\newtheorem{lem}{Lemma}[section]
\newtheorem{con}[lem]{Conjecture}
\theoremstyle{definition}
\newtheorem*{rem}{Remark}
\newcommand{\iid}{\mathrm{i.i.d.}}
\newcommand{\pois}[1]{\mathrm{Poisson}\param{{#1}}}
\newcommand{\bs}{\backslash}
\numberwithin{equation}{section}
\def\bsplit#1\esplit{\begin{split} #1 \end{split} }
\def\splitb#1\splite{\begin{split} #1 \end{split} }
\def\beq#1\eeq{\begin{equation} #1 \end{equation}}
\def\eqb#1\eqe{\begin{equation} #1 \end{equation}}
\theoremstyle{empty}
\title{Homological Percolation and the Euler Characteristic}
\author{Omer Bobrowski	\thanks{Technion -- Israel Institute of Technology, \emph{omer@ee.technion.ac.il}}\and Primoz Skraba\thanks{Queen Mary University of London, \emph{p.skraba@qmul.ac.uk}}}
\date{\today}
\begin{document}
\maketitle

\begin{abstract}
In this paper we study the connection between the phenomenon of homological percolation  (the formation of ``giant" cycles in persistent homology), and the zeros of the expected Euler characteristic curve. We perform an experimental study that covers four different models: site-percolation on the cubical and permutahedral lattices, the Poisson-Boolean model, and Gaussian random fields. All the models are generated on the flat torus $\T^d$, for $d=2,3,4$. The simulation results strongly indicate that the zeros of the expected Euler characteristic curve approximate the critical values for homological-percolation. Our results also provide some insight about the approximation error. Further study of this connection could have powerful implications both in the study of percolation theory, and in the field of Topological Data Analysis.
\end{abstract}

\section{Introduction}
This paper aims to make a connection between two seemingly unrelated mathematical topics, in the context of spatial stochastic processes. The first is a large-scale phenomenon we refer to as \emph{homological percolation}, where \emph{giant cycles} are formed.
The second is
an integer-valued topological invariant, known as the \emph{Euler characteristic} (EC).

To describe these two topics and the connection between them, we will use the language of \emph{persistent homology} \cite{edelsbrunner_persistent_2008,edelsbrunner_topological_2002,edelsbrunner_persistent_2014,zomorodian_computing_2005}, which is main workhorse in the field of Topological Data Analysis (TDA) \cite{carlsson_topology_2009,ghrist_barcodes:_2008}.
Persistent homology is an algebraic-topological functional that is applied to \emph{filtrations} (nested sequences) of topological spaces.  It is essentially a tool that tracks the formation and destruction of topological features such as connected components (0-cycles), holes (1-cycles), cavities (2-cycles), and their higher dimensional analogues (non-trivial $k$-cycles).
Persistent homology has been demonstrated to be a powerful tool in the analysis of various types of data (e.g.~neuroscience \cite{singh_topological_2008}, cosmology \cite{adler_modeling_2017}, and complex networks \cite{horak_persistent_2009}).

Suppose that $X$ is a ``nice" topological space, and that we have a filtration $\set{X_t}_{t\in \R}$ of spaces such that $X_s \subset X_t\subset X$ for all $s < t$.  We can classify the cycles captured by persistent homology into two groups:
the group of ``giant" cycles will be those that are also nontrivial cycles (holes) in $X$, while all other cycles (that are trivial in $X$) will be considered ``small". In a data-analytic language we can think of the giant cycles as the ``topological signal" hidden in the filtration, as they capture information about the underlying space. The small cycles are considered as nuisance ``noise" one might wish to filter, in order to reveal the signal. See Figure \ref{fig:ph} below. One of the main challenges in TDA is to identify for a given persistent-homology, which feature belongs to which group.

The phenomenon we refer to as \emph{homological percolation} describes the appearance (or birth) of the giant $k$-cycles. From the theoretical-probabilistic perspective, this study is at a very early stage. However, relying on  classical results in percolation theory together with recent simulations (including in this paper), it is conjectured that homological percolation occurs as a sharp phase transition. In other words, given a random filtration $\set{X_t}$ the probability for creating the giant cycles switches from zero to one as a result of an infinitesimal increase in the filtration parameter $t$. Moreover, for a fixed $k$ the thresholds for all giant $k$-cycles coincide, and the critical values are increasing in $k$. In other words, if $t^{\text{perc}}_k$ is the critical value for the emergence of the giant $k$-cycles, then $t^{\text{perc}}_k \le t^{\text{perc}}_{k'}$ for all $k<k'$. These conjectures summarize the first half of the story.%\footnote{\textcolor{red}{Do we want to mention coupling?} [Omer] I'm not sure what you mean by that...}

The second half of the story is about a rather different mathematical object. The \emph{Euler characteristic} (EC) is an integer-valued topological invariant that can be assigned to a topological space. Quite remarkably, the EC can be defined in  several different ways, vastly different in nature (e.g.~geometric, combinatorial, topological, analytic), which are all equivalent under quite general conditions (local compactness). There are many ways to compute the EC.  For example, we can compute the EC by counting the number of cells in a cell-complex, counting the critical points of a Morse function, or integrating the Gaussian curvature of a manifold.  For our purposes, we use the ``homological" definition of the EC, i.e.
$$\chi(X) := \sum_k (-1)^k \beta_k(X),$$
where $X$ is a topological space. The integer numbers $\beta_k(X)$ are called the \emph{Betti numbers}, which count the number of $k$-dimensional non-trivial cycles in $X$. The EC is a very interesting mathematical object \cite{adler_random_2007,stoyan_stochastic_1987}, and over the years it was also found to be very useful as a  statistical tool. Two areas of applications where the EC was proven to be quite powerful are cosmology \cite{colley1996topology,kogut1996tests,worsley_boundary_1995} and brain imaging \cite{taylor_detecting_2007,worsley_testing_2001}.
In the random setting, somewhat surprisingly, much more is known about the distribution of the EC for a random space $X$ (as we see in Section \ref{sec:models}) compared to the individual Betti numbers defining it.

Given a topological space $X$ and a filtration $\set{X_t}_{t\in \R}$, one may calculate its \emph{EC curve} $\chi(t) := \chi(X_t)$. For several random filtrations, such as the ones discussed in this paper, the expected value $\mean{\chi(t)}$ has been analyzed in the past \cite{bobrowski_vanishing_2017,taylor_gaussian_2009}. Somewhat surprisingly, in all the models we discuss here, as well as many others, while the EC curves look completely different, it is always the case that the expected EC curve has exactly $(d-1)$ zeros (where $d$ is the dimension of the generating model). Denote these zeros by $t^{\text{ec}}_1,\ldots,t^{\text{ec}}_{d-1}$, and recall that $t_1^{\text{perc}},\ldots, t_{d-1}^{\text{perc}}$ are the homological-percolation thresholds.
The question we wish to pursue in this paper is the then following:
\begin{center}
	\fbox{
	\parbox{0.5\textwidth}{
	\centering
Is there a connection between $t^{\text{perc}}_k$ and $t^{\text{ec}}_k$?}}
\end{center}

A priori, there is no obvious reason why such a connection should exist. Indeed, both homological percolation and the EC curve are related to the homology of a given filtration. However, homological percolation describes  the giant cycles formed, while the EC contains information about the total number of cycles, regardless of their size. In addition, the EC curve is a quantitative descriptor while percolation is a qualitative phenomenon.
Contradicting this intuition, our main goal in this paper is to argue that such a connection exists, and is potentially universal in the sense that it occurs across vastly different stochastic models.  We note that we are not aiming to provide any analytic statements here. Instead, we want to suggest that this link exists by presenting simulation results for several random systems.

In this paper we consider four percolation models: site percolation on a cubical grid, site percolation on a permutahedral grid, continuum percolation model, and sub-level sets of Gaussian random fields. In all these models an explicit formula for the expected EC curve can be calculated. We simulate these models on the $d$-dimensional \emph{flat torus} (i.e.~a $d$-dimensional box with periodic boundary conditions), and compare the critical percolation values to the zeros of the expected EC curve.

\paragraph{Main results.} The simulations we present in Section \ref{sec:sim} highly suggest a positive answer to the question above. In all models and all dimensions tested, the simulations indicate that $t^{\text{perc}}_k \approx t^{\text{ec}}_k$, where determining the exact meaning of ``$\approx$" remains future work. Note that all the models we study depend on a parameter $n$ (either grid-size, or number of points). Defining $\Delta_k := (t^{\text{perc}}_k - t^{\text{ec}}_k)$, it would be tempting to conjecture that $\Delta_k\xrightarrow{n\to\infty} 0$. However, our simulation results indicate that while the difference converges, the limit might be nonzero. If the model is symmetric with respect to the parameter $t$ (e.g. the permutahedral complex and a zero-mean Gaussian field), and if $d$ is even, then  our simulations as well as analytical arguments show that indeed $\Delta_{d/2} = 0$.
A second-order observation we make from the simulations is that the sign of the error term $\Delta_k$ is not arbitrary. For $k < d/2$ it seems that we always have $\Delta_k < 0$, while for $k>d/2$ we have $\Delta_k > 0$.

In addition to the interesting and surprising mathematical phenomenon that we reveal here, there are also potential  applied aspects to the conjectures we make in this paper.
In most models in statistical physics, the exact percolation thresholds are not known. At best there exist some theoretical bounds, or numerical approximations.
However, since the zeros of the expected EC curves can be found in many cases, our hope is that these could be used them as an improved approximation for the real percolation thresholds. In addition to probability and statistical physics, we also believe that these results can have implications in TDA, for example by enhancing the detection of significant topological features in data.

We should note that the connection between percolation thresholds and the EC was studied previously in \cite{neher_topological_2008}, and certainly inspired the current work. There, the authors focused on classical percolation (the formation of giant connected components) which is a special case of the higher-dimensional homological notion we consider here. Thus, there is only a single threshold to consider. In addition, the models considered in \cite{neher_topological_2008} are two and three dimensional, whereas here we wish to argue that this phenomenon occurs in all dimensions, and across various types of percolation models.

\section{Preliminaries}\label{sec:homology}

In this section we wish to provide a rather non-formal introduction to the fundamental terminology we will be using in this paper.  We  include references for  more formal treatment of the topics discussed.

%%%%%
\subsection{Homology}
%%%%%

Homology is an algebraic-topological structure that describes various types of topological phenomena in topological spaces using algebraic structures.
Let $X$ be a topological space. In this paper we will use homology with field coefficients. In this case, the $k$-th homology $\Hg_k(X)$ is a vector space, such that its basis elements correspond to the following features. The basis of $\Hg_0(X)$ corresponds to the connected components of $X$ (also referred to as $0$-cycles), $\Hg_1(X)$ to ``holes'' in $X$ ($1$-cycles), $\Hg_2(X)$ to ``voids'' or ``bubbles'' in $X$ ($2$-cycles), and more generally -- $\Hg_k(X)$ represents $k$-cycles, that can be thought of as shapes similar to a $k$-sphere.
The Betti numbers are the corresponding dimensions $\beta_k(X) := \dim \Hg_k(X)$, that count the number of nontrivial $k$-cycles in $X$. In Figure \ref{fig:betti} we present a few examples for spaces along with their Betti numbers. There are many excellent introductions to homology theory, for example, we refer the reader to  \cite{hatcher_algebraic_2002}.

\begin{figure}[ht]
\centering
\includegraphics[width = 0.7\textwidth]{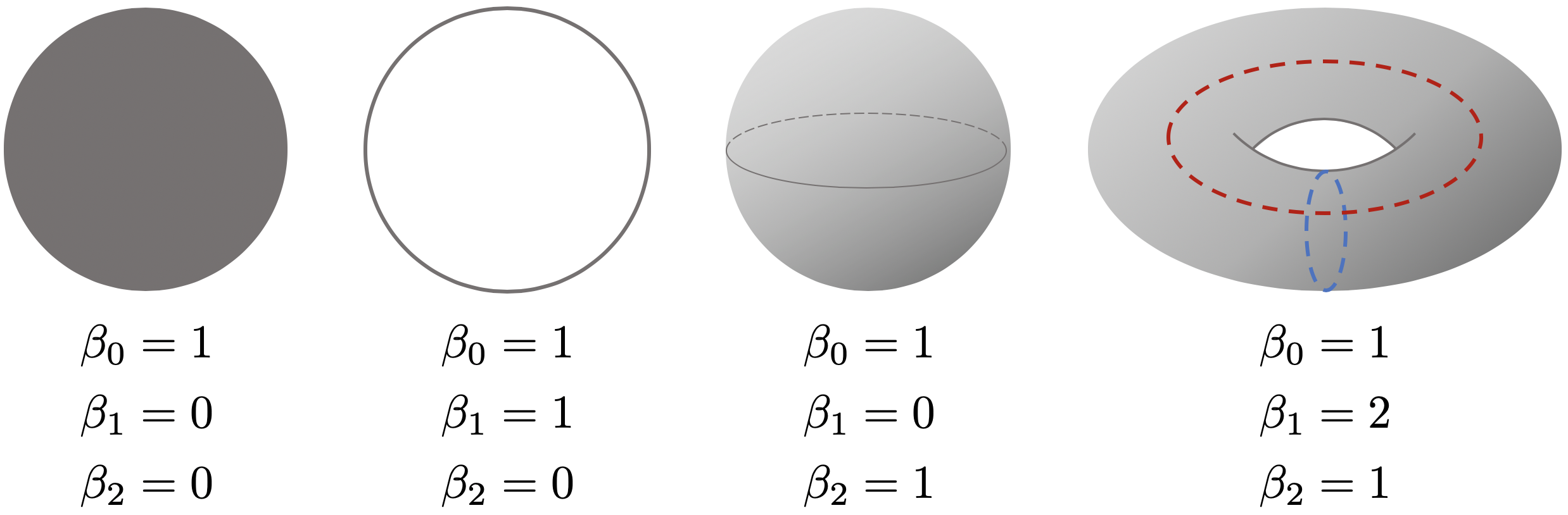}
\caption{\label{fig:betti} Example of simple topological spaces with their Betti numbers. From left to right - a disc, a circle, a 2-dimensional sphere, and a 2-dimensional torus. For the torus, we marked the two $1$-cycles in dashed lines.}
\end{figure}

In addition to describing the topology of a single space $X$, the language of homology also provides means to  match $k$-cycles between two spaces. Let $X,Y$ be topological spaces, and let $f:X\to Y$ be a continuous function. Then for every $k$ there exists a corresponding linear transformation called the \emph{induced map} $f_*:\Hg_k(X)\to \Hg_k(Y)$ that maps $k$-cycles in $X$ into $k$-cycles in $Y$.

\begin{figure}[ht]
\centering
\begin{subfigure}[t]{0.25\textwidth}
\centering
\includegraphics[height=\textwidth]{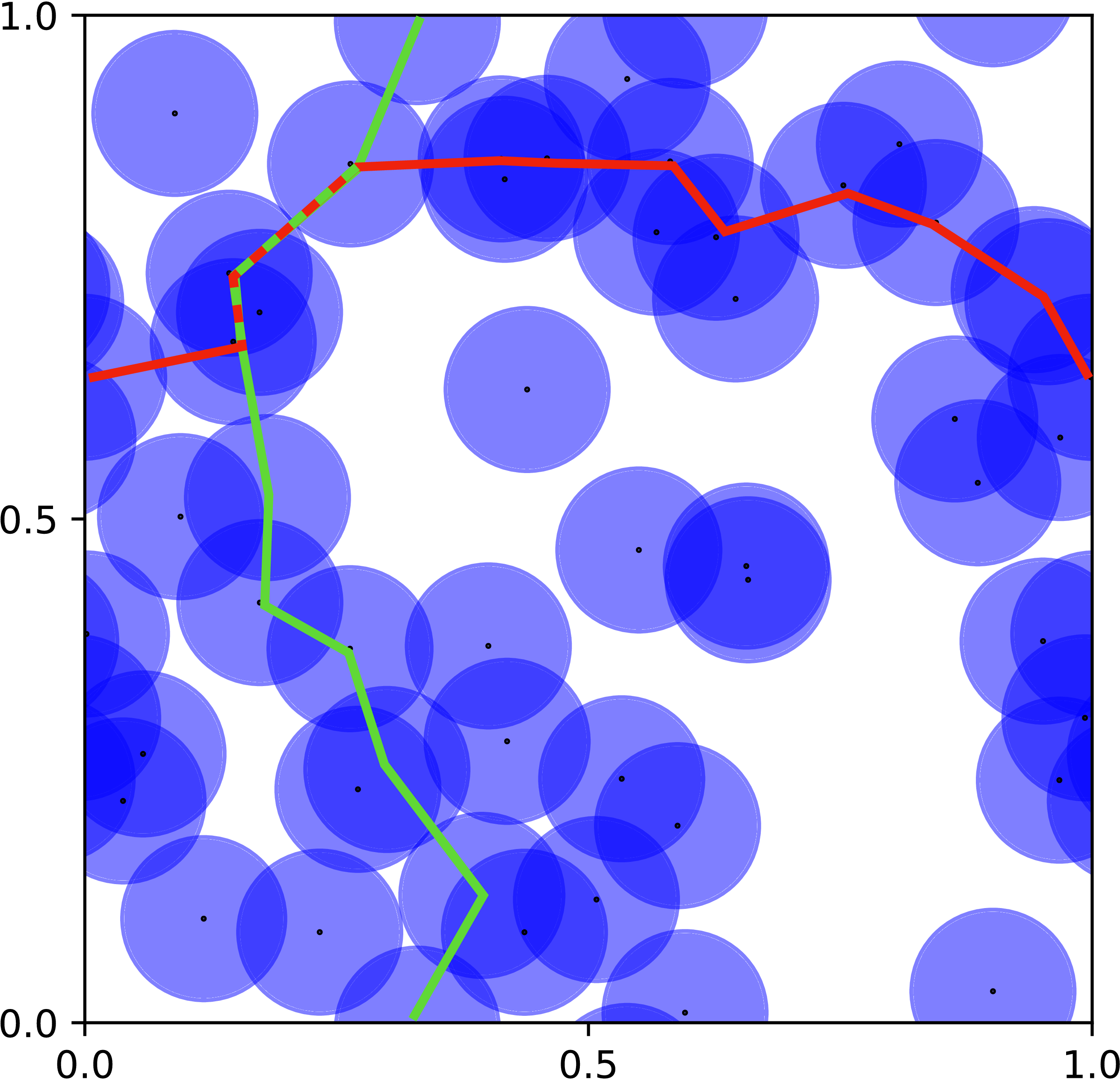}
\caption{}
\end{subfigure}
\hfill
\begin{subfigure}[t]{0.4\textwidth}
\centering
\includegraphics[height=0.6\textwidth]{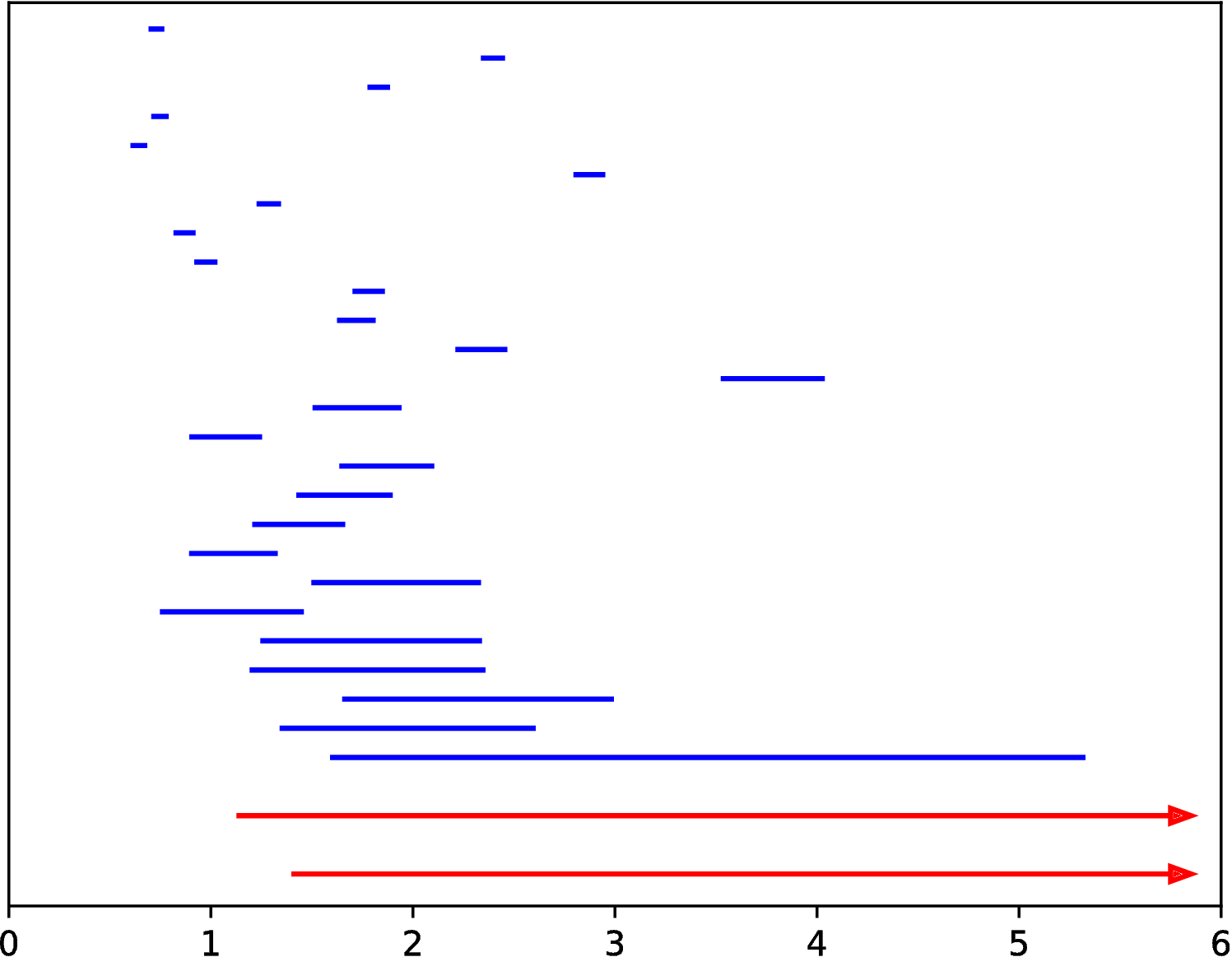}
\caption{}
\end{subfigure}
\hfill
\begin{subfigure}[t]{0.25\textwidth}
\centering
\includegraphics[height=\textwidth]{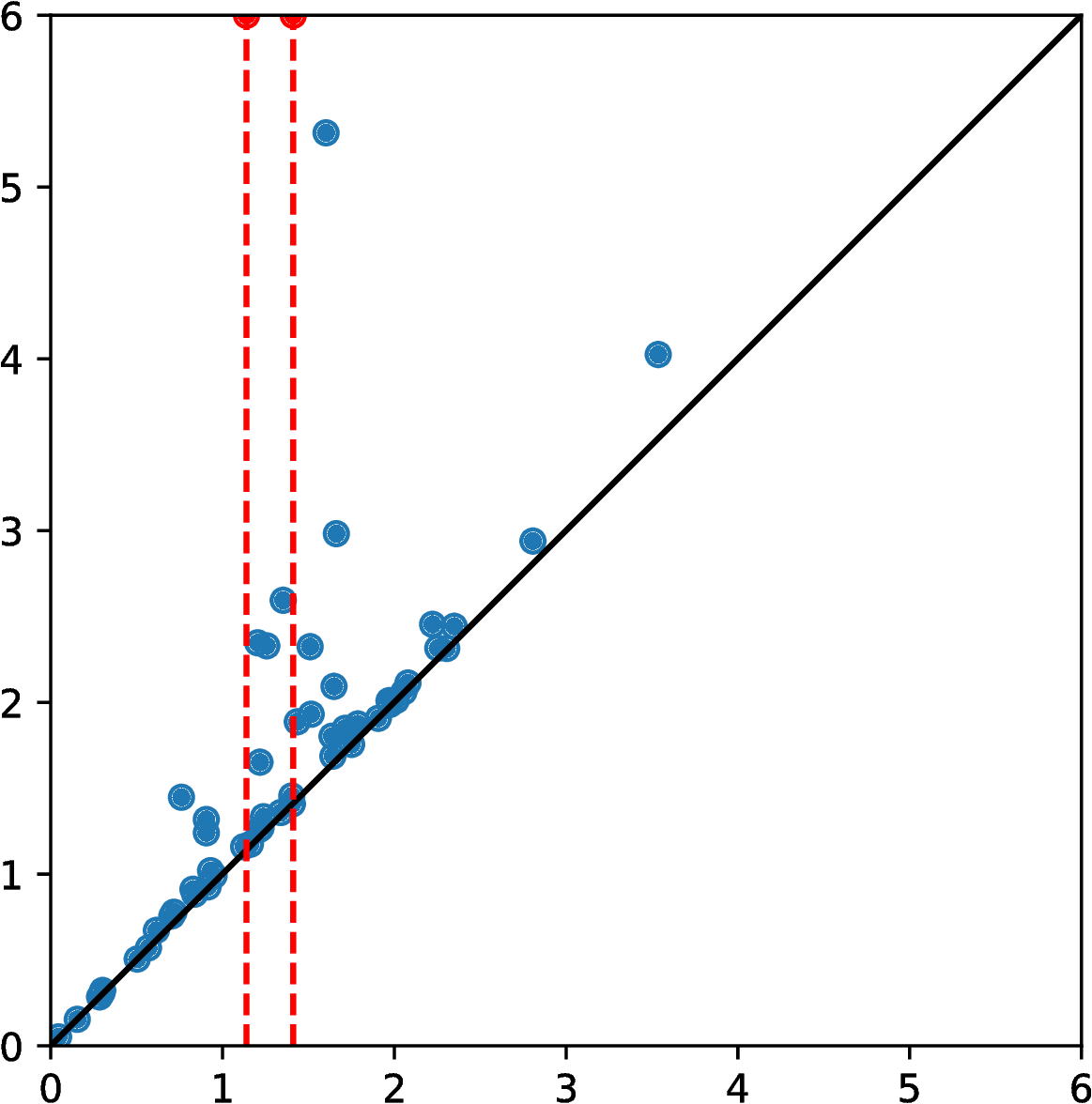}
\caption{}
\end{subfigure}
\caption{\label{fig:ph}
Persistent homology for a point sample on the 2D flat torus (a unit box with periodic boundary conditions). (a) A set of 50 points $\cX$ sampled from the flat torus. The filtration $X_t$ taken here is the union of balls of radius $t$ around the points. We  increase $t$  from $0$ to $\infty$ and calculate $\PH_1$. (b) The barcode for $\PH_1$. Each $1$-cycle  is represented by a  bar, where the endpoints are the radii in which the cycle was formed and filled in (birth, death).
 The two red bars correspond to the two cycles in the torus (``the giant cycles") while the blue ones are considered as ``noise". (c) The persistence diagram for $\PH_1$. Here the $(\birth,\death)$ pairs are plotted as points in the plane. Two points are far away from the diagonal (shown by the red dashed line), representing the true non-trivial or giant cycles of the torus. This figure was generated using the GUDHI package \cite{maria2014gudhi}.}
\end{figure}
%%%%
%%%%%
%%%%%
\subsection{Persistent Homology}
%%%%%

Persistent homology is one of the fundamental tools used in the field of \emph{Applied Topology} or \emph{Topological Data Analysis}. The motivation for developing persistent homology was that as data-analytic features, homological properties can be quite unstable in the sense that small perturbations to the data may result in a significant change of the homological structure.
The solution provided by persistent homology is that instead of extracting the homological features of a single space, we consider a sequence of spaces and extract information about homological cycles together with their evolution throughout the filtration. This can be thought of as a ``multi-scale" version of homology.

A bit more concretely, a filtration $\bX = \set{X_t}_t$ is a set of topological spaces such that for all $s<t$ we have $X_s\subset X_t$. The inclusion maps $i:X_s\hookrightarrow X_t$ induce mappings between cycles $i_*:\Hg_k(X_s)\to \Hg_k(X_t)$. These mappings allow us to track the evolution of cycles as they form and disappear throughout the filtration. Without getting into the formal mathematical definitions, we can think of the $k$-th persistent homology $\PH_k(\bX)$ as a collection (more accurately a graded-module) of $k$-persistent cycles. For each cycle $\gamma\in \PH_k(\bX)$ we can assign two values $\mathrm{bth}(\gamma)$ and $\mathrm{dth}(\gamma)$ standing for ``birth" and ``death" ($\mathrm{bth}(\gamma) \le \mathrm{dth}(\gamma)$), representing the times where the cycle $\gamma$ was formed and later filled in. As a data-analytic tool, persistent homology provides a topological signature for data that also includes some geometric information that makes it more robust to noise than the fixed-scale homology. See Figure \ref{fig:ph} for an example.
For more details as well as formal definitions see \cite{edelsbrunner_computational_2010, edelsbrunner_persistent_2014}. For an overview of TDA see \cite{carlsson_topology_2009,ghrist_barcodes:_2008,doi:10.1146/annurev-statistics-031017-100045}

%%%%%
\subsection{Giant cycles}
%%%%%
We are now ready to define homological percolation.
Let $M$ be a ``nice" compact space, and let $\set{X_t}$ be a filtration such that $X_t\subset M$ for all $t$. As mentioned in the introduction, by giant cycles we refer to those cycles that appear in the filtration for some $t$, and represent one of the nontrivial cycles in $\Hg_k(M)$. These are also referred to as \emph{essential cycles}. We will make this description a bit more formal.

For each $t$, the inclusion map $i:X_t\hookrightarrow M$ induces a map in homology $i_{*,t}:\Hg_k(X_t)\to \Hg_k(M)$. The image of $i_{*,t}$ stands for all the cycles that exist in $X_t$ and are mapped to nontrivial cycle in $M$. We will refer to these cycles as ``giant". By the term \emph{homological percolation} we refer to the study of how and when these giant cycles are formed. For example, the longest red bars in Figure \ref{fig:ph}(b) represent the two holes of the torus, and therefore we consider them as giant, while the other cycles are noise.

Suppose that the filtration $\set{X_t}$ is generated at random (we will discuss specific random models in Section \ref{sec:models}).
For each $t$, fixing $k$ we can define the following events,
\[
	E_t := \set{\im\param{i_{*,t}} \ne 0},\qquad A_t := \set{\im\param{i_{*,t}} = \Hg_k(M)}.
\]
In other words, $E_t$ is the event that there \emph{exists} a giant $k$-cycle in $X_t$, while $A_t$ is the event that \emph{all} possible giant $k$-cycles exist in $X_t$. It is conjectured that similarly to other percolation models, the appearance of the giant $k$-cycles follows a sharp phase transition. This means that there exists a value $t_k^{\mathrm{perc}}>0$ such that
\eqb\label{eqn:crit_perc}
	\prob{E_t} = \prob{A_t} = \begin{cases}1 & t > t_k^{\mathrm{perc}},\\
		0 & t < t_k^{\mathrm{perc}}, \end{cases}
\eqe
where in most cases we study, the stochastic model has an intrinsic parameter $n$ and the equalities above hold in the limit when $n\to\infty$.
This conjecture is supported  by simulations (as the ones presented in this paper), as well as a theoretical work in progress \cite{bobrowski2019} for the boolean model discussed below. It is further conjectured that the thresholds are ordered, so that $t_1^{\mathrm{perc}} < t_2^{\mathrm{perc}} < \cdots < t_{d-1}^{\mathrm{perc}}$ where $d$ is the maximal degree possible (dictated by the dimension of $M$).

%%%%%
\subsection{The Euler Characteristic}
%%%%%
The Euler characteristic (EC) is an integer-valued additive functional. Using the language of homology, one can define the EC of a topological space $X$ as
\[
\chi(X) := \sum_k (-1)^k \beta_k(X),
\]
where $\beta_k(X)$ are the Betti numbers discussed above.
One of the key properties of the EC is that it is a \emph{topological invariant}, namely of $X$ and $Y$ are two spaces that are ``similar'' topologically in the sense that there is a continuous transformation form one to the other (known as \emph{homotopy equivalence}), then $\chi(X)=\chi(Y)$. The EC
shows up in various areas of mathematics (combinatorics, integral geometry, topology, analysis, etc.), and can be defined in various different ways.

In most stochastic models, evaluating  quantities related to the distribution of  homology or persistent homology is between difficult to impossible.
Surprisingly, however, this is not the case for the EC. For example, the expected values of the Betti numbers are unknown in almost all stochastic models studied to-date, while in almost all the models  an explicit formula for the expected EC exists (see Section \ref{sec:models}). Therefore, for probabilistic and statistical analysis, the EC is much favorable, and indeed several interesting applications in statistics and data science were developed based on EC calculations \cite{richardson_efficient_2014,taylor_detecting_2007,worsley_boundary_1995,worsley_estimating_1995}.

Studying a filtration $\set{X_t}$ as in persistent homology, we can define the EC curve $\chi(t) := \chi(X_t)$,
that tracks the evolution of the EC in time. In the random setting we study in this paper, we will mainly focus on the expected EC curve
\[
\bar\chi(t):=\mean{\chi(t)}.
\]

%%%%%
\section{Random Percolation Models} \label{sec:models}
%%%%%
We  focus on three different types of stochastic models to establish our conjectures about the connection between the Euler characteristic and homological percolation. In this section we provide the basic definitions for these models, as well as the formulae we use to calculate the expected EC curve. One of the main reasons for choosing these models  is that in all of them, we can derive an explicit formula for the expected  EC curve $\bar\chi(t)$ (where $t=p,\lambda,\text{ or }\alpha$ depending on the model below).

For simplicity, all the models we  discuss generate random subsets of the $d$-dimensional flat torus $\T^d$. By `flat torus' we refer to the quotient of the box $[0,1]^d$ with the relation $\set{0\sim 1}$ (i.e. opposite faces are ``glued'' together). The flat torus is a good model to study topological phenomena as (a) the metric on it is locally Euclidean, (b) it is a manifold with no boundary, and (c) it has non-trivial homology in all degrees $k=0,\ldots, d$. More precisely, $\beta_k(\T^d) = \binom{d}{k}$.

\subsection{Site percolation models}\label{sec:site}
We start with  simple discrete models for random subsets of $\T^d$. As opposed to the continuous models we discuss later, discrete percolation models are well studied, and often more tractable, both from a theoretical and simulation perspective. We will examine two types of structures, discussed next.

\subsubsection{Cubical complex}

A cubical complex $Q$ is a collection of cubical faces (i.e. vertices, edges, squares, cubes, etc.), that is closed under the boundary operation. We denote by $Q^d_n$ the cubical complex obtained by taking the flat torus $\T^d = [0,1]^d\bs \{0\sim 1\}$ and splitting it into $n$ equal-size boxes, where we assume that $n=m^d$, for some $m\in \N$. Note that every $d$-dimensional box is in $Q^d_n$ together with all its $k$-dimensional faces ($k=0,\ldots,d-1$).

We will consider cubical complexes $Q$ that are subsets of $Q_n^d$. Each such complex is homeomorphic to a subset of $\T^d$ via the natural embedding. Thus, we will interchangeably refer to $Q$ as either a sub-complex of $Q_n^d$ or as a closed subset of $\T^d$. Denote by $F_k(Q)$ the number of $k$-dimensional faces of $Q$. Then, the EC of $Q$ can be calculated by (cf. \cite{hatcher_algebraic_2002})
\eqb\label{eqn:ec_faces}
	\chi(Q) = \sum_{k=0}^d (-1)^k F_k(Q).
\eqe

The \emph{random} cubical complex we study here, denoted $Q(n,p)$ is generated by taking $Q_n^d$ and declaring each $d$-dimensional face (or a site) as either \emph{open} with probability $p$, or \emph{closed} with probability $1-p$, independently between the faces. We then define $Q(n,p)$ as the union of all open boxes (together with their lower dimensional faces). See example in Figure \ref{fig:uniformgiantcycles}. Calculating the expected EC using \eqref{eqn:ec_faces} (see Appendix \ref{sec:ec_calc}) yields,
\eqb\label{eqn:ec_cubical}
	\bar\chi_Q(p) := \mean{\chi(Q(n,p))}= n \sum_{k=0}^d (-1)^k\binom{d}{k} (1-(1-p)^{2^{d-k}}).
\eqe

\begin{rem}
Notice that the definition of $Q(n,p)$ does not imply any connection between $Q(n,p_1)$ and $Q(n,p_2)$ for $p_1\ne p_2$.
However, in order to discuss percolation phenomena as well as persistent homology for the cubical complex, we want the sequence $\set{Q(n,p)}_{p=0}^1$ to be a filtration, so that $Q(n,p_1) \subset Q(n,p_2)$ for all $p_1< p_2$. The simplest way to establish that is the following. Let $U_1,\ldots, U_n$ be $n$ iid random variables, so that $U_i\sim U[0,1]$. Then, for any fixed $p$, we say that site $i$ is open in $Q(n,p)$ if $U_i \le p$. This way for all $p\in[0,1]$ the complex $Q(n,p)$ has the distribution we desire, and indeed $Q(n,p_1)\subset Q(n,p_2)$ for all $p_1 < p_2$.
\end{rem}

\begin{figure}[H]
\centering
\begin{subfigure}[t]{0.3\textwidth}
\centering
\includegraphics[height=\textwidth]{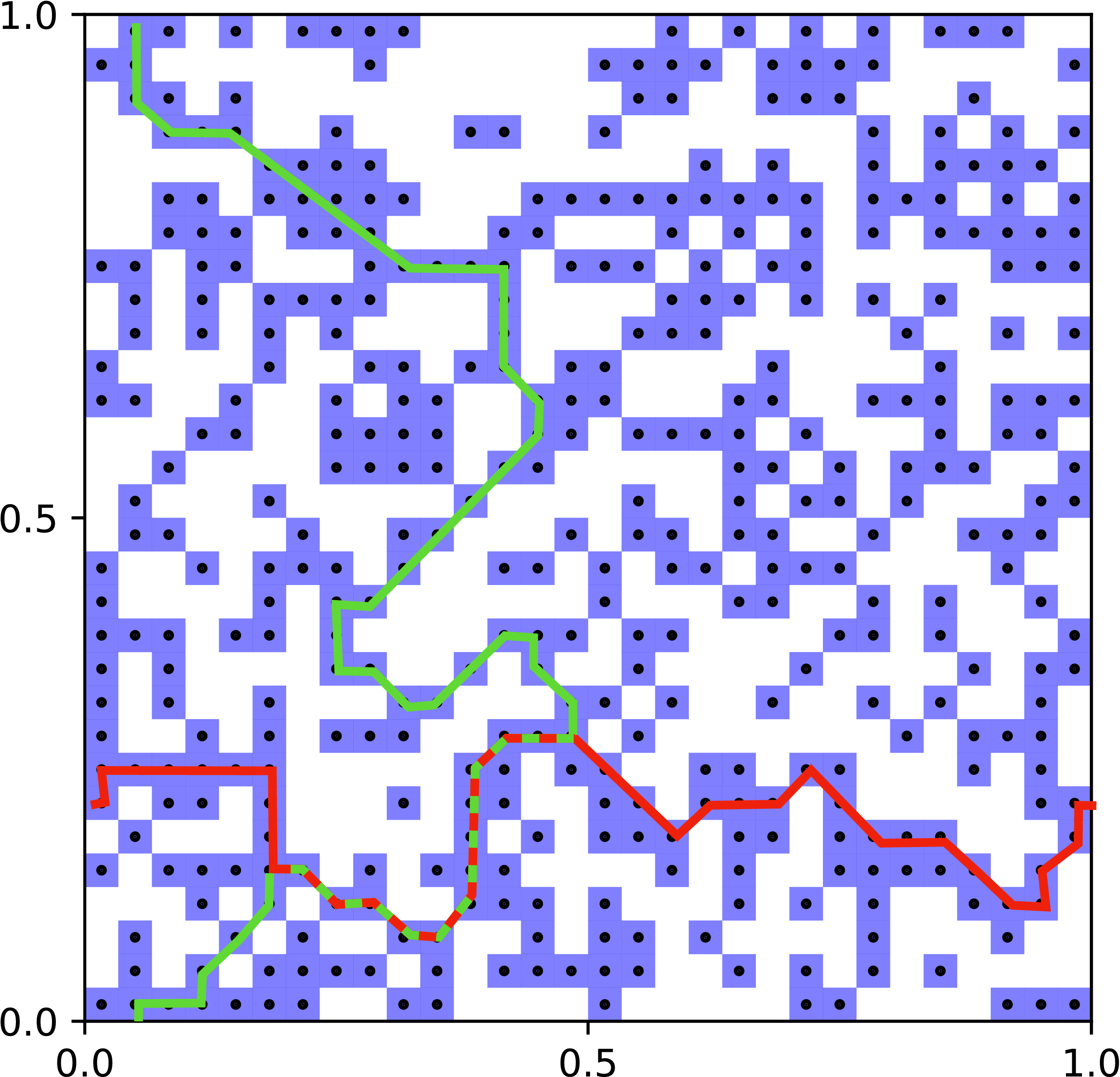}
\caption{}
\end{subfigure}
\hfill
\begin{subfigure}[t]{0.3\textwidth}
\centering
\includegraphics[height=\textwidth]{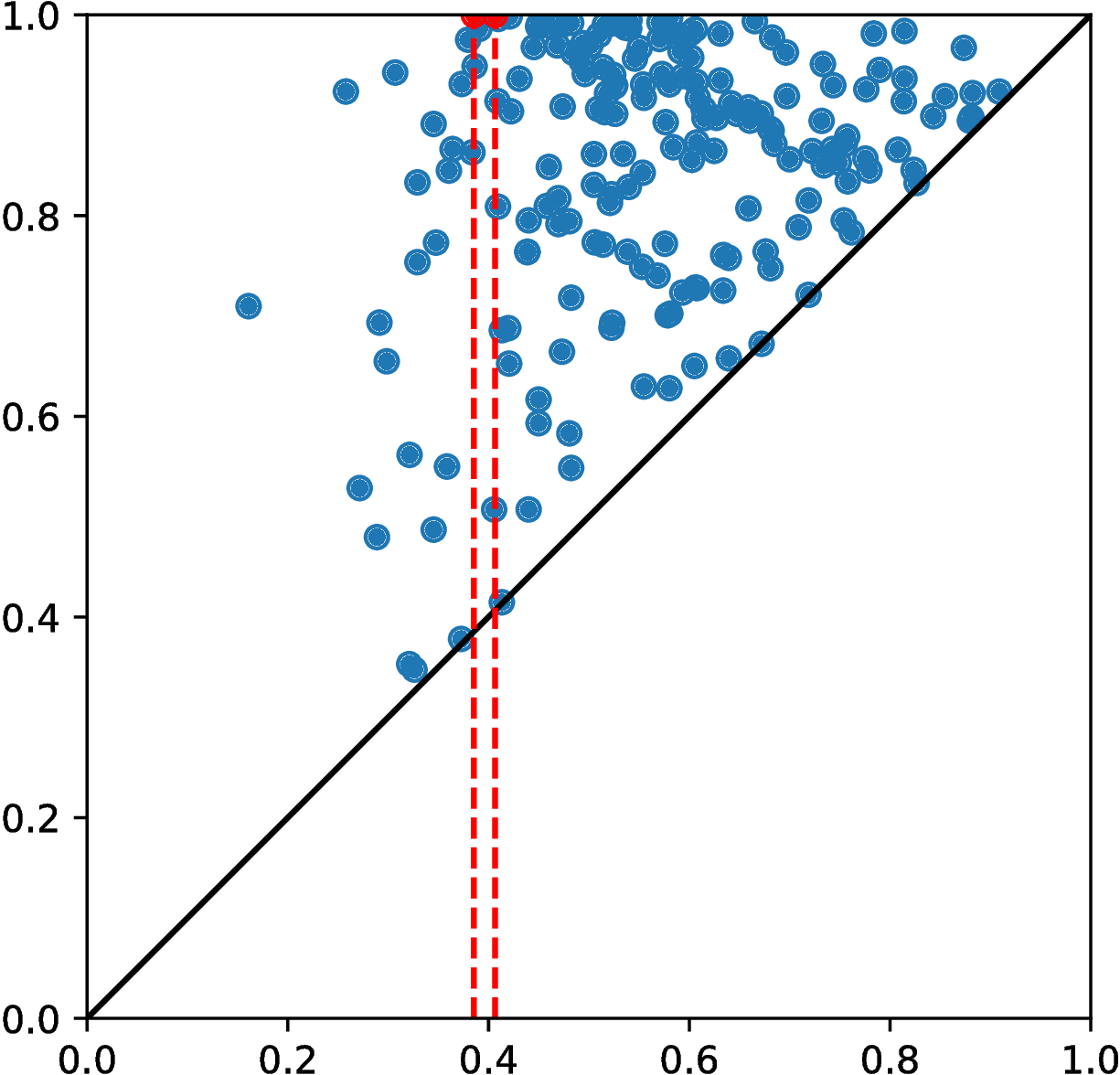}
\caption{}
\end{subfigure}
\hfill
\begin{subfigure}[t]{0.3\textwidth}
\includegraphics[height=\textwidth]{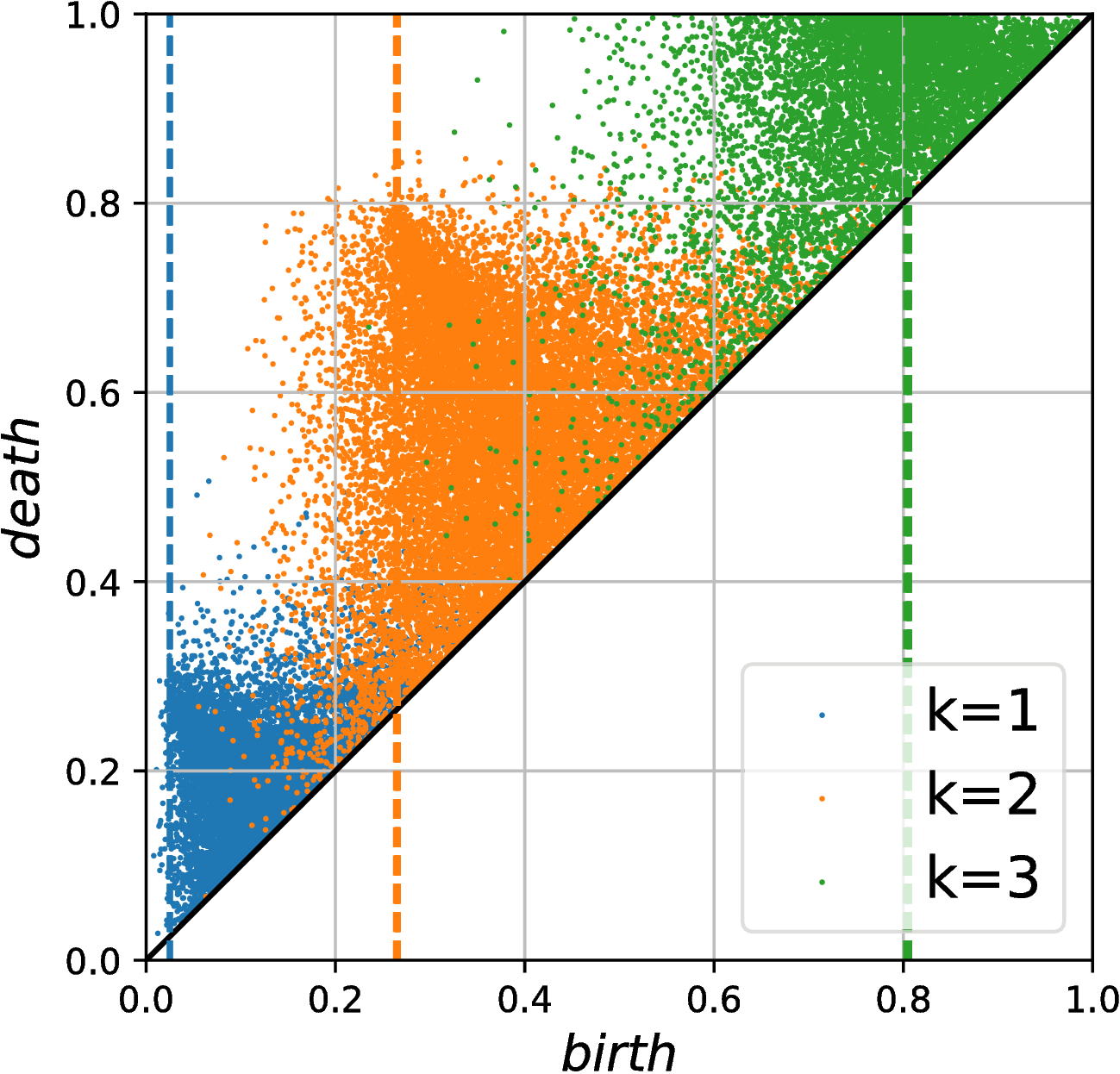}
\caption{}
\end{subfigure}
\caption{\label{fig:uniformgiantcycles}
   (a) The emergence of the giant cycles for $Q(n,p)$ with $d=2$, $n=2500$, and $p=0.43$. The red lines mark the two giant 1-cycles (recall the period boundary gluing).  (b) The corresponding persistence diagram for $\PH_1$. The two red and green curves mark the times when each of the giant 1-cycles appeared. (d) The persistence diagram for the case $d=4$, $n=65536$. In this case we have homology in degrees $k=1,2,3$. The vertical lines mark the birth times of the giant $k$-cycles.}
\end{figure}

\subsubsection{Permutahedral complex}\label{sec:perm}
We introduce an alternative discrete model to address some inherent shortcomings of the cubical model. In two dimensions, a hexagonal tiling is often used in percolation theory instead of the $\mathbb{Z}^2$-grid.
Here we will use a higher-dimensional notion, we refer to as a \emph{permutahedral} tessellation, where the basic building block is a \emph{permutahedron} -- the generalization of a hexagon to arbitrary dimensions. Notice that taking all $3!$ permutations of the coordinates $(1,2,3)$ yields 6 vertices in $\R^3$ that form a hexagon. Similarly, a $d$-dimensional permutahedron is the polytope obtained by taking the convex hull of all $(d+1)!$ permutations of $(1,\ldots,d+1)$ in $\R^{d+1}$.

Next we construct the \emph{permutahedral lattice}. The definitions here follow in~\cite{conway2013sphere} (Section 6.6) and \cite{baek2009some}. Define
\[
	\widehat \R^d := \set{(x_0,x_1\ldots, x_d) \in \R^{d+1} \given \sum_{i=0}^d x_i = 0},
\]
i.e.~$\widehat \R^d$ is a $d$-dimensional plane in $\R^{d+1}$. Define the $A_d$-lattice as $A_d := \Z^{d+1}\cap \widehat \R^d$, and its dual  $A^*_d$  as
$$A^*_d = \left\{x \in \widehat\R^{d} \given \forall y\in A_d:  x\cdot y \in \mathbb{Z} \right\}.$$
Taking $\pi : \widehat \R^d \to \R^d$ to be the natural isometry, then the Voronoi cells of $\pi(A_d^*)$ form a permutahedral tessellation of $\R^d$, and the set of centers of these cells $\pi(A_d^*)$ is called a \emph{permutahedral lattice}.

A \emph{site} in this model is the closure of a Voronoi cell of the a point in $\pi(A^*_d)$. Each site has the structure of a $d$-dimensional  {permutahedron}, hence the name of the model. Note that the points of $\pi(A_d)$ form the vertices of this polytope (see also \cite{ziegler2012lectures}).

To go from a tessellation of $\R^d$ to a tessellation of the the torus $\T^d$, we use an analogous method as in the cubical case, gluing ``opposite" faces together. In practice, this is done by identifying points of $\pi(A^*_d)$ (similarly to the periodic Delaunay complex~\cite{bogdanov:hal-01224549}). As in the cubical case, this limits our choice of grid size ($n$), depending on the dimension. We denote the corresponding tessellation of $\T^d$ as $P_n^d$.
%To perform the gluing we consider the canonical basis and find the corresponding point to glue as shown in Figure~\ref{}.
\begin{figure}[ht]
\centering
\begin{subfigure}{0.3\textwidth}
\includegraphics[width=\textwidth]{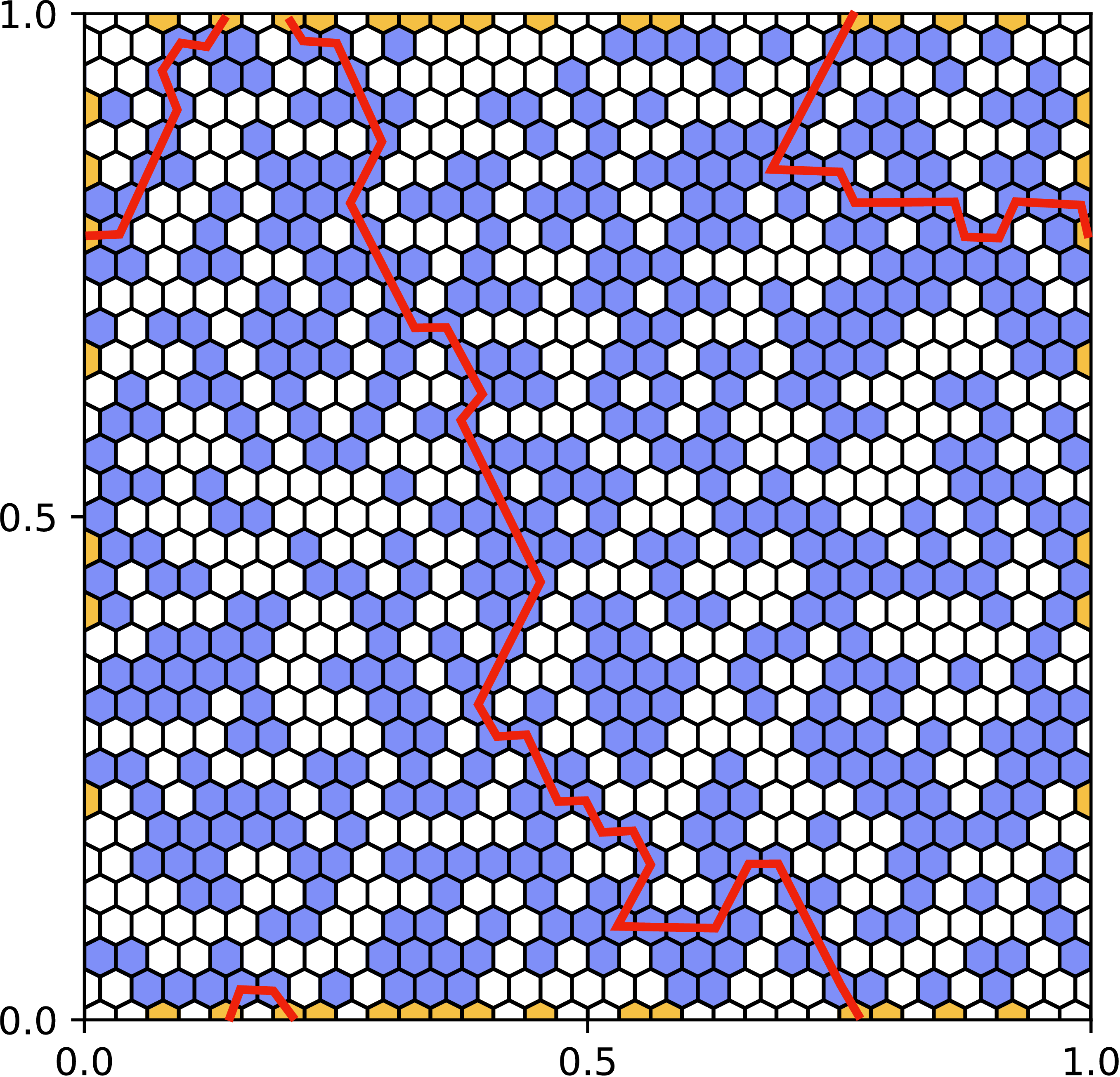}
\caption{}
\end{subfigure}
\hfill
\begin{subfigure}{0.3\textwidth}
\includegraphics[width=\textwidth]{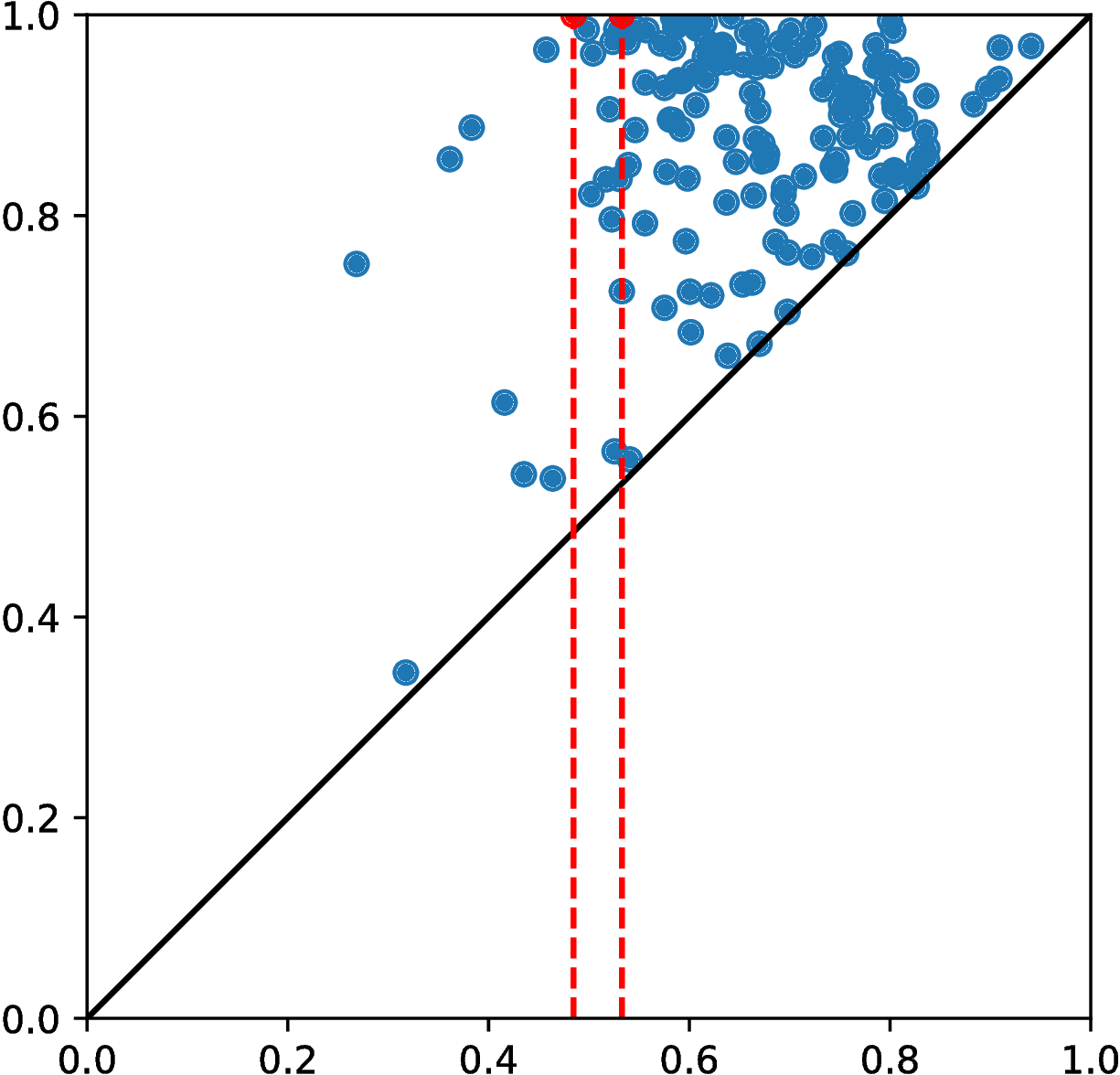}
\caption{}
\end{subfigure}
\hfill
\begin{subfigure}{0.3\textwidth}
\includegraphics[width=\textwidth]{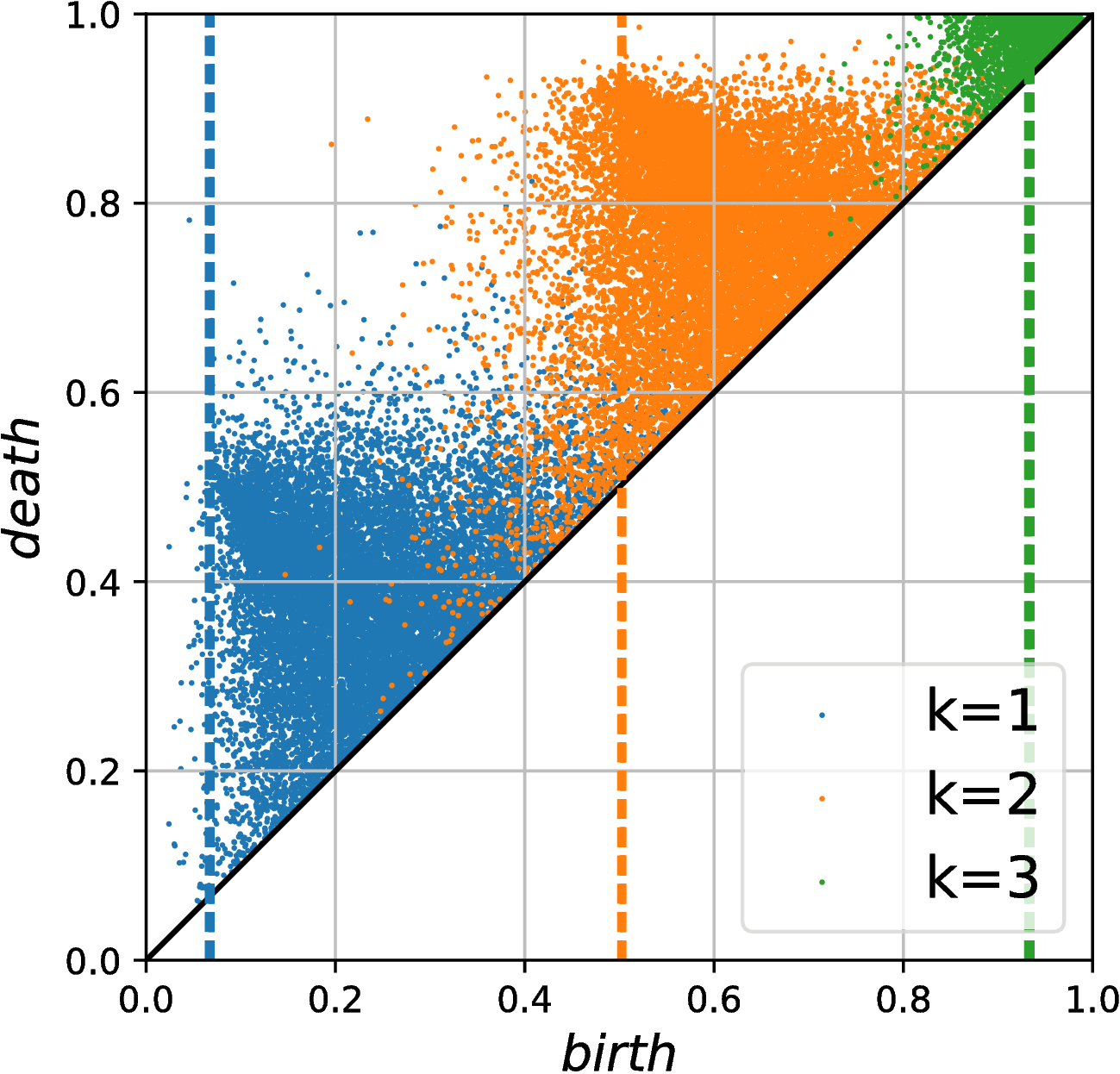}
\caption{}
\end{subfigure}
\caption{\label{fig:hex} Site percolation on a permutahedral grid. (a) In $d=2$ we have a hexagonal grid. Since we are on the flat torus, sites on opposite sides are glued together (marked by yellow). The red path marks a giant $1$-cycle at $p=\frac{1}{2}$. Note that only one giant cycle has appeared and that it is the sum of the two ``obvious" giant cycles. (b) $\PH_1$. (c) Persistence diagrams for the case $d=4$.}
\end{figure}

Similarly to the random cubical model, we define a random permutahedral complex.  Let $P(n,p)$ be a random subset of $P_n^d$ where each site is open with probability $p$, and closed with probability $1-p$. As in the cubical case, we can calculate the EC by counting faces in different dimensions. This leads to (see Appendix \ref{sec:ec_calc}),
\eqb\label{eqn:ec_perm}
\bar\chi_P(p) := \mean{\chi(P(n,p))} = n\sum\limits_{k=0}^d (-1)^{d-k}\left(1- (1-p)^{k+1}\right) \sum\limits_{j=0}^{k+1}(-1)^{k+1-j} \binom{k}{j}j^{d+1}.
\eqe

The main reason for our interest in the permutahedral complex, is that in contrast to the cubical model, it exhibits a powerful duality property.
Let $P\subset P_n^d$ be a sub-complex and recall that we define a giant as an element in the image $i_*:\Hg_k(P)\to \Hg_k(\T^d)$. Define,
\[
	\cB_k(P) := \dim(i_*(\Hg_k(P))),
\]
i.e.~$\cB_k(P)$ is the number of giant $k$-cycles in $P$.
Next, let $ P^\c = \mathrm{cl}(P_n^d \bs P)$ - the closure of the complement. The following lemma (in fact a stronger version of it) is proved in Appendix \ref{sec:duality}.

\begin{lem}\label{lem:dual_betti}
For $0\le k \le d$,
\[
\cB_k(P) + \cB_{d-k}(P^\c) = \beta_k(\T^d).
\]
\end{lem}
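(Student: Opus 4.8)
\medskip
\noindent\textbf{Proof strategy.}\
Set $M:=\T^d$ and regard $P$ as a closed subcomplex of $M$, so that $P^\c$ has underlying set $\mathrm{cl}(M\setminus P)$. All (co)homology is taken with coefficients in the fixed field $\F$, so that for every space $X$ the universal coefficient map $\Hg^k(X)\to\Hom(\Hg_k(X),\F)$ is an isomorphism, natural in $X$; consequently an inclusion induces maps of equal rank in homology and in cohomology. The plan is to compute $\cB_{d-k}(P^\c)=\dim\operatorname{im}(\Hg_{d-k}(P^\c)\to\Hg_{d-k}(M))$ by transporting it across Poincar\'e duality and identifying it with $\beta_k(\T^d)-\cB_k(P)$.

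First the elementary half. From the long exact sequence of the pair $(M,P)$ in cohomology, the image of $\Hg^k(M,P)\to\Hg^k(M)$ equals the kernel of the restriction $\Hg^k(M)\to\Hg^k(P)$; since that restriction has the same rank as $i_*\colon\Hg_k(P)\to\Hg_k(M)$, namely $\cB_k(P)$, one gets
\[
\dim\operatorname{im}(\Hg^k(M,P)\to\Hg^k(M))=\beta_k(\T^d)-\cB_k(P).
\]
So it suffices to match the left-hand image with the space of giant $(d-k)$-cycles of $P^\c$. The duality step is the chain
\[
\Hg^k(M,P)\ \cong\ \Hg^k_c(M\setminus P)\ \cong\ \Hg_{d-k}(M\setminus P),
\]
where the first isomorphism identifies the cohomology of the compact space $M$ relative to the closed subcomplex $P$ with the compactly supported cohomology of the open complement (valid since $(M,P)$ is a good pair), and the second is Poincar\'e duality for the open oriented $d$-manifold $M\setminus P$ (the torus is orientable, and orientability restricts to open submanifolds). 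These isomorphisms intertwine $\Hg^k(M,P)\to\Hg^k(M)$ with the map $j_*\colon\Hg_{d-k}(M\setminus P)\to\Hg_{d-k}(M)$ induced by the open inclusion $j\colon M\setminus P\hookrightarrow M$ — this is the naturality of Poincar\'e duality for open inclusions, matching ``extension by zero'' on compactly supported cohomology with pushforward on homology, together with the standard compatibility of the two long exact sequences. Taking ranks (the Poincar\'e isomorphism on $M$ being an isomorphism) therefore gives
\[
\dim\operatorname{im}(\Hg_{d-k}(M\setminus P)\to\Hg_{d-k}(M))=\beta_k(\T^d)-\cB_k(P).
\]

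It remains to pass from $M\setminus P$ to its closure $P^\c$, and this is the only place where the geometry of the permutahedral complex enters, and the step I expect to be the crux. The claim is that the inclusion $M\setminus P\hookrightarrow P^\c$ is a homotopy equivalence, so that $\operatorname{im}(\Hg_{d-k}(M\setminus P)\to\Hg_{d-k}(M))=\operatorname{im}(\Hg_{d-k}(P^\c)\to\Hg_{d-k}(M))$, whose dimension is by definition $\cB_{d-k}(P^\c)$; combined with the previous display this yields $\cB_k(P)+\cB_{d-k}(P^\c)=\beta_k(\T^d)$, and $\beta_k(\T^d)=\binom{d}{k}$. The reason the claim holds for the permutahedral tessellation but not for the cubical one is that the permutahedron is a simple polytope, so the permutahedral tessellation is \emph{simple}: the top-cells incident to a given face meet it in a fan in general position, never in a checkerboard configuration (already for $d=2$, any two hexagons sharing a vertex also share an edge, whereas two squares of $\Z^2$ can meet in a vertex alone). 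Because of this, $P^\c$ is a regular-neighbourhood-type thickening of $M\setminus P$; the cleanest way to make this precise is to pass to the dual cell complex $(P_n^d)^*$, exhibiting an explicit deformation retraction of $P^\c$ onto the dual subcomplex $\bigcup\{\sigma^*:\sigma\ \text{a cell of}\ P_n^d,\ \sigma\not\subseteq P\}$, which in turn is a deformation retract of $M\setminus P$. Carrying this out in every dimension, with compatibility with the inclusions into $M$, is the technical heart of the proof; organizing the argument around the dual cell complex also upgrades it to the ``stronger version'' alluded to in the text — an identification of the subspace $i_*(\Hg_{d-k}(P^\c))\subseteq\Hg_{d-k}(\T^d)$ as the annihilator of $i_*(\Hg_k(P))$ under the intersection pairing on $\T^d$, from which the dimension count of Lemma \ref{lem:dual_betti} is immediate.
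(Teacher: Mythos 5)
Your duality bookkeeping is correct and, modulo packaging, matches the paper's. The paper runs a commutative ladder between the homology long exact sequence of the pair $(P_n^d,P)$ and the cohomology long exact sequence of $(P_n^d,P^\c)$, with vertical isomorphisms coming from Lefschetz duality, Poincar\'e duality and the five lemma; it then reads off $\coker i_* \cong \im j^*$ by a diagram chase and uses the equality of ranks of a map and its transpose over a field to identify $\dim\im j^*$ with $\cB_{d-k}(P^\c)$, giving $\beta_k(\T^d)=\cB_k(P)+\cB_{d-k}(P^\c)$. Your route via $\Hg^k(M,P)\cong \Hg^k_c(M\setminus P)\cong \Hg_{d-k}(M\setminus P)$ together with naturality of Poincar\'e duality under the open inclusion is the same theorem in different clothing, and your rank count on the cohomology side (restriction having the same rank as $i_*$ over a field) plays the role of the paper's rank-of-transpose step. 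Your closing remark about identifying $i_*(\Hg_{d-k}(P^\c))$ as the annihilator of $i_*(\Hg_k(P))$ under the intersection pairing is indeed the ``stronger version'' the paper alludes to.

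The one place your write-up stops short of a proof is the step you yourself flag as the crux: that the inclusion $M\setminus P\hookrightarrow P^\c$ induces an isomorphism on homology (compatibly with the inclusions into $\T^d$). You sketch a dual-cell/regular-neighborhood retraction and correctly identify simplicity of the permutahedral tiling as the reason this works where the cubical model fails, but you do not carry the retraction out. The paper devotes a separate lemma (Lemma~\ref{lem:perm_equivalence}) to exactly this point and proves it differently: cover $P^\c$ by open neighborhoods of its sites, which is a good cover since sites are convex permutahedra; restrict the same cover to $P_n^d\setminus P$ and check that each nonempty intersection, though it may be missing lower-dimensional faces, is still star-shaped, hence contractible; finally use the one-to-one correspondence between $(d-k)$-faces of the tiling and $k$-simplices of the nerve to see that the two covers have identical nerves, so both spaces are homotopy equivalent to the same nerve. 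This is where the genericity you point to actually enters (and where a pairwise intersection of cubes meeting only in a vertex breaks the argument). So your architecture is sound, but to have a complete proof you must either finish your deformation-retraction argument in all dimensions or substitute the nerve-theorem argument for that step.
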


The lemma implies that whenever a giant $k$-cycle emerges in $P$, a giant $(d-k)$-giant cycle disappears in $P^\c$ and vice verse. This lemma can be viewed as a ``homological" version of the duality argument used for  percolation in $\Z^2$ (see, e.g.~\cite{grimmett_percolation._nodate}), that is the fact that a horizontal crossing in one grid prevents a vertical crossing in the dual grid.

Notice that by the definition of $P(n,p)$, we have that ${P^\c(n,p)}$ has the same distribution as $P(n,1-p)$. Recall from the introduction that we conjecture  the existence of an increasing sequence of sharp thresholds denoted $p_1^\mathrm{perc} < p_2^\mathrm{perc} < \cdots < p_{d-1}^\mathrm{perc}$ where the  the giant $k$-cycles appear. Therefore, together with Lemma \ref{lem:dual_betti}, we can show that if the sharp thresholds exist, then in the permutahedral complex case we have
\[
	p_k^{\mathrm{perc}} = 1 - p_{d-k}^{\mathrm{perc}}.
\]
In other words,  the appearance of the $k$-cycles in this model is in symmetry with the appearance of the $(d-k)$-cycles (as can be seen in the simulations later).  Notice that if $d$ is even we have that $p_{d/2} = 1/2$.
This is a generalization of the phenomenon known for the 2-dimensional hexagonal lattice, where the (classic) percolation threshold is exactly $1/2$ (see e.g.~\cite{grimmett_percolation._nodate}).
The symmetry between $P^\c(n,p)$ and $P(n,1-p)$ also implies a symmetry for the expected EC curve, so we have (see Appendix \ref{sec:duality}),
\[
	\bar\chi_P(p) =  (-1)^{d-1}\bar \chi_P(1-p).
\]
While the symmetry of the EC curve is not obvious in Equation \eqref{eqn:ec_perm}, it is quite apparent in the simulations we present later.

\subsection{Boolean model}\label{sec:boolean}

Let $X_1,X_2,\ldots,$ be a sequence of $\iid$ random variables uniformly distributed on $\T^d$. Let $N\sim\pois{n}$ be a Poisson random variable, independent of $\set{X_i}$. Then the process $\cP_n := \set{X_1,\ldots, X_N}$ is called a \emph{spatial Poisson process} on $\T^d$.
The simple Boolean model we consider here is merely the union of  balls
\[
B_r(\cP_n) := \bigcup_{p\in \cP_n}B_r(p),
\]
where $B_r(p)$ is a closed  ball of radius $r>0$ around $p$. In this model, it is known \cite{penrose_random_2003} that
percolation occurs when $ \omega_d n r^d = \lambda$ (or $r = (\lambda/\omega_d n)^{1/d}$) for some fixed  value $\lambda>0$, where $\omega_d$ is the volume of the unit ball in $\R^d$. Consequently, the percolation model we  study is
\eqb\label{eqn:bnr}
	B(n,\lambda) := B_{(\lambda/\omega_d n)^{1/d}}(\cP_n).
\eqe

In \cite{bobrowski_vanishing_2017} a formula for the expected EC of $B(n,\lambda)$ was proved.
The main idea was to consider the distance function to the point process $\cP_n$,  and evaluate the expected number of its critical points.
A mathematical framework known as \emph{Morse theory}, provides a formula similar to \eqref{eqn:ec_faces}, where the number of $k$-faces is replaced by the number of critical points of index $k$. The result in \cite{bobrowski_vanishing_2017} is the following formula,
\eqb\label{eqn:eec_bool}
\bar\chi_B(\lambda) := \mean{\chi(B(n,\lambda)} = n e^{-\lambda}\param{1+\sum_{k=1}^{d-1}A_{d,k} \lambda^k},
\eqe
where $A_{d,k}$ are defined in \cite{bobrowski_vanishing_2017} via some geometric integrals.
For $d=2$ we can show that this results in
\[
	\bar\chi_B(\lambda)  = ne^{-\lambda}(1-\lambda).
\]
For $d=3$, the calculations in \cite{bobrowski_topology_2014} show that %\footnote{we should verify this formula works}
\[
	\bar\chi_B(\lambda) = ne^{-\lambda}\param{1 - 3\lambda + \frac{3}{32}\pi^2\lambda^2}.
\]
For higher dimensions, the integral formulae in \cite{bobrowski_vanishing_2017} are   difficult to calculate explicitly. In our simulations for $d=4$, we use numerical methods to approximate the coefficients $A_{d,k}$.

\begin{figure}[ht]
\centering
\begin{subfigure}[t]{0.3\textwidth}
\centering
\includegraphics[height=\textwidth]{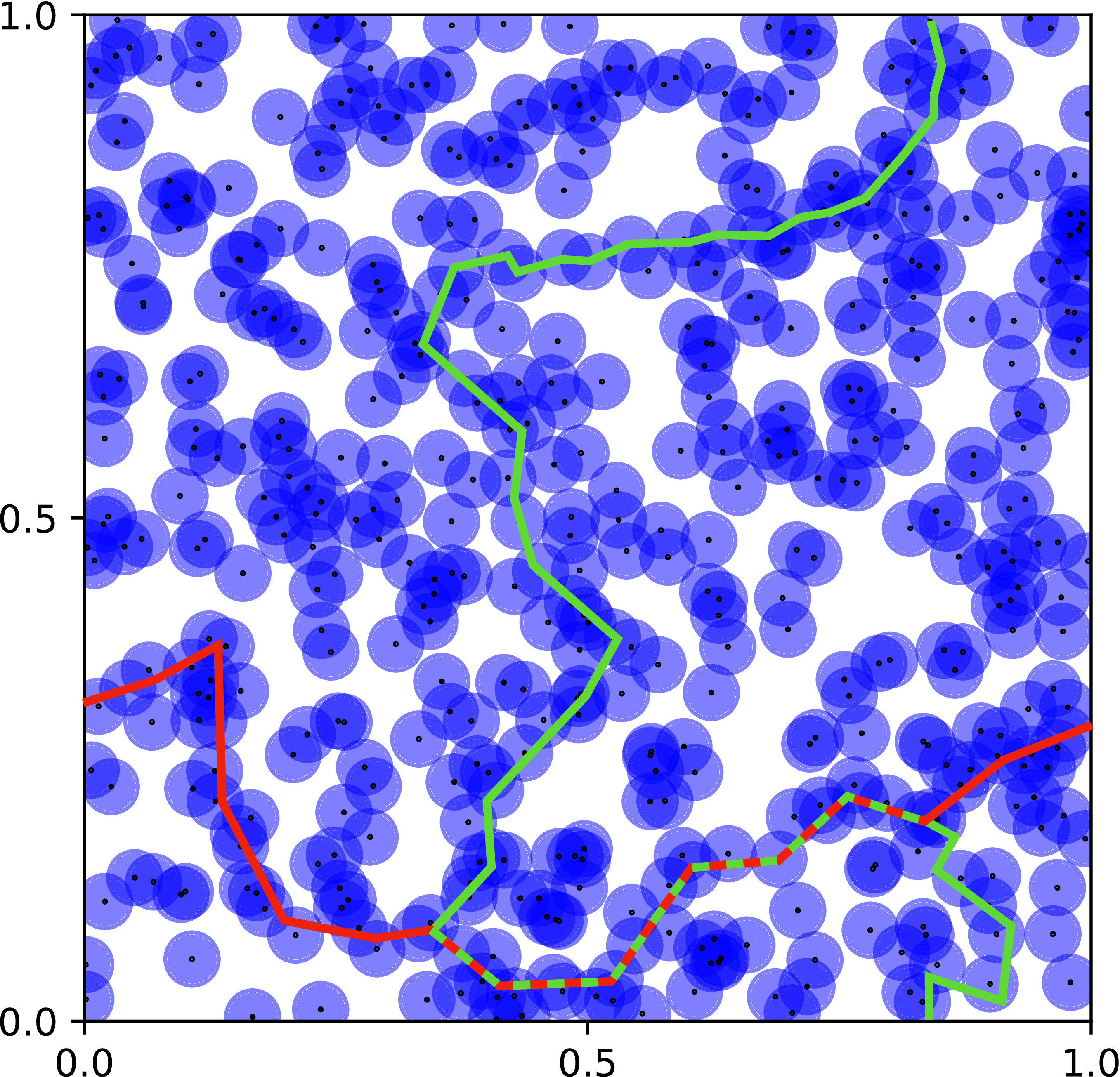}
\caption{}
\end{subfigure}
\hfill
\begin{subfigure}[t]{0.3\textwidth}
\centering
\includegraphics[height=\textwidth]{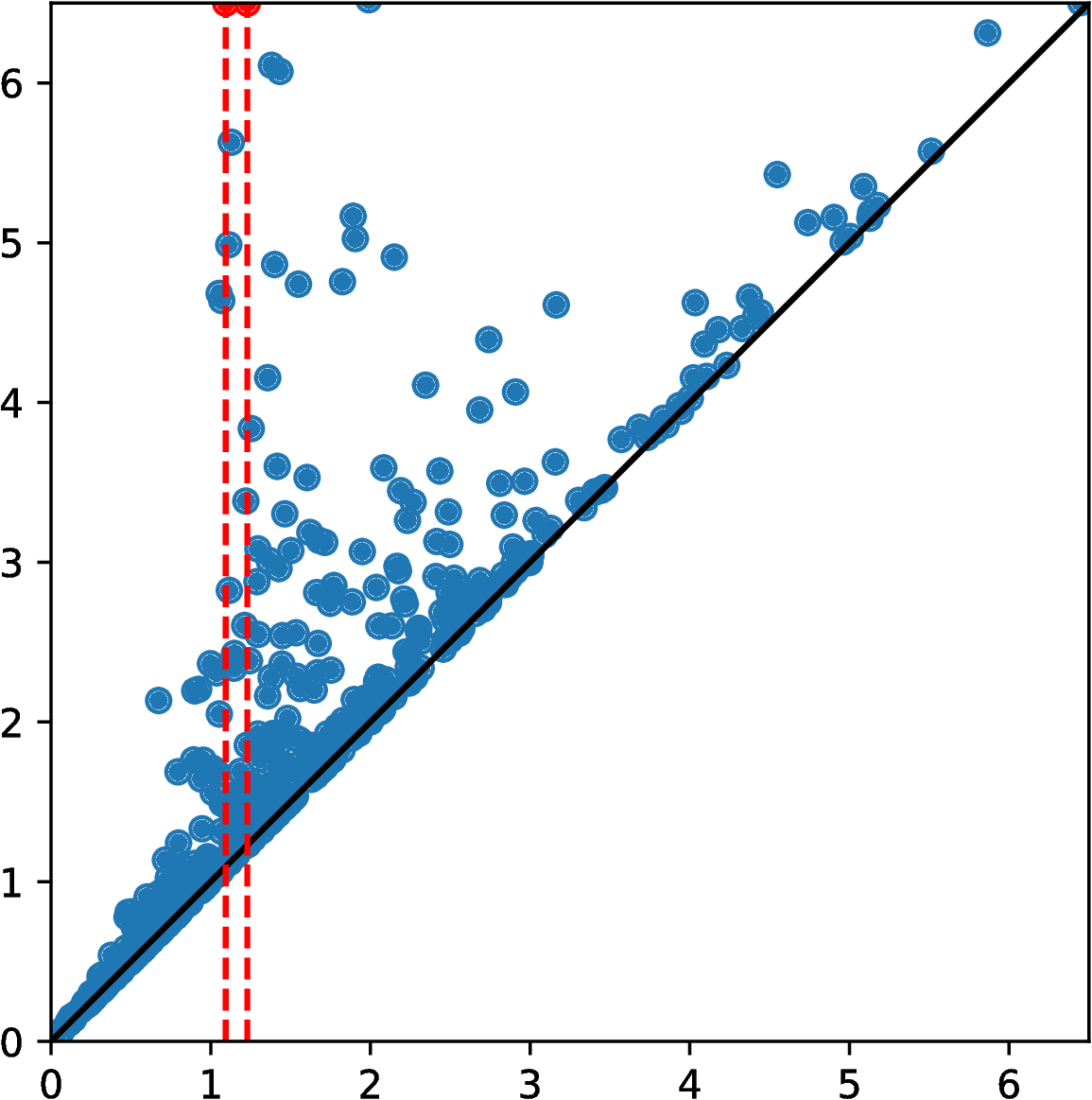}
\caption{}
\end{subfigure}
\hfill
\begin{subfigure}[t]{0.3\textwidth}
	\includegraphics[height=\textwidth]{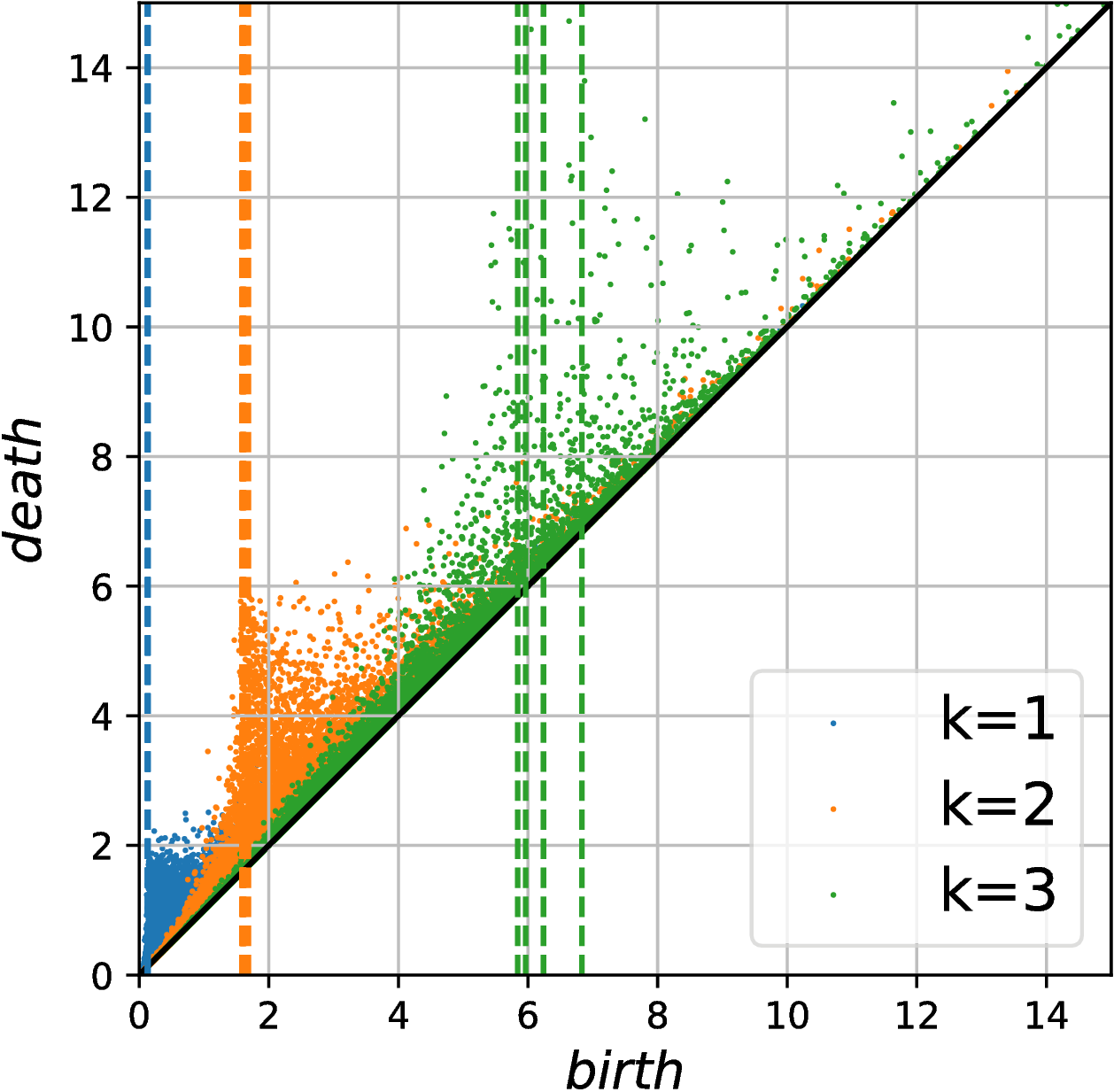}
\caption{}
\end{subfigure}
\caption{\label{fig:bool_giantcycles}
   (a) The emergence of the giant cycles for $B(n,\lambda)$ with $d=2$, $n=500$, and $\lambda=1.19$. (b) The corresponding persistence diagram for $\PH_1$. (d) The persistence diagrams  $\PH_k$ for the case $d=4$, $n=5000$, where $k=1,2,3$.}
\end{figure}
\subsection{Gaussian random fields}\label{sec:grf}

The last model we study is of a completely different nature than the previous ones. A real-valued Gaussian field on the torus, is a random function $f:\T^d\to \R$ such that for every $k$ and every collection of points $x_1,\ldots,x_k \in \T^d$, the random variables $f(x_1),\ldots, f(x_k)$ have a multivariate Gaussian (aka normal) distribution. It is well-known that the entire distribution of the random field $f$ is determined by its expectation function $\mu:\T^d\to\R$ and covariance function $C:\T^d\times\T^d\to \R$, defined as
\[
\splitb
	\mu(x) &:= \mean{f(x)},\\
	 C(x,y) &:= \cov{f(x)}{f(y)} = \mean{(f(x)-\mu(x))(f(y)-\mu(y))}
\splite
\]
for all $x,y\in \T^d$. In this paper we will consider $f$ with $\mu\equiv 0$ and with a covariance function
\eqb\label{eqn:grf_cov}
C(x,y) = \exp\left(-\frac{||x-y||^2}{\sigma^2}\right).
\eqe

For a given Gaussian field $f$, we will study the percolation phenomena as well as the EC for the sub-level sets, defined as%\footnote{or super?}
\[
G(\alpha) := \set{x\in \T^d : f(x)\le \alpha}.
\]
Notice that by definition $\set{G(\alpha)}_{\alpha = -\infty}^\infty$ is a filtration.
This, in particular, implies that we can define the notions of persistent homology and homological percolation for this model as well.

\begin{figure}[h]
\centering
\begin{subfigure}[t]{0.3\textwidth}
\centering
\includegraphics[height=\textwidth]{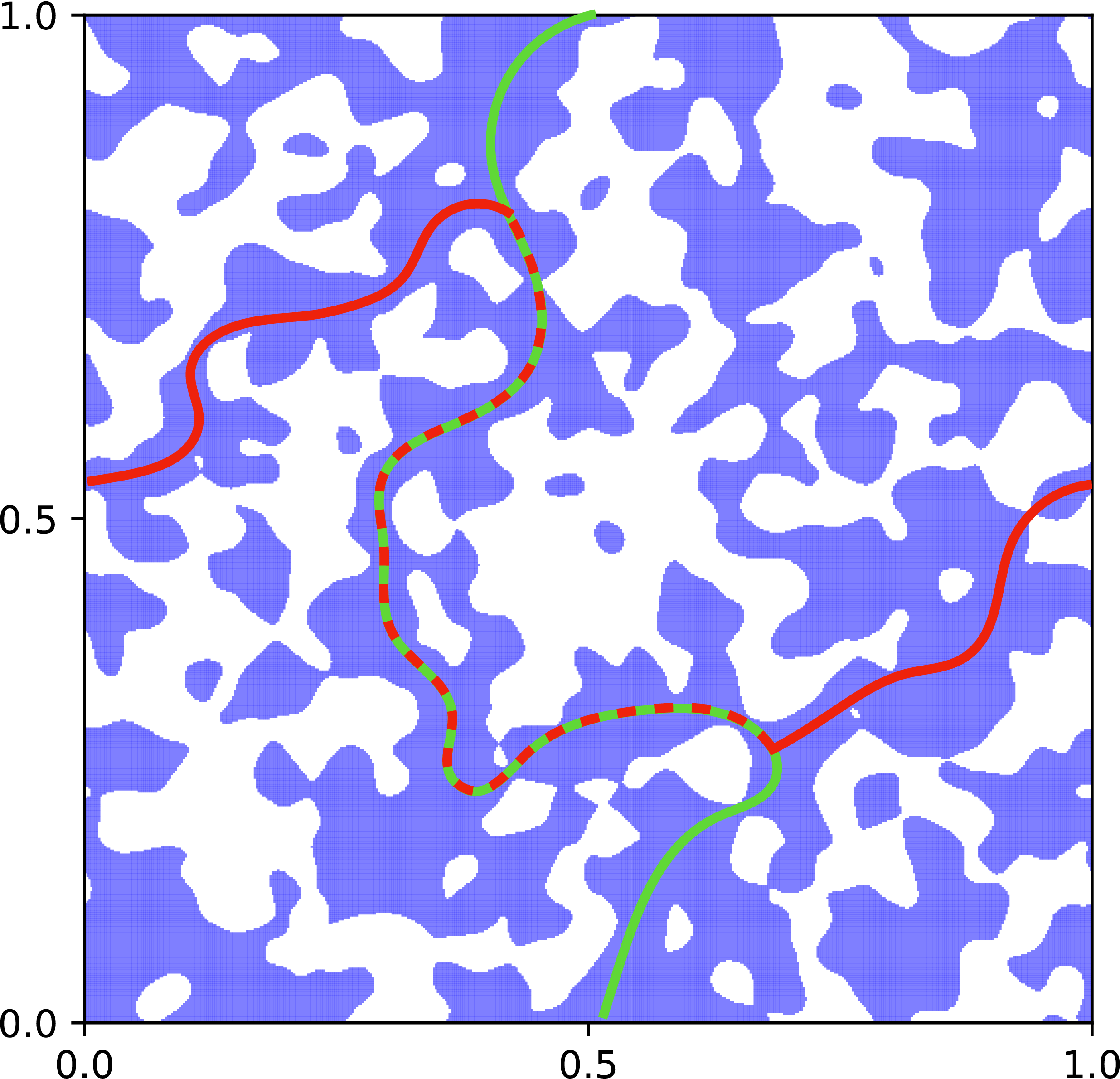}
\caption{}
\end{subfigure}
\hfill
\begin{subfigure}[t]{0.3\textwidth}
\centering
\includegraphics[height=\textwidth]{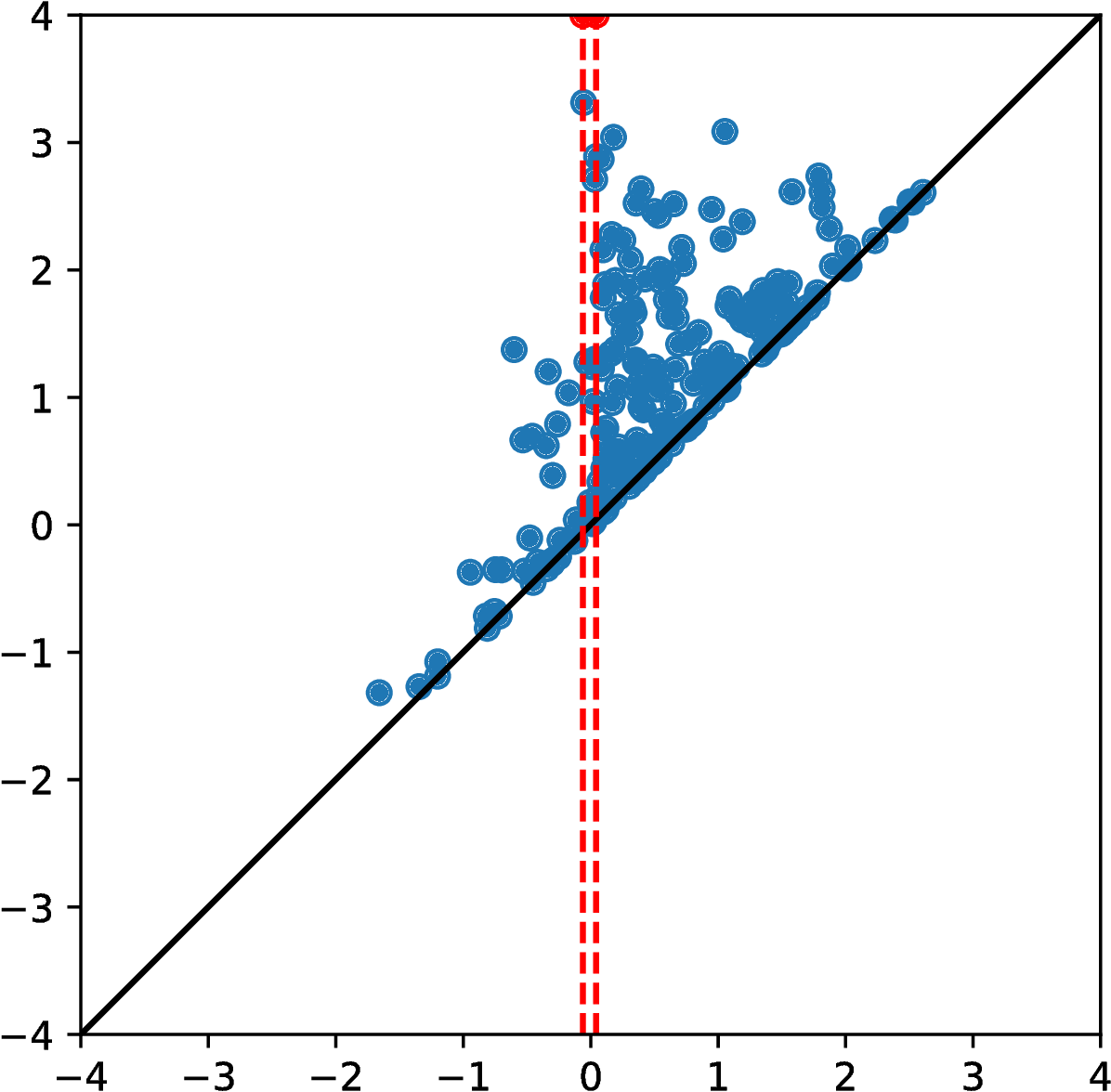}
\caption{}
\end{subfigure}
\hfill
\begin{subfigure}[t]{0.3\textwidth}
	\includegraphics[height=\textwidth]{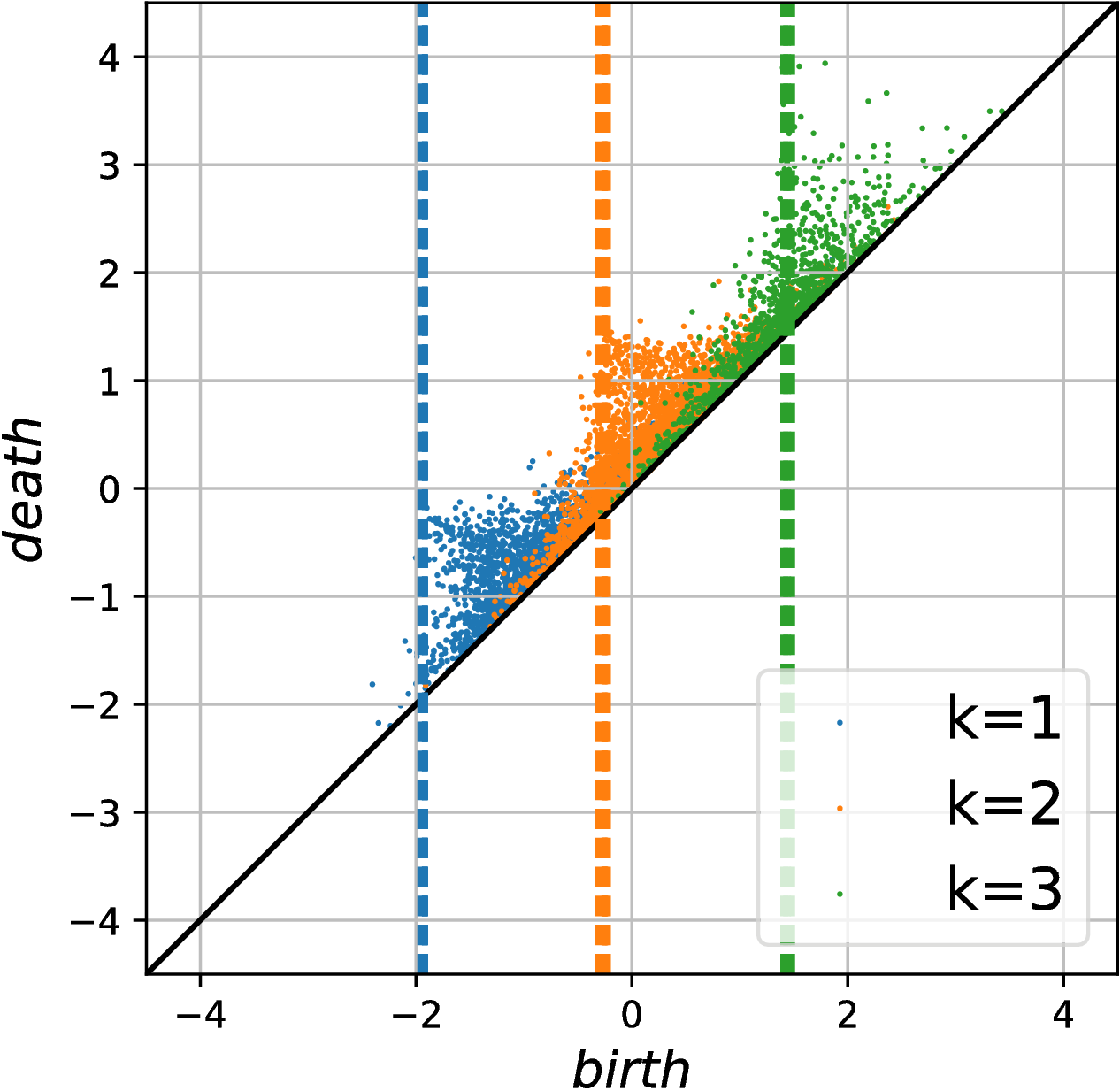}
\caption{}
\end{subfigure}
\caption{\label{fig:gaussian_giantcycles}
   (a) The emergence of the giant cycles for $G(\alpha)$ where $d=2$, and the grid size is $512\times 512$.  (b) The corresponding persistence diagram for $\PH_1$. (d) Persistence diagrams for the case $d=4$.}
\end{figure}

The evaluation of the expected EC for $G(\alpha)$ is the most complicated of all the models in this paper.
This was done in \cite{taylor_gaussian_2009} via a formula known as the \emph{Gaussian Kinematic Formula} (GKF).
The fundamental idea behind the GKF is to use Morse theory in a similar way to that of the Boolean model.
With some assumptions on the mean $\mu(\cdot)$ and the covariance function $C(\cdot,\cdot)$, one can show that the random function  $f$ is a Morse function  with probability $1$. Briefly, Morse functions are differentiable, and have at most a single critical point at each level $\alpha$. Evaluating the expected number of critical points then leads to the expected EC of  $G(\alpha)$, as in the Boolean model.

The  GKF as presented in \cite{taylor_gaussian_2009} covers general Gaussian fields defined on general Riemannian manifolds. In the special case that we examine in this paper, i.e.~a Gaussian field on $\T^d$ with the covariance function given in \eqref{eqn:grf_cov}, the GKF yields the following formula.
\eqb\label{eqn:ec_gauss}
\bar\chi_G(\alpha) := \mean{\chi(G(\alpha)} = \frac{2}{\omega_d} (2\pi)^{-\frac{d+1}{2}} \cH_{d-1}(-\alpha) e^{-\alpha^2/2},
\eqe

where $\omega_d$ is the volume of a unit ball in $\R^d$, and $\cH_n$ is the Hermite polynomial, given by
\eqb\label{eqn:hermite}
	\cH_n(x) = n! \sum_{j=0}^{\floor{n/2}} \frac{(-1)^j x^{n-2j}}{j!(n-2j)!2^j}.
\eqe

Finally, note that when simulating a Gaussian random field we have to take a discretized grid. The size of this grid will bed noted by $n$.

\section{Results}\label{sec:sim}
In this section we present simulation results for the four models described above, for dimensions $d=2,3,4$. The computations were done using the GUDHI~\cite{maria2014gudhi} library. The technical details about the simulations can be found in Appendix \ref{sec:sim_details}.

\begin{rem} Notice that $t_k^{\text{perc}}$ was defined in \eqref{eqn:crit_perc} as the (non-random) critical value for the \emph{probability} of homological percolation to switch form 0 to 1. For the models we are studying, these phase transition is always defined in the limit as $n\to \infty$. Since we do not have an access to the limit, we use the (random) birth-time of the first giant $k$-cycle, as an approximation to $t_k^{\text{perc}}$.
\end{rem}

In Figure \ref{fig:ec_curves} we show the theoretical expected Euler curves, together with the mean appearance of the giant cycles.
This figure demonstrates several noteworthy phenomena. Across all models and dimensions, the appearances of the giant cycles align with the zeros of the EC curve. In particular, for the permutahedral complex in the even dimensions (Figures~\ref{fig:perm2},\ref{fig:perm4}), the giant cycle in the middle dimension  aligns perfectly with the corresponding zero of the EC curve. This is a direct consequence of the symmetry between the complex and its complement discussed in Section \ref{sec:perm}.
Note that in the odd dimension (3) the EC curve for the permutahedral complex is still symmetric, but there is no middle dimension for the giant cycles. Finally, for dimensions 3 and 4, the giant cycles appear before the corresponding zero in the lower dimensions and after the zero in the higher dimensions. Smaller examples in dimension 5 also follow this pattern. This leads to the following conjecture.
\begin{con}\label{con:1}
For $d\geq 3$ and every $k < d/2$ we have $t_k^{\mathrm{perc}} \le t_k^{\mathrm{ec}}$, while for every $k>d/2$ we have $t_k^{\mathrm{perc}} \ge t_k^{\mathrm{ec}}$.
\end{con}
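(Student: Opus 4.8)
The statement to be argued is Conjecture~\ref{con:1}, and the honest answer is that the excerpt explicitly disclaims any analytic proof — the evidence is simulational. So a ``proof proposal'' here really means: sketch the route by which one might eventually turn the observed sign pattern of $\Delta_k = t_k^{\mathrm{perc}} - t_k^{\mathrm{ec}}$ into a theorem. The plan is to combine three ingredients already present in the paper: (i) the homological duality Lemma~\ref{lem:dual_betti}, which pins down the symmetric case; (ii) the structure of the expected EC curve $\bar\chi(t)$ as an alternating sum of Betti-number expectations; and (iii) a monotonicity/dominance heuristic relating which Betti number ``dominates'' $\bar\chi$ near each zero to the birth of the corresponding giant cycle.

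First I would isolate the symmetric reference point. For a model that is symmetric in $t$ about a center $t_*$ (permutahedral complex about $p=1/2$, zero-mean Gaussian field about $\alpha=0$) and $d$ even, Lemma~\ref{lem:dual_betti} gives $\cB_k(P) + \cB_{d-k}(P^{\c}) = \beta_k(\T^d)$, whence $t_k^{\mathrm{perc}} + t_{d-k}^{\mathrm{perc}} = 2t_*$ and, by the induced symmetry $\bar\chi(t) = (-1)^{d-1}\bar\chi(2t_*-t)$, also $t_k^{\mathrm{ec}} + t_{d-k}^{\mathrm{ec}} = 2t_*$. Subtracting gives $\Delta_k = -\Delta_{d-k}$, so it suffices to prove $\Delta_k \le 0$ for $k<d/2$ in these models, and $\Delta_{d/2}=0$ comes for free. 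The non-symmetric models (cubical, Boolean) would then need a separate perturbative argument showing that breaking the symmetry does not flip the sign of $\Delta_k$ for $k$ away from $d/2$; I would try to set this up as a continuity/interpolation statement in the space of models, though making ``the space of models'' precise is itself nontrivial.

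The heart of the matter is to explain why $t_k^{\mathrm{ec}}$ — a zero of an alternating sum — should sit on a definite side of $t_k^{\mathrm{perc}}$. Here is the mechanism I would try to formalize. Write $\bar\chi(t) = \sum_j (-1)^j \mean{\beta_j(X_t)}$. For small $t$ only $\beta_0$ is appreciable and $\bar\chi>0$; as $t$ grows the low-degree Betti numbers peak and decay in order $j=0,1,2,\ldots$, each $\mean{\beta_j}$ becoming the dominant term over a window, and the sign of $\bar\chi$ flips each time the ``active'' degree increments. So heuristically the $k$-th zero $t_k^{\mathrm{ec}}$ lies in the crossover window between $\beta_{k-1}$- and $\beta_k$-dominance. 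The claim $\Delta_k\le0$ for $k<d/2$ would then follow from an asymmetry between ``small'' and ``giant'' cycles: near $t_k^{\mathrm{perc}}$ the giant $k$-cycle is just being born, contributing $O(1)$ to $\beta_k$, but the bulk of $\beta_k$ is still carried by a large reservoir of small $k$-cycles that were created earlier; relative to the symmetric case these small cycles in low degree are created ``early'' (because in low degree cycles are cheap), pushing the EC zero to the right of the giant-cycle birth. For $k>d/2$ the dual statement (small $(d-k)$-cycles in the complement dying late) reverses the inequality. I would make this quantitative by estimating $\mean{\beta_k(X_t)}$ near $t_k^{\mathrm{perc}}$ via the critical-point / face counts used to derive \eqref{eqn:ec_cubical}, \eqref{eqn:eec_bool}, \eqref{eqn:ec_gauss}, together with a lower bound on the number of small $k$-cycles coming from local configurations (isolated ``bubbles'').

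The main obstacle, and the reason this remains a conjecture, is that none of the models has a usable handle on $\mean{\beta_k(X_t)}$ individually — only on the alternating sum. The entire argument above presupposes we can say ``$\beta_{k-1}$ dominates here, $\beta_k$ dominates there,'' which is exactly the kind of Betti-number control that is open in essentially all these models. A realistic intermediate target would be the Boolean model, where the work in progress \cite{bobrowski2019} may yield enough concentration for $\beta_k$ in the relevant regime; proving Conjecture~\ref{con:1} even there would require pinning the location of the EC zero relative to the percolation threshold to within the (nonzero!) gap $\Delta_k$, a precision that current expected-Euler-characteristic asymptotics do not obviously deliver. So I would expect partial results — the symmetric identity $\Delta_k=-\Delta_{d-k}$ and $\Delta_{d/2}=0$ are already essentially proved via Lemma~\ref{lem:dual_betti} — well before the sign statement for general $k$ and general model.
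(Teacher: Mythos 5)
Your proposal is essentially aligned with the paper: the statement is a conjecture for which the paper offers only simulation evidence plus heuristics, and your sketch reproduces exactly the paper's own supporting reasoning --- Lemma~\ref{lem:dual_betti} and the distributional symmetry $P^{\c}(n,p)\sim P(n,1-p)$ give $\Delta_k=-\Delta_{d-k}$ and $\Delta_{d/2}=0$ in the symmetric case, while the sign of $\Delta_k$ for $k\ne d/2$ is tied to the Betti-curve separation/dominance picture the paper discusses via $t_k^{\mathrm{betti}}$, which remains open precisely because individual $\mean{\beta_k(X_t)}$ are not controlled. Your honest identification of that missing ingredient, and of the Boolean model as the most plausible first target, matches the paper's own assessment, so there is nothing to correct here beyond noting that no proof exists on either side.
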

For the middle dimension ($d/2$ when $d$ is even), if the model is symmetric with respect to the parameter $t$ then the middle giant cycle should align perfectly with the corresponding zero in the EC curve, as in the permutahedral complex.
Notice that the GRF model we take is also symmetric (zero mean). We suspect that the tiny difference between $t_{d/2}^{\text{perc}}$ and $t_{d/2}^{\text{ec}}$ (Figures \ref{fig:gaussian2} and \ref{fig:gaussian4}) is due to the fact that we are using a discretized sample of a continuous field. This is supported by the statistics presented below.
For asymmetric models (e.g. the cubical and Boolean) it is not clear what should happen for $d/2$.
In Figures \ref{fig:uniform2},\ref{fig:uniform4},\ref{fig:boolean2}, and \ref{fig:boolean4}, we consistently see $t_{d/2}^{\text{ec}} < t_{d/2}^{\text{perc}}$, however, this behavior would have flipped if we were to take the complement objects.

Conjecture \ref{con:1} may have significant implications. For example, in Gaussian random fields, it is not known (for $d\ge 3$) whether the percolation thresholds for the super- and sub-level sets are separated, i.e.~whether there exists a regime where both the sub- and super-level sets have a giant component simultaneously. In \cite{bobrowski2019}, we show that $t_1^{\mathrm{perc}}$ coincides with the percolation threshold for the sub-level sets, while $t_{d-1}^{\mathrm{perc}}$ coincides with the threshold for the super-level sets.
Therefore, given that the expected EC curve is known in a closed form,  bounds on the relationship between the zeros of the expected EC curve and the giant cycles would imply separation.

Next, we investigate the error terms $\Delta_k = (t^{\text{perc}}_k - t^{\text{ec}}_k)$. In Figure \ref{fig:stats} we show some statistics for the examples in Figure \ref{fig:ec_curves}. For each of the models we repeated the simulation, and estimated the mean and variance for the first birth time of a giant $k$-cycle.
We observe that $\Delta_k$ converges rather quickly in the number of points (though as noted above not necessarily to zero), and the variance  becomes  small very quickly (note that the x-axis is logarithmic). Furthermore, with the exception of the Gaussian random field, we observe that $\Delta_k$ generally remains either always positive or always negative in line with Conjecture \ref{con:1}.  Note that the standard deviation in the Gaussian random field model remains roughly constant and centered around zero for all grid sizes. This is because the covariance function of the GRF is held constant, indicating that the correlation effects are the key driver of variability in the appearance of the giant cycles.

%%%%%
\begin{figure}[ht!]
	\centering
     \begin{subfigure}[b]{0.3\textwidth}
         \centering
				 \includegraphics[width=\textwidth]{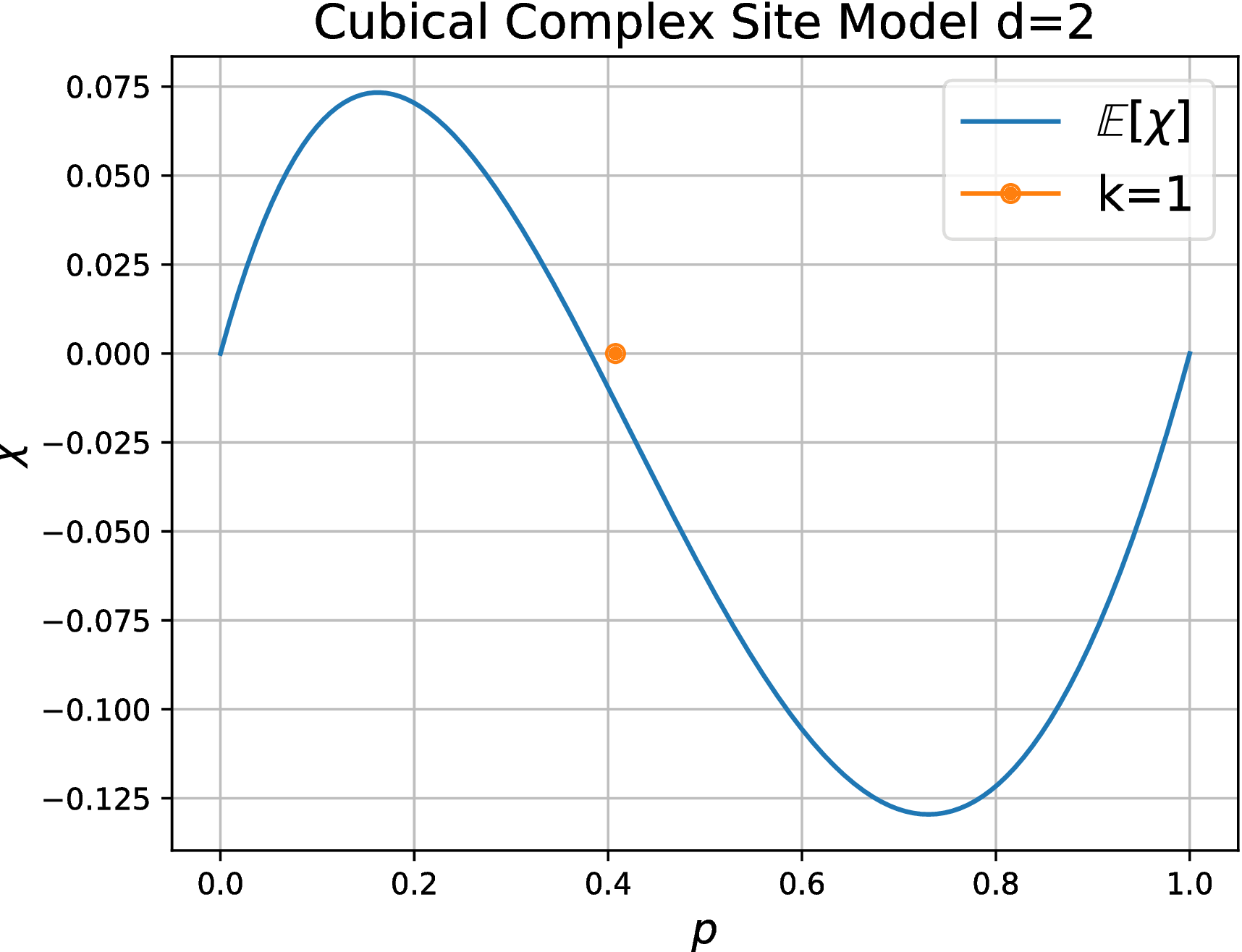}
				 \caption{}
         \label{fig:uniform2}
     \end{subfigure}
     \hfill
     \begin{subfigure}[b]{0.3\textwidth}
         \centering
				 \includegraphics[width=\textwidth]{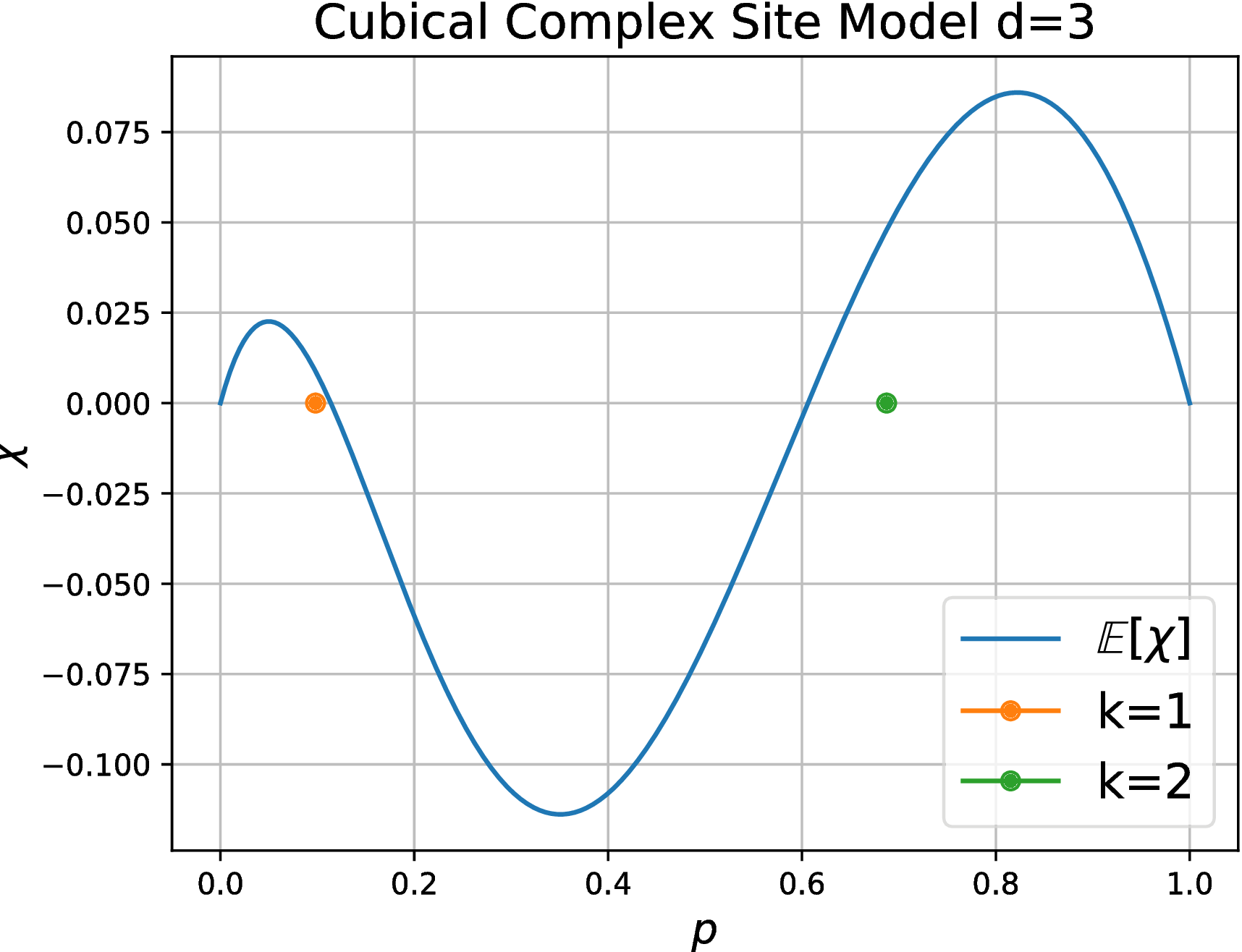}
				 \caption{}
         \label{fig:uniform3}
     \end{subfigure}
     \hfill
     \begin{subfigure}[b]{0.3\textwidth}
         \centering
				 \includegraphics[width=\textwidth]{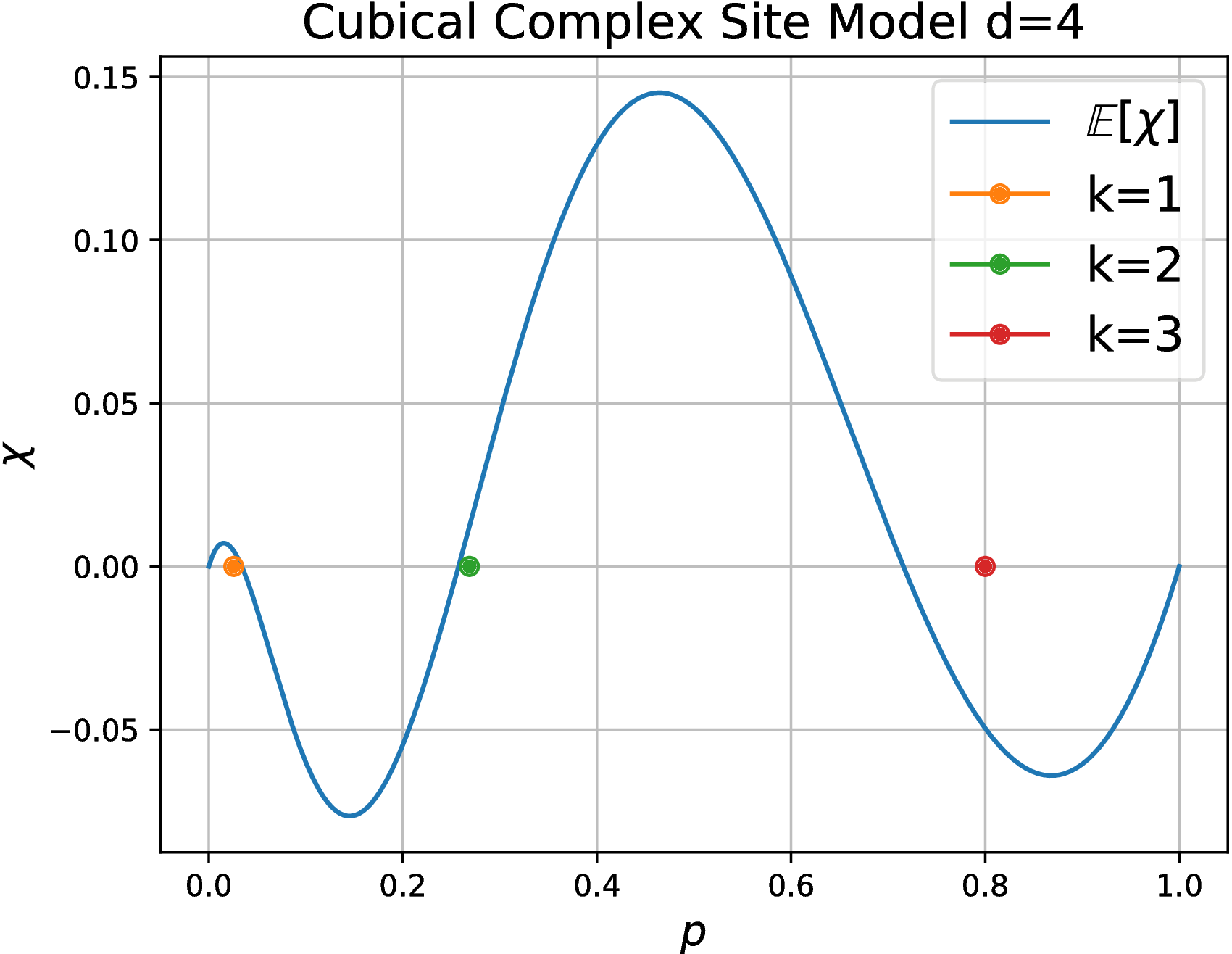}
				 \caption{}
         \label{fig:uniform4}
     \end{subfigure}
     \hfill
     \begin{subfigure}[b]{0.3\textwidth}
         \centering
				 \includegraphics[width=\textwidth]{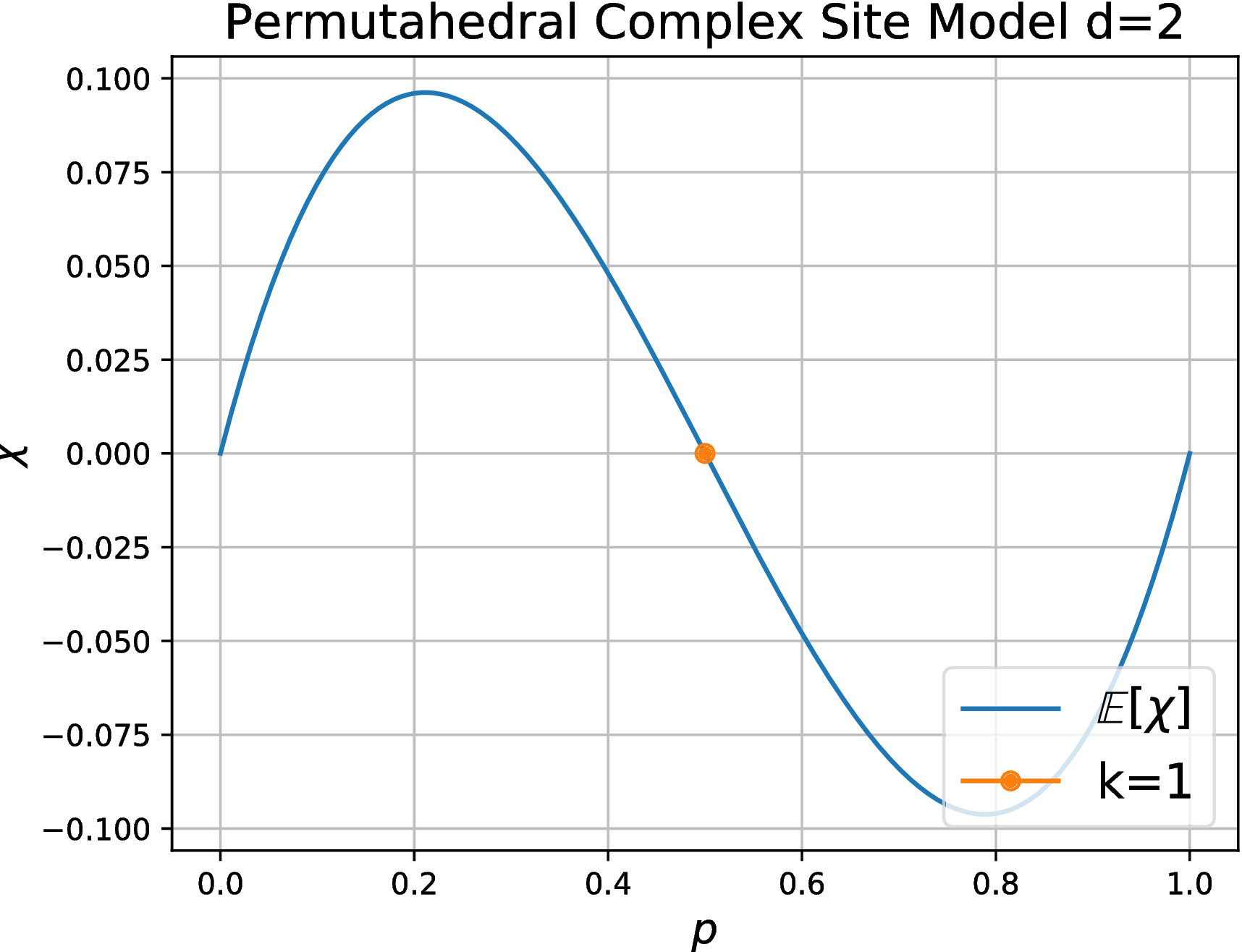}
				 \caption{}
         \label{fig:perm2}
     \end{subfigure}
     \hfill
     \begin{subfigure}[b]{0.3\textwidth}
         \centering
				 \includegraphics[width=\textwidth]{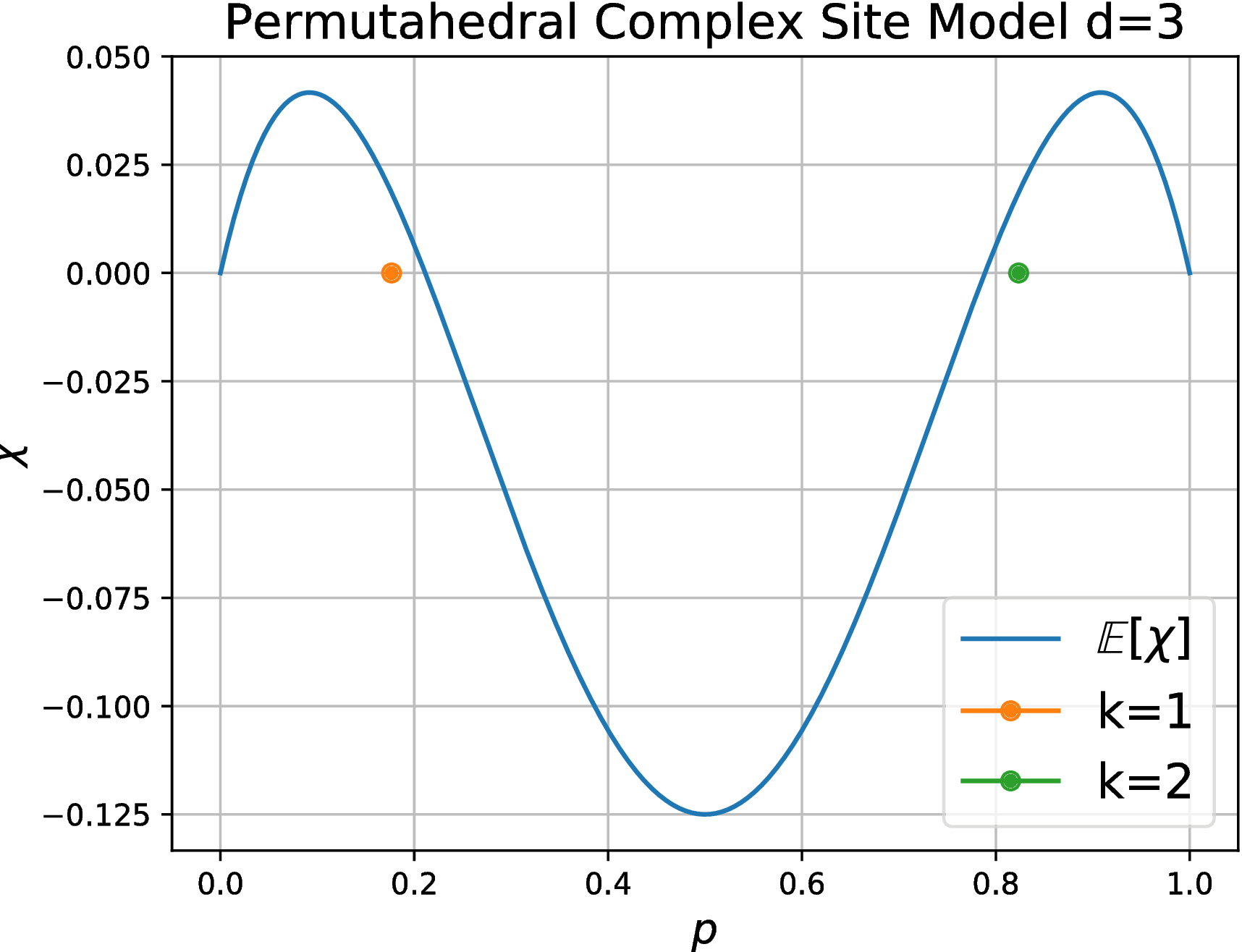}
         \caption{}
         \label{fig:perm3}
     \end{subfigure}
     \hfill
     \begin{subfigure}[b]{0.3\textwidth}
         \centering
				 \includegraphics[width=\textwidth]{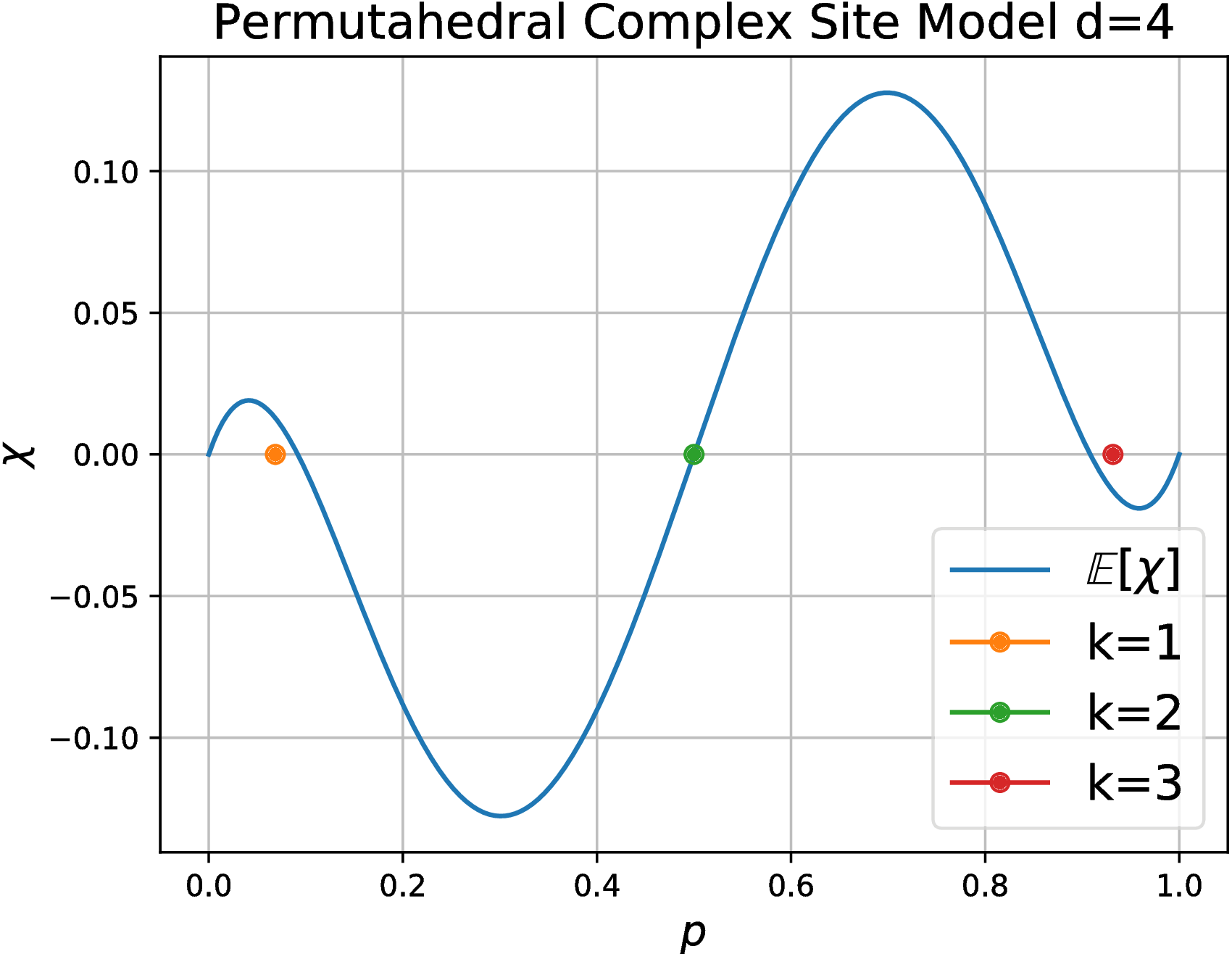}
				 \caption{}
         \label{fig:perm4}
     \end{subfigure}
     \begin{subfigure}[b]{0.3\textwidth}
         \centering
				 \includegraphics[width=\textwidth]{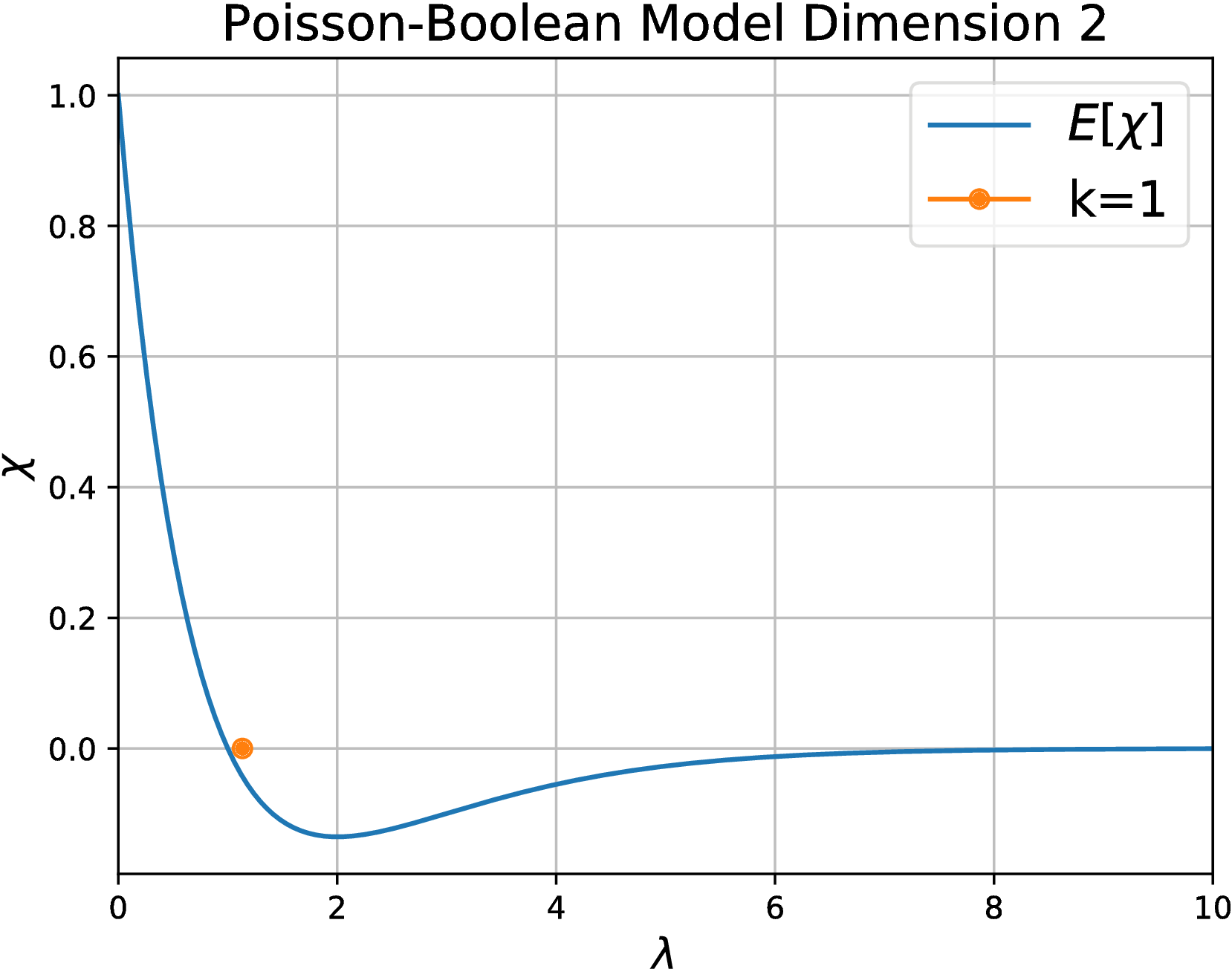}
				 \caption{}
         \label{fig:boolean2}
     \end{subfigure}
     \hfill
     \begin{subfigure}[b]{0.3\textwidth}
         \centering
				 \includegraphics[width=\textwidth]{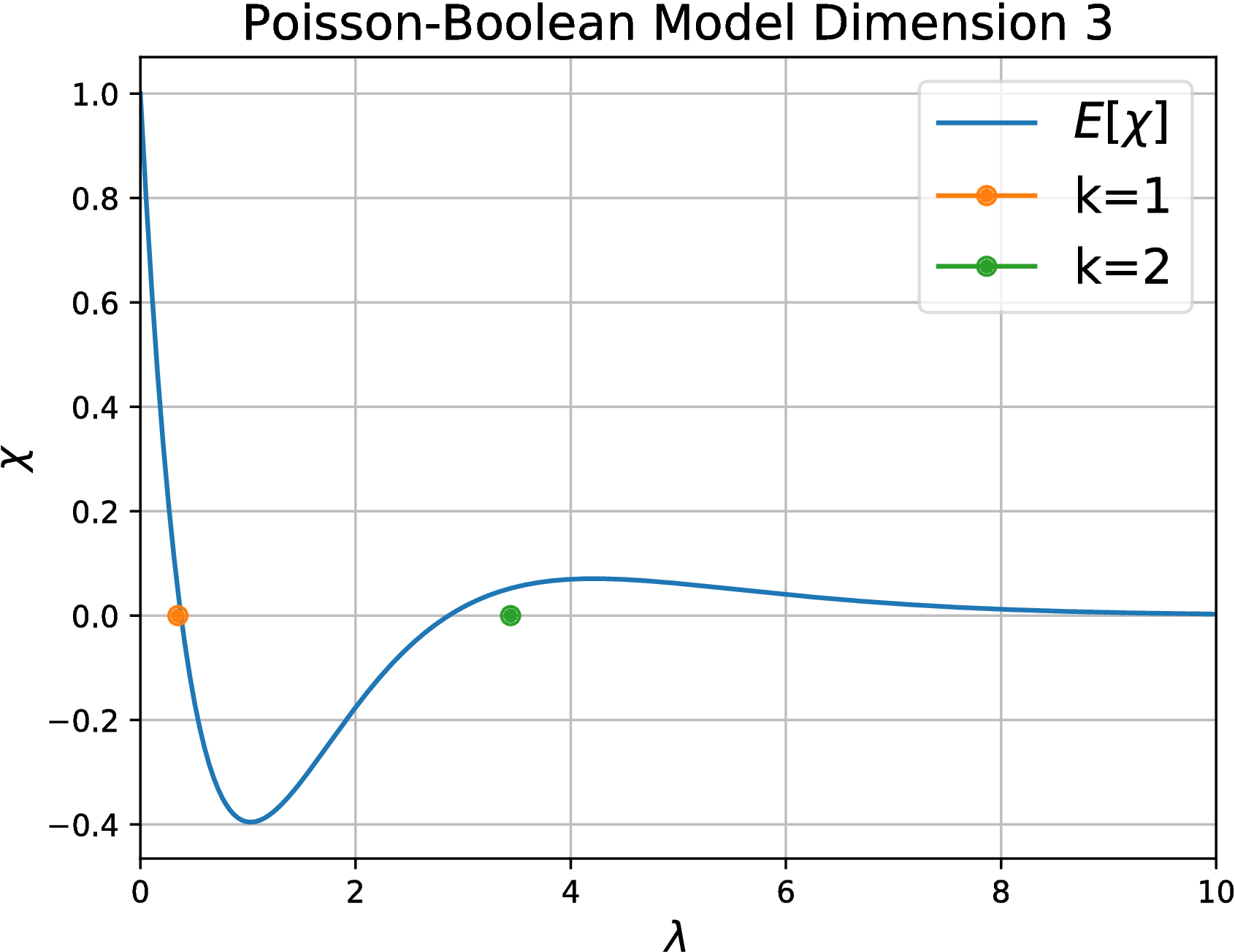}
				 \caption{}
         \label{fig:boolean3}
     \end{subfigure}
     \hfill
     \begin{subfigure}[b]{0.3\textwidth}
         \centering
				 \includegraphics[width=\textwidth]{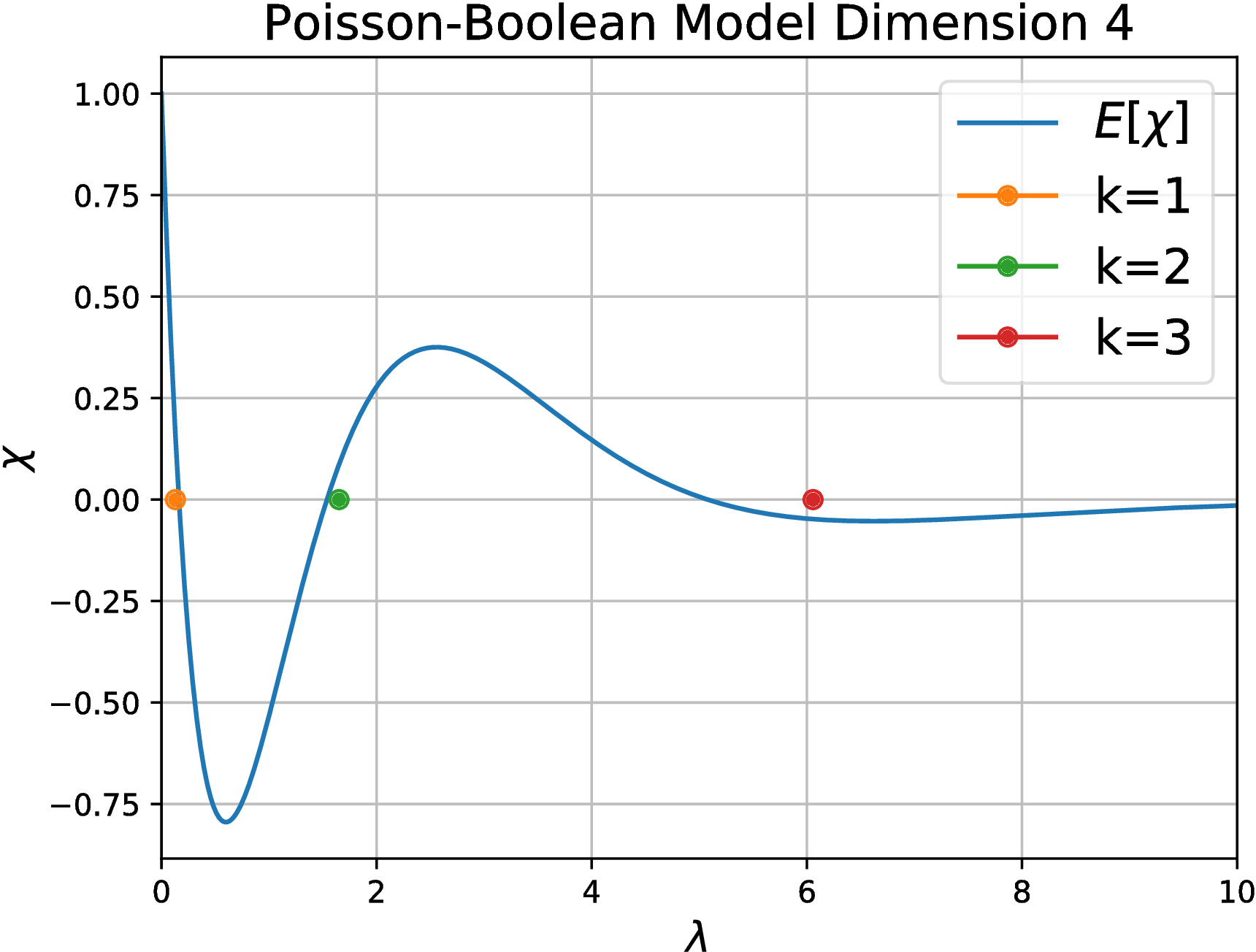}
         \caption{}
         \label{fig:boolean4}
     \end{subfigure}
     \begin{subfigure}[b]{0.3\textwidth}
         \centering
				 \includegraphics[width=\textwidth]{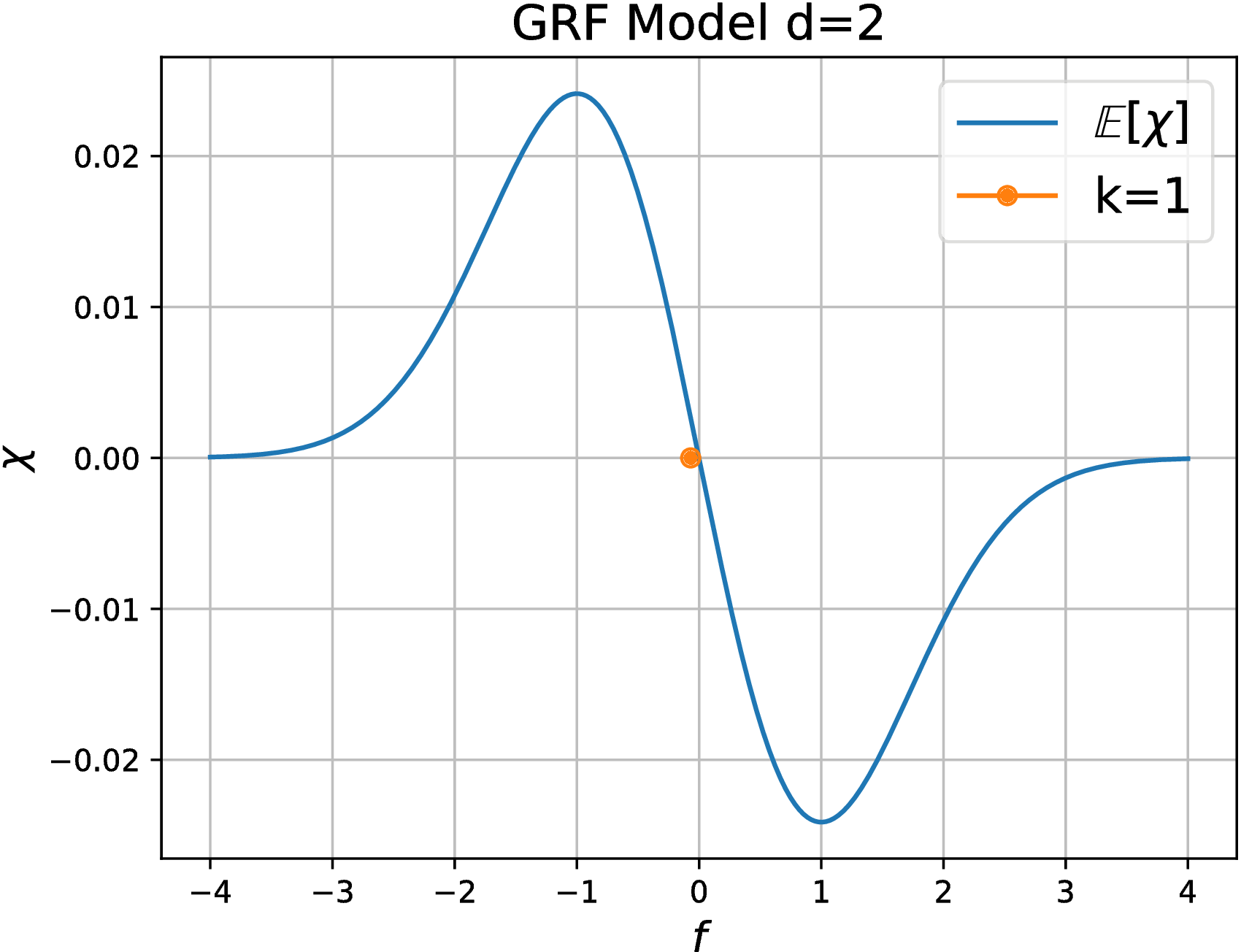}
         \caption{}
         \label{fig:gaussian2}
     \end{subfigure}
     \hfill
     \begin{subfigure}[b]{0.3\textwidth}
         \centering
				 \includegraphics[width=\textwidth]{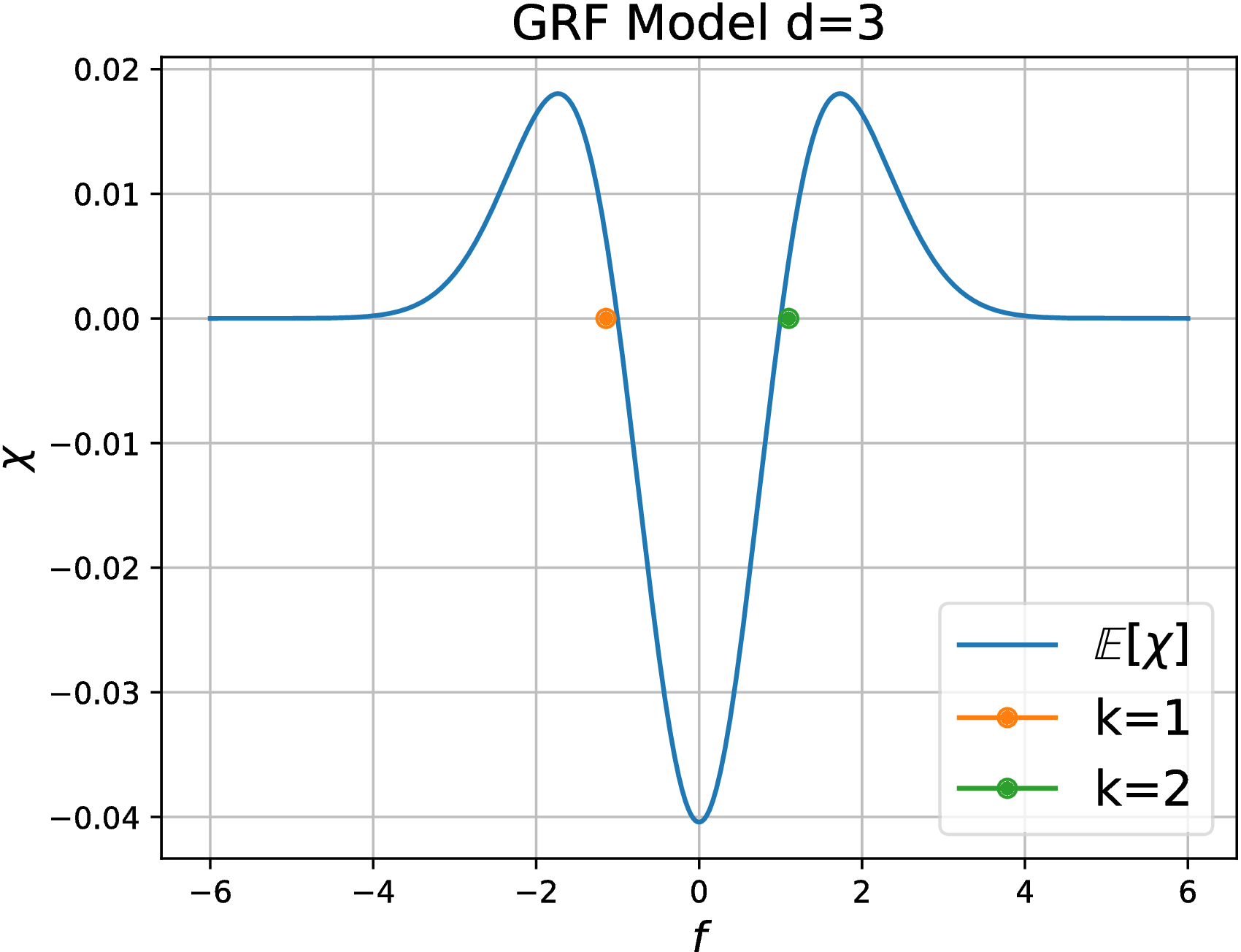}
         \caption{}
         \label{fig:gaussian3}
     \end{subfigure}
     \hfill
     \begin{subfigure}[b]{0.3\textwidth}
         \centering
				 \includegraphics[width=\textwidth]{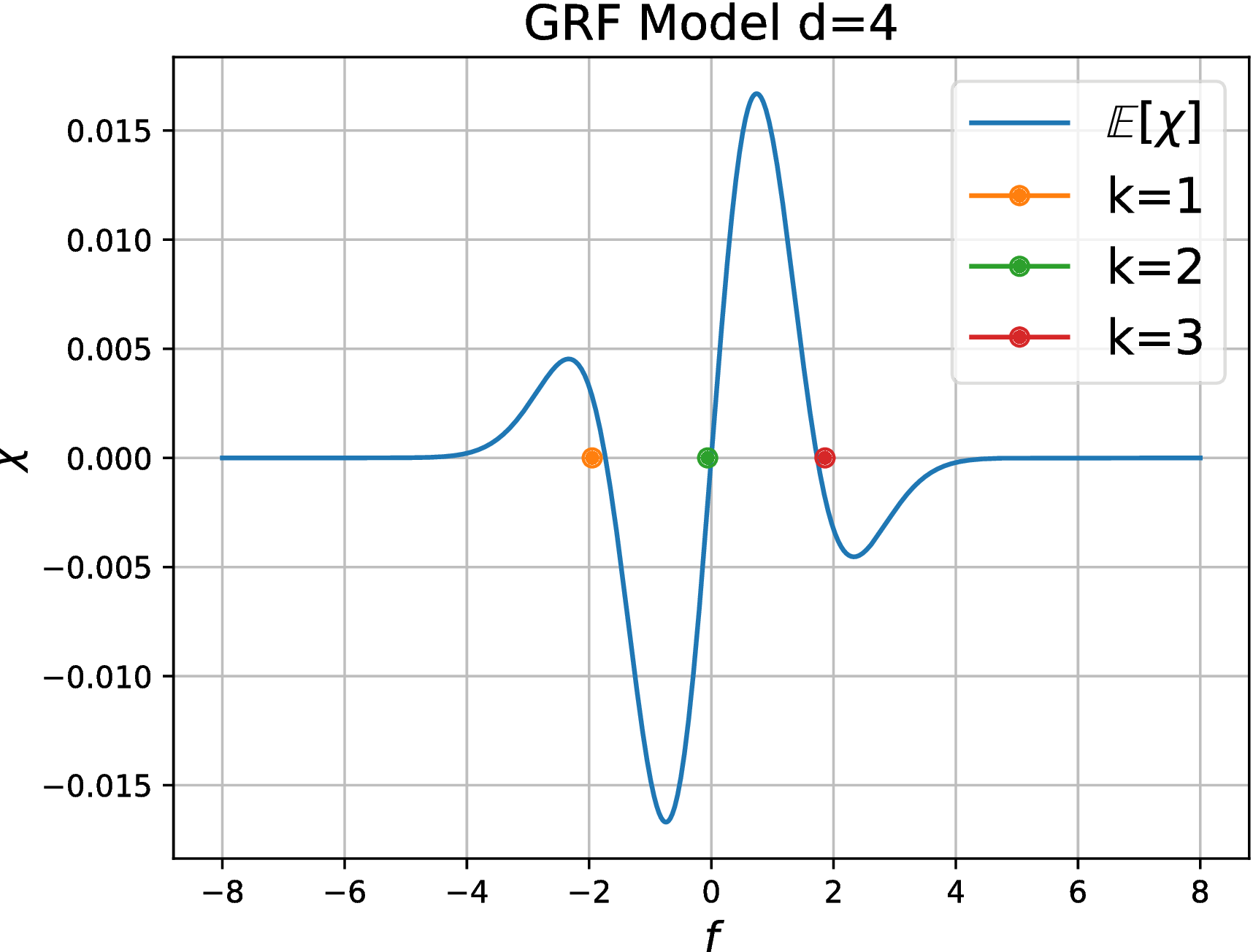}
         \caption{}
         \label{fig:gaussian4}
     \end{subfigure}
     \caption{\label{fig:ec_curves} The expected EC curve and the giant cycles.
     In each plot we draw the expected EC curve (solid line), along with the birth time of the first giant $k$-cycle for $k=1,\ldots, d-1$ (dots). (a)-(c) The random cubical complex. (d)-(f) The random permutahedral complex. (g)-(i) The Boolean model. (j)-(l) The Gaussian random field. We simulated all the models on the $d$-dimensional torus, for $d=2,3,4$ (from left to right). }
\end{figure}
%%%%%

%%%%%
\begin{figure}[H]
	\centering
     \begin{subfigure}[b]{0.3\textwidth}
         \centering
         \includegraphics[width=\textwidth]{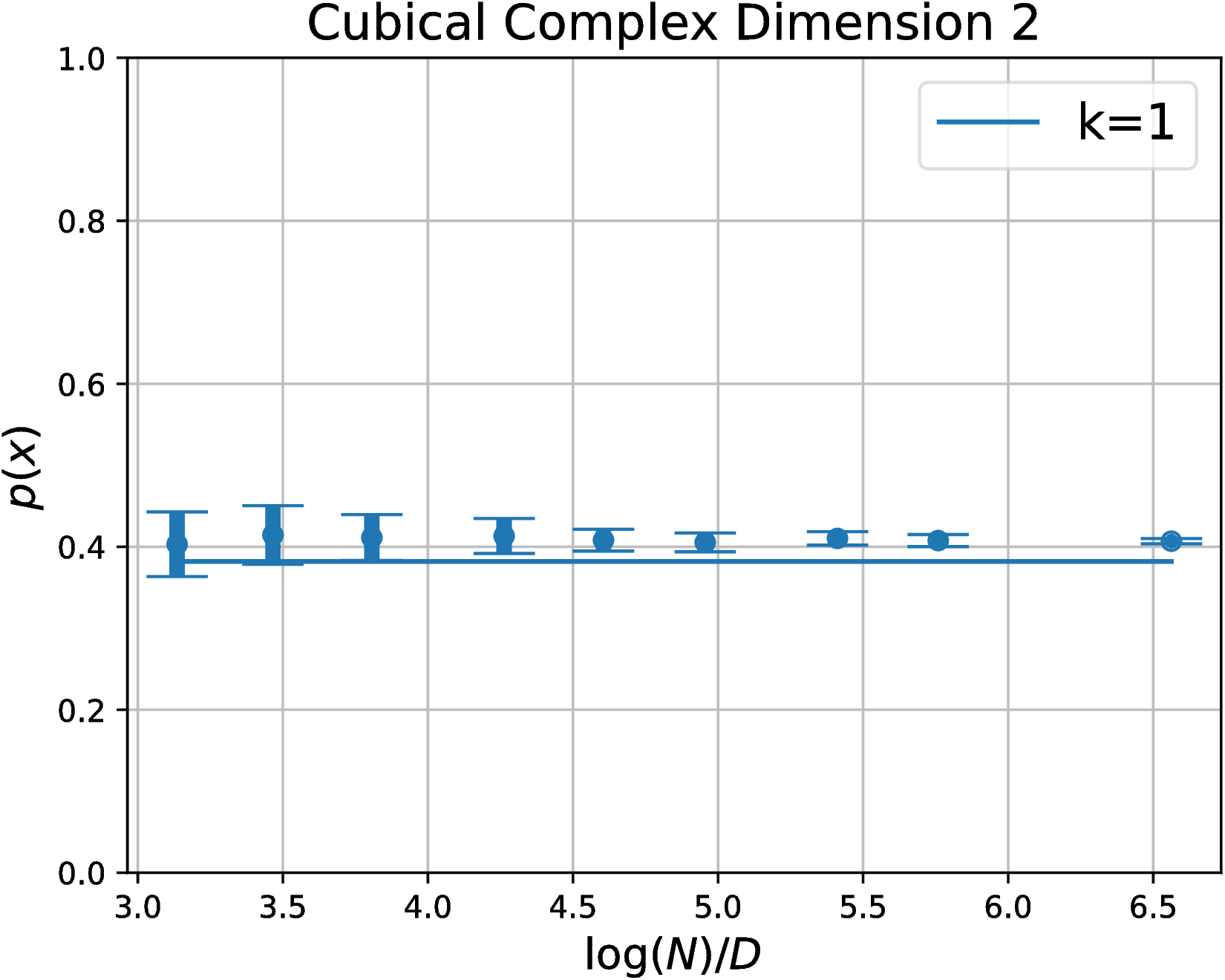}
         \caption{}
         \label{fig:uniformex2}
     \end{subfigure}
     \hfill
     \begin{subfigure}[b]{0.3\textwidth}
         \centering
				 \includegraphics[width=\textwidth]{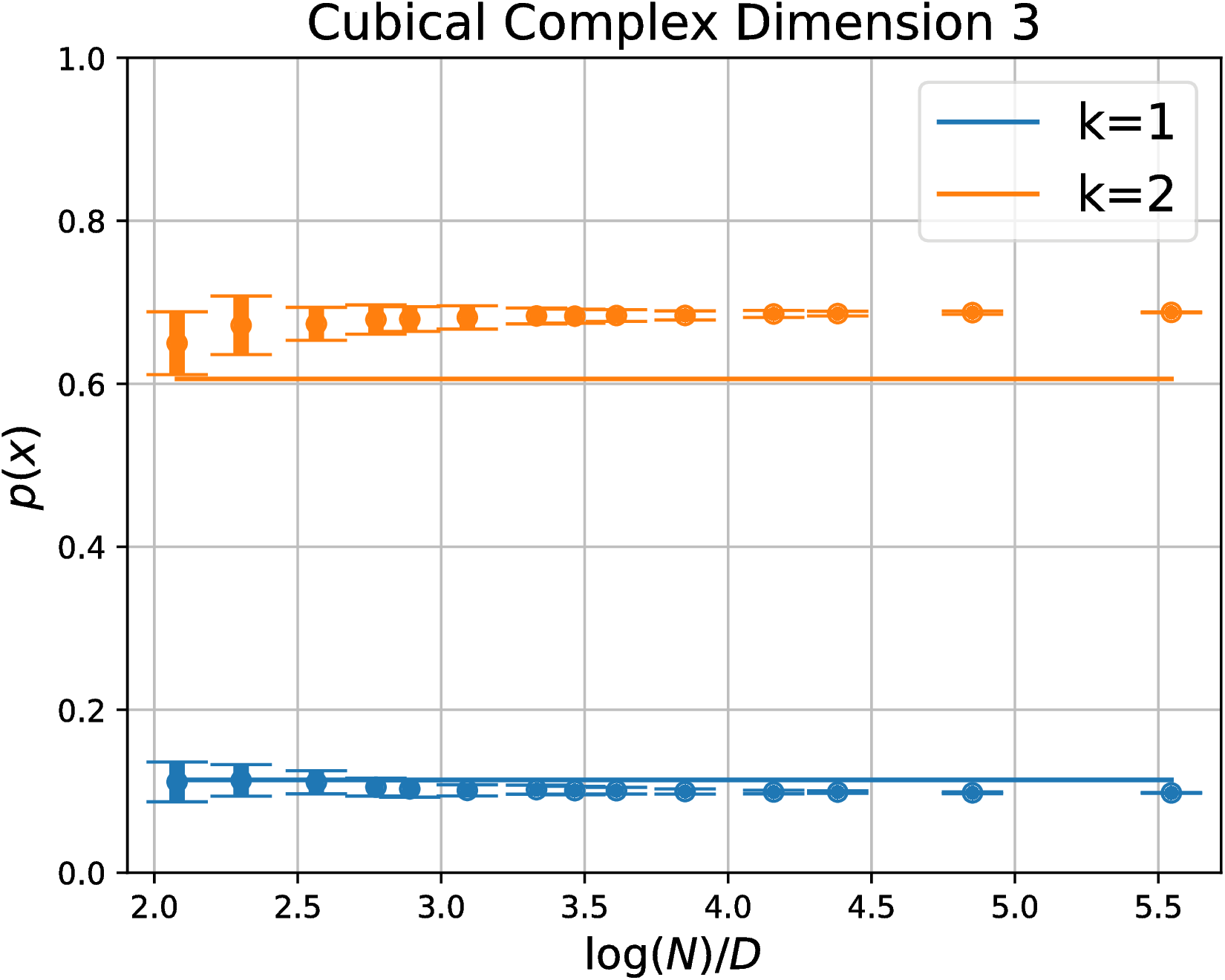}
         \caption{}
         \label{fig:uniformex3}
     \end{subfigure}
     \hfill
     \begin{subfigure}[b]{0.3\textwidth}
         \centering
				 \includegraphics[width=\textwidth]{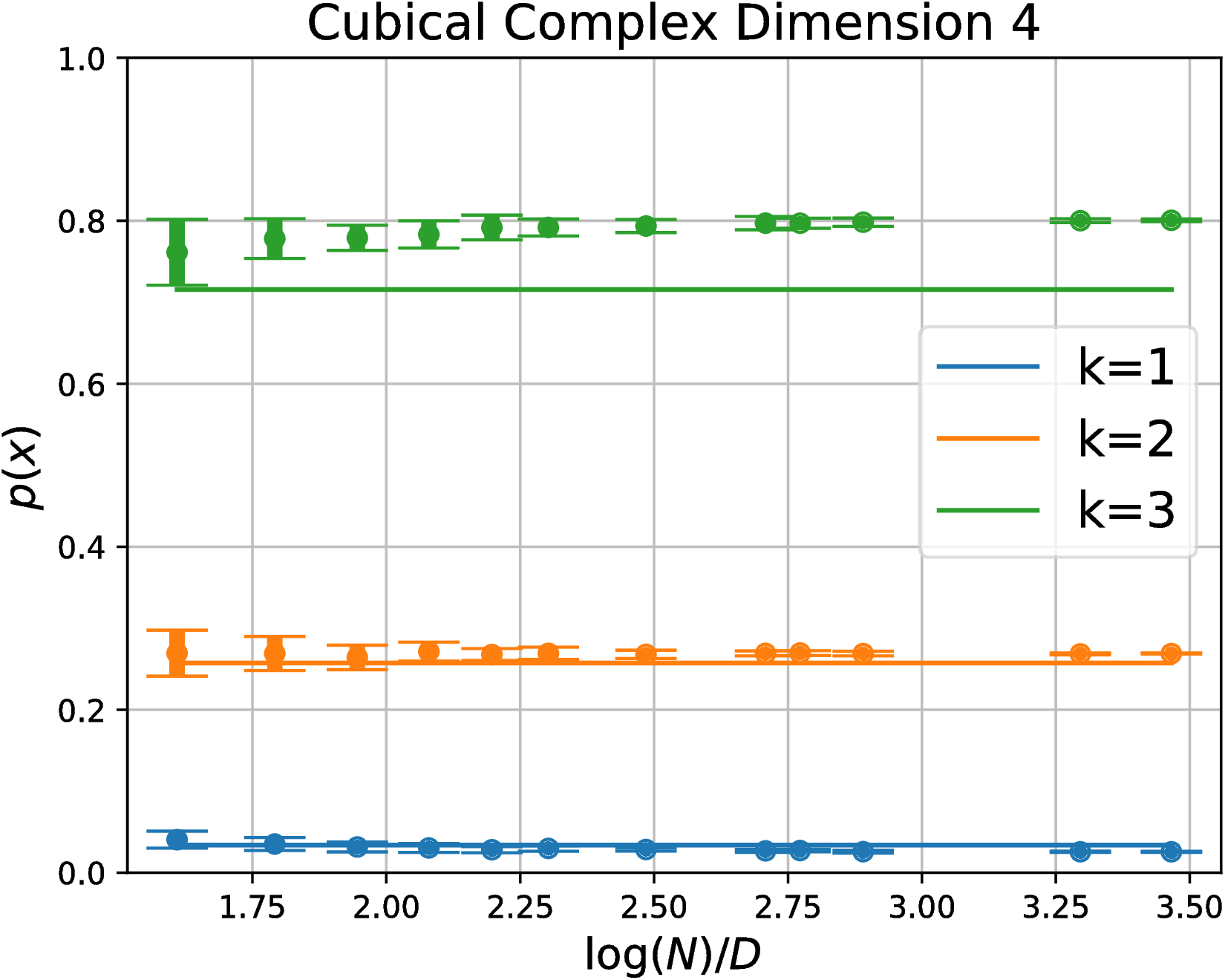}
         \caption{}
         \label{fig:uniformex4}
     \end{subfigure}

          \begin{subfigure}[b]{0.3\textwidth}
         \centering
         \includegraphics[width=\textwidth]{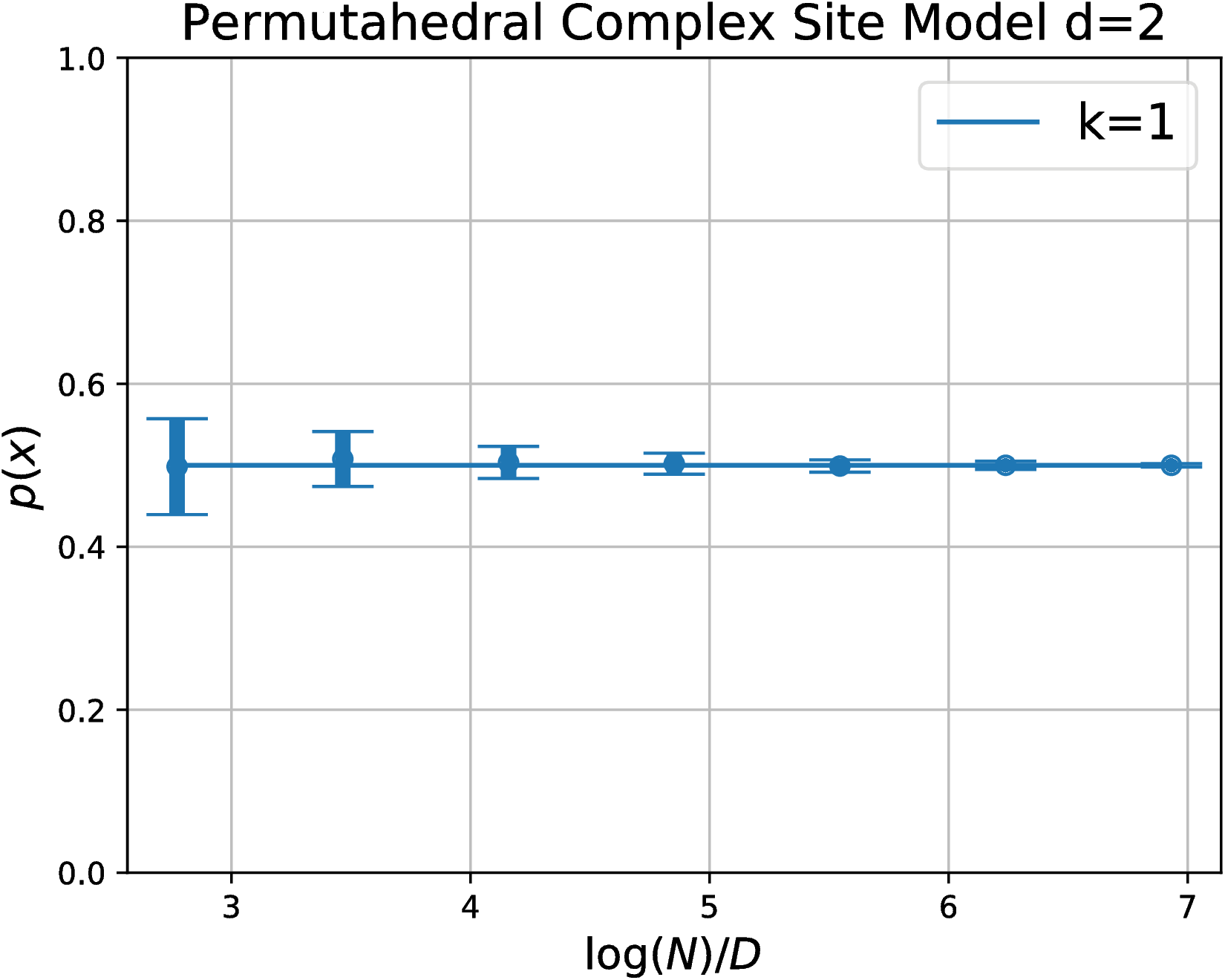}
         \caption{}
         \label{fig:permex2}
     \end{subfigure}
     \hfill
     \begin{subfigure}[b]{0.3\textwidth}
         \centering
				 \includegraphics[width=\textwidth]{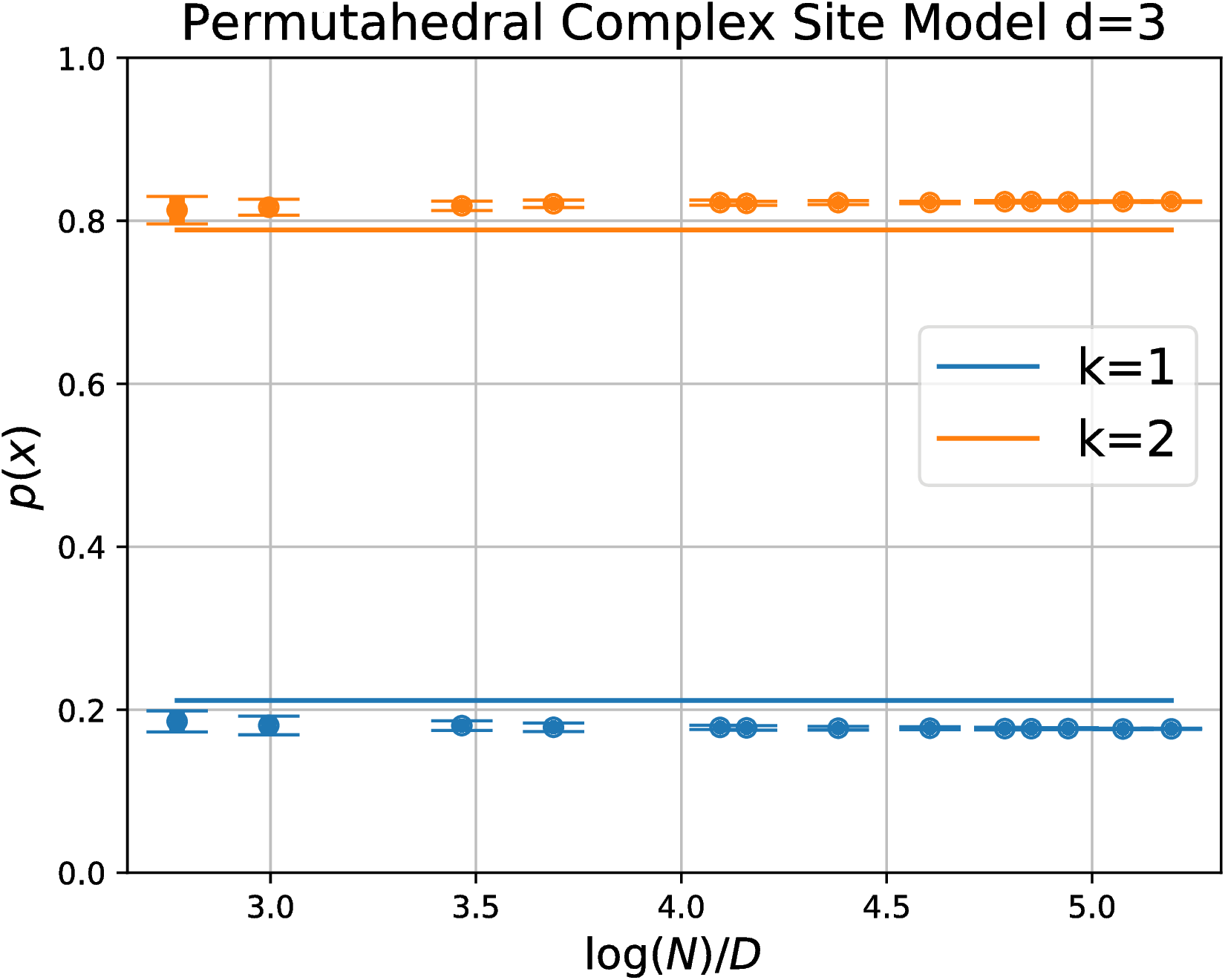}
         \caption{}
         \label{fig:permex3}
     \end{subfigure}
     \hfill
     \begin{subfigure}[b]{0.3\textwidth}
         \centering
				 \includegraphics[width=\textwidth]{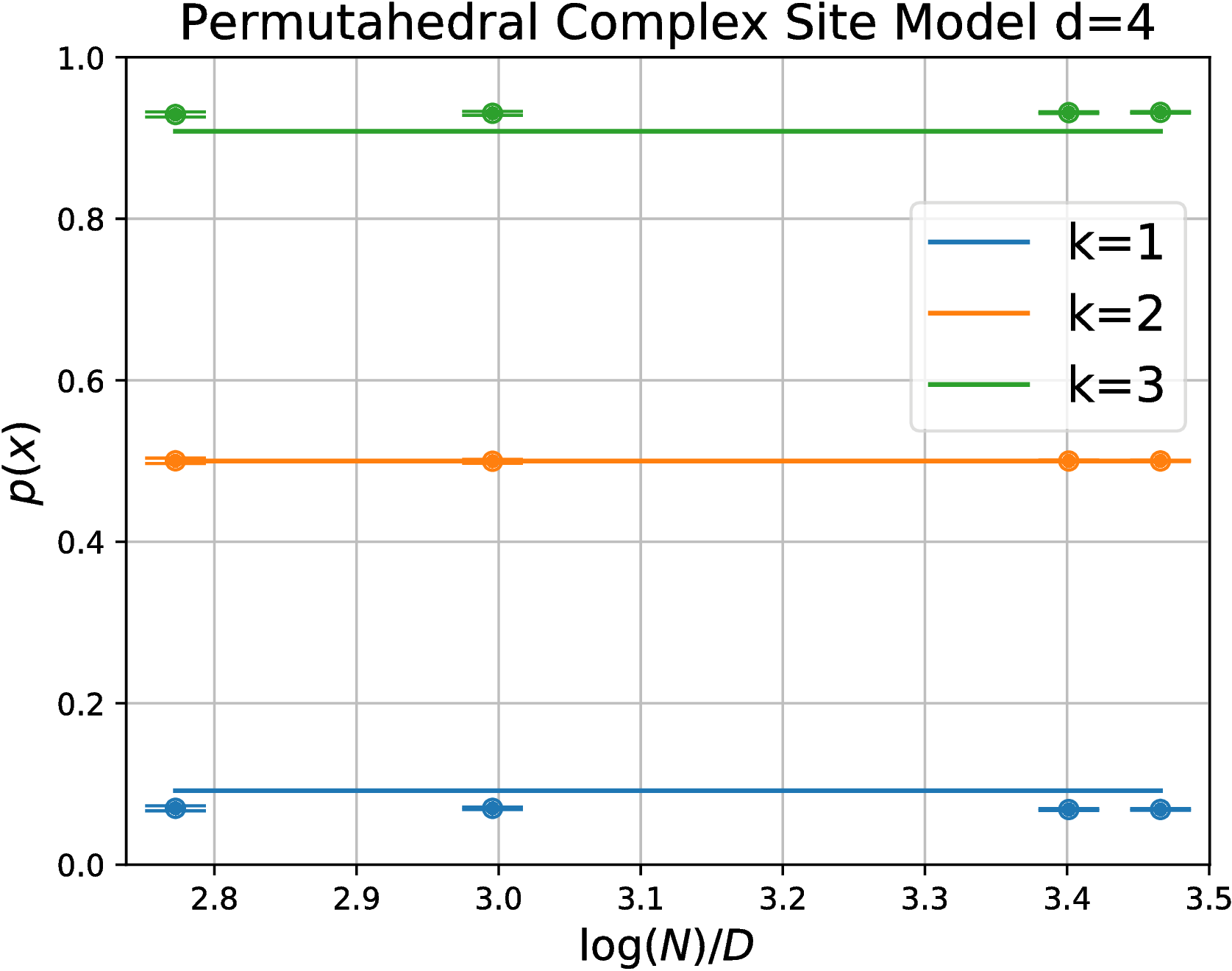}
         \caption{}
         \label{fig:permex4}
     \end{subfigure}

     \begin{subfigure}[b]{0.3\textwidth}
         \centering
         \includegraphics[width=\textwidth]{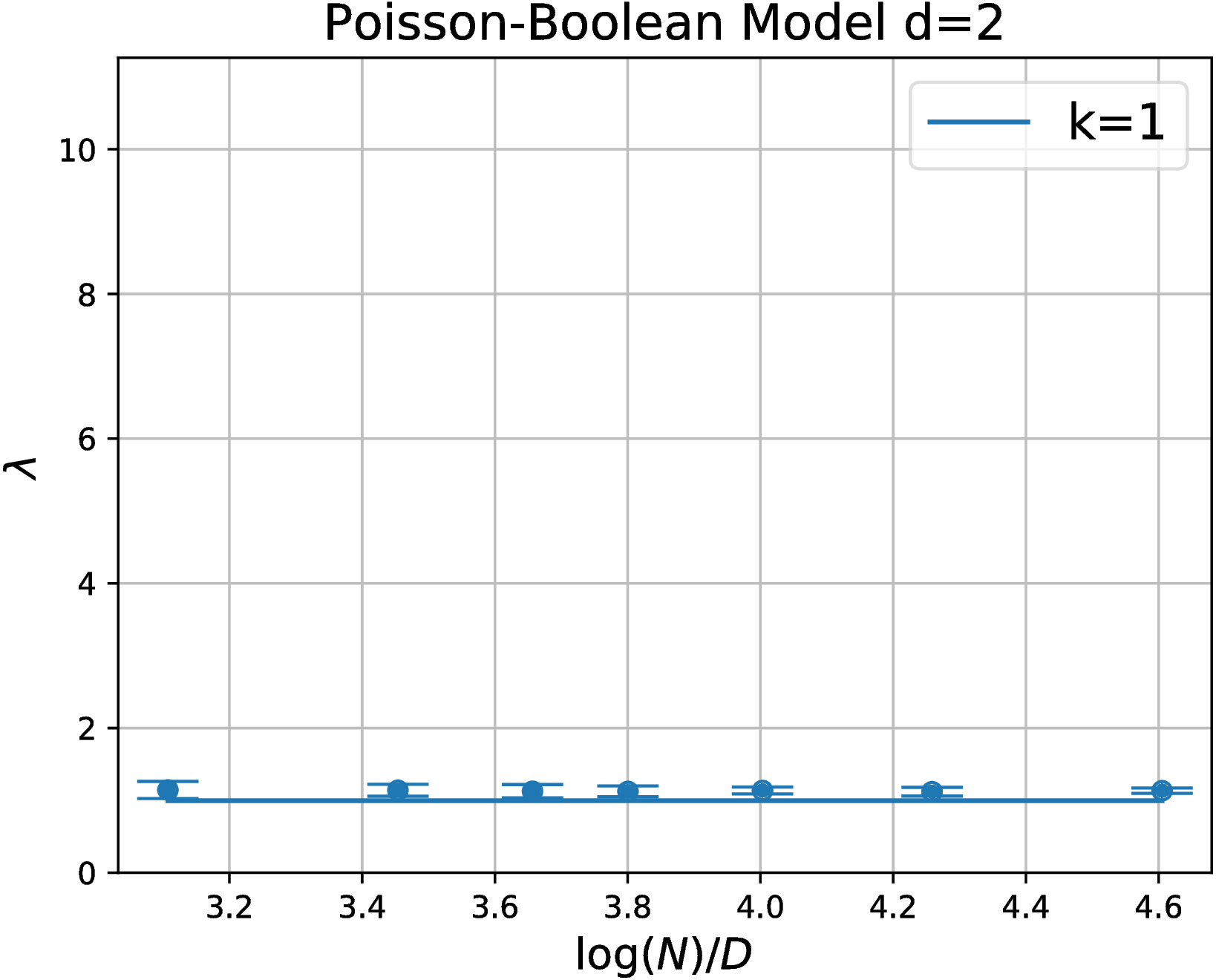}
         \caption{}
         \label{fig:boolex2}
     \end{subfigure}
     \hfill
     \begin{subfigure}[b]{0.3\textwidth}
         \centering
				 \includegraphics[width=\textwidth]{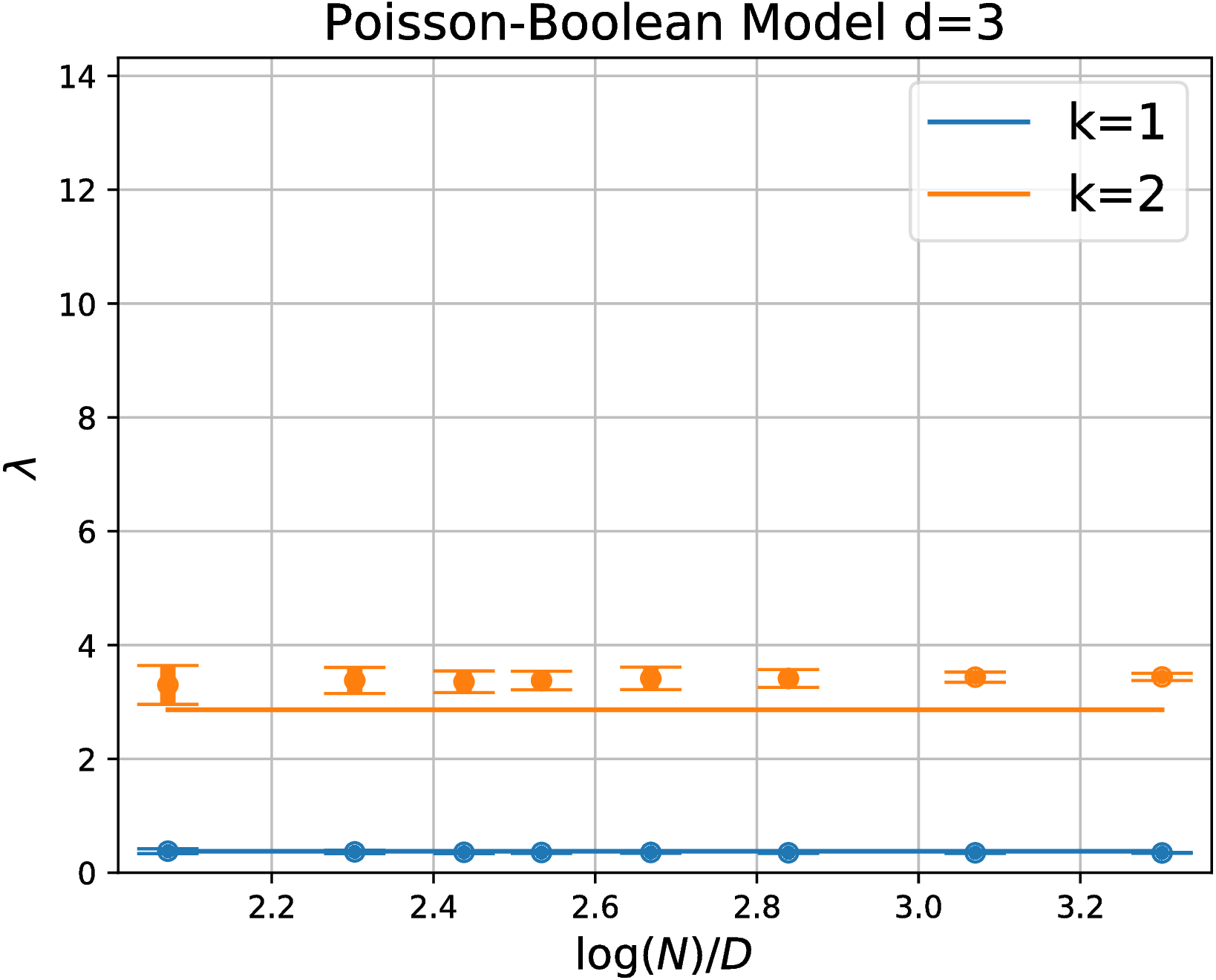}
         \caption{}
         \label{fig:boolex3}
     \end{subfigure}
     \hfill
     \begin{subfigure}[b]{0.3\textwidth}
         \centering
				 \includegraphics[width=\textwidth]{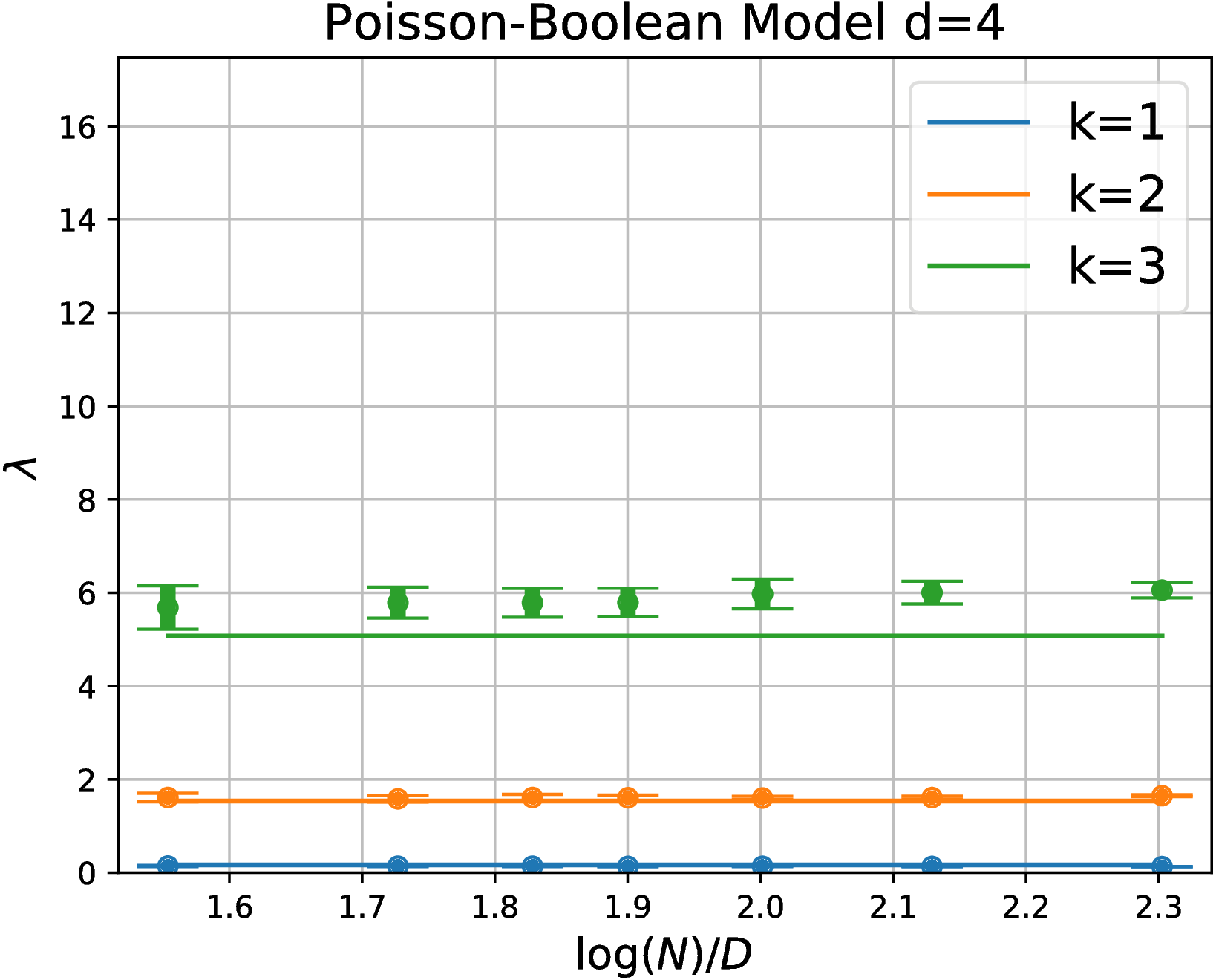}
         \caption{}
         \label{fig:boolex4}
     \end{subfigure}

     \begin{subfigure}[b]{0.3\textwidth}
         \centering
         \includegraphics[width=\textwidth]{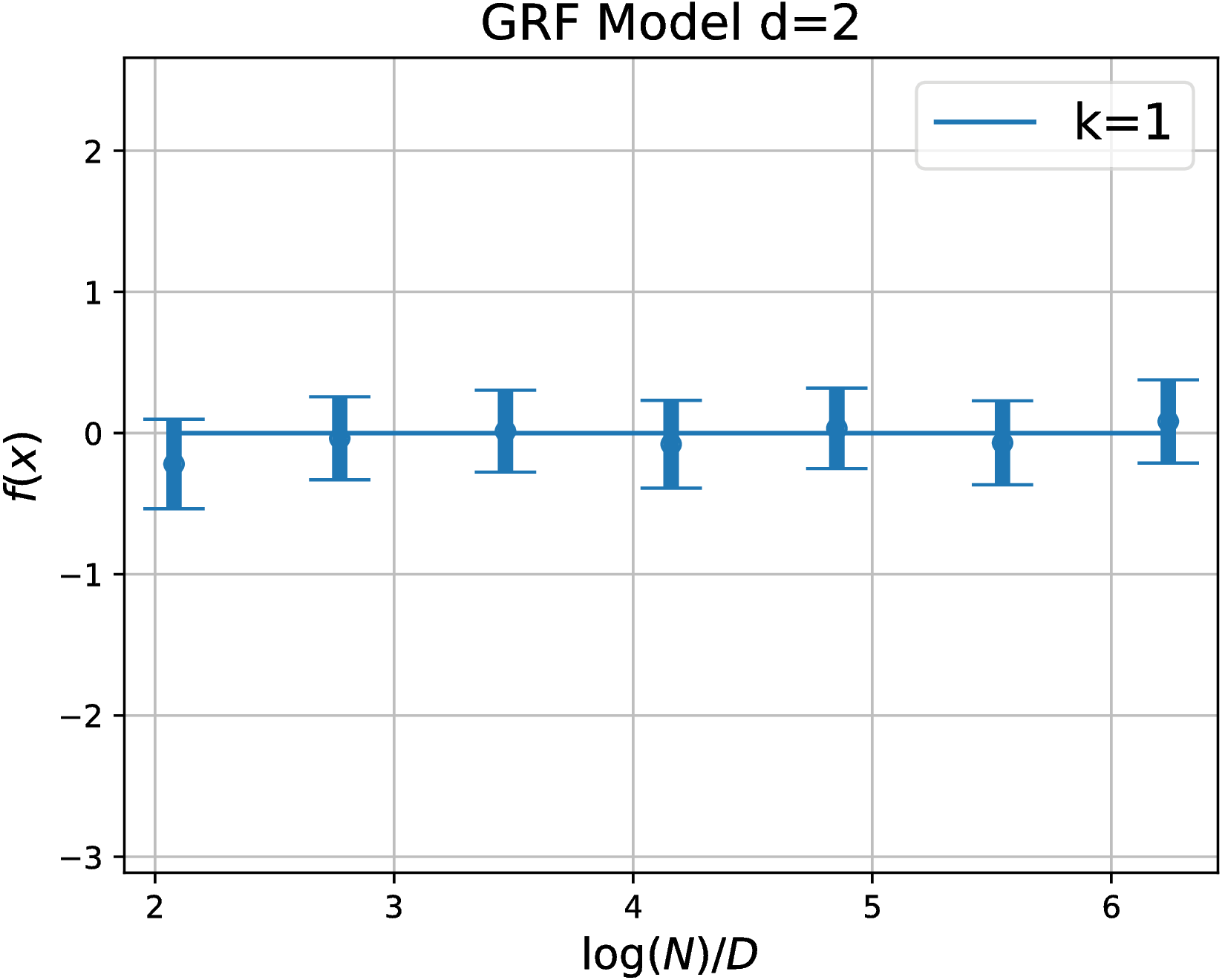}
         \caption{}
         \label{fig:gaussex2}
     \end{subfigure}
     \hfill
     \begin{subfigure}[b]{0.3\textwidth}
         \centering
				 \includegraphics[width=\textwidth]{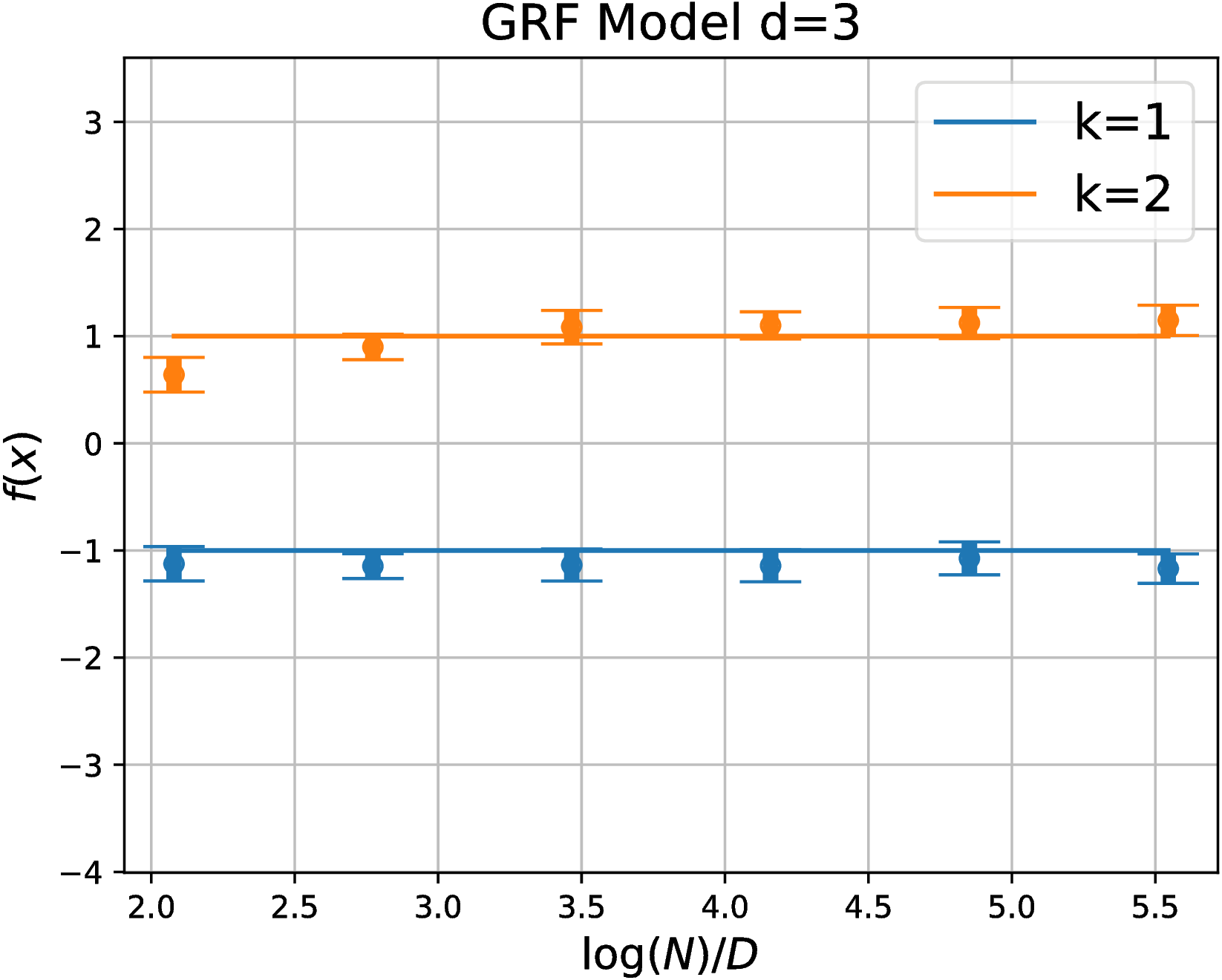}
         \caption{}
         \label{fig:gaussex3}
     \end{subfigure}
     \hfill
     \begin{subfigure}[b]{0.3\textwidth}
         \centering
				 \includegraphics[width=\textwidth]{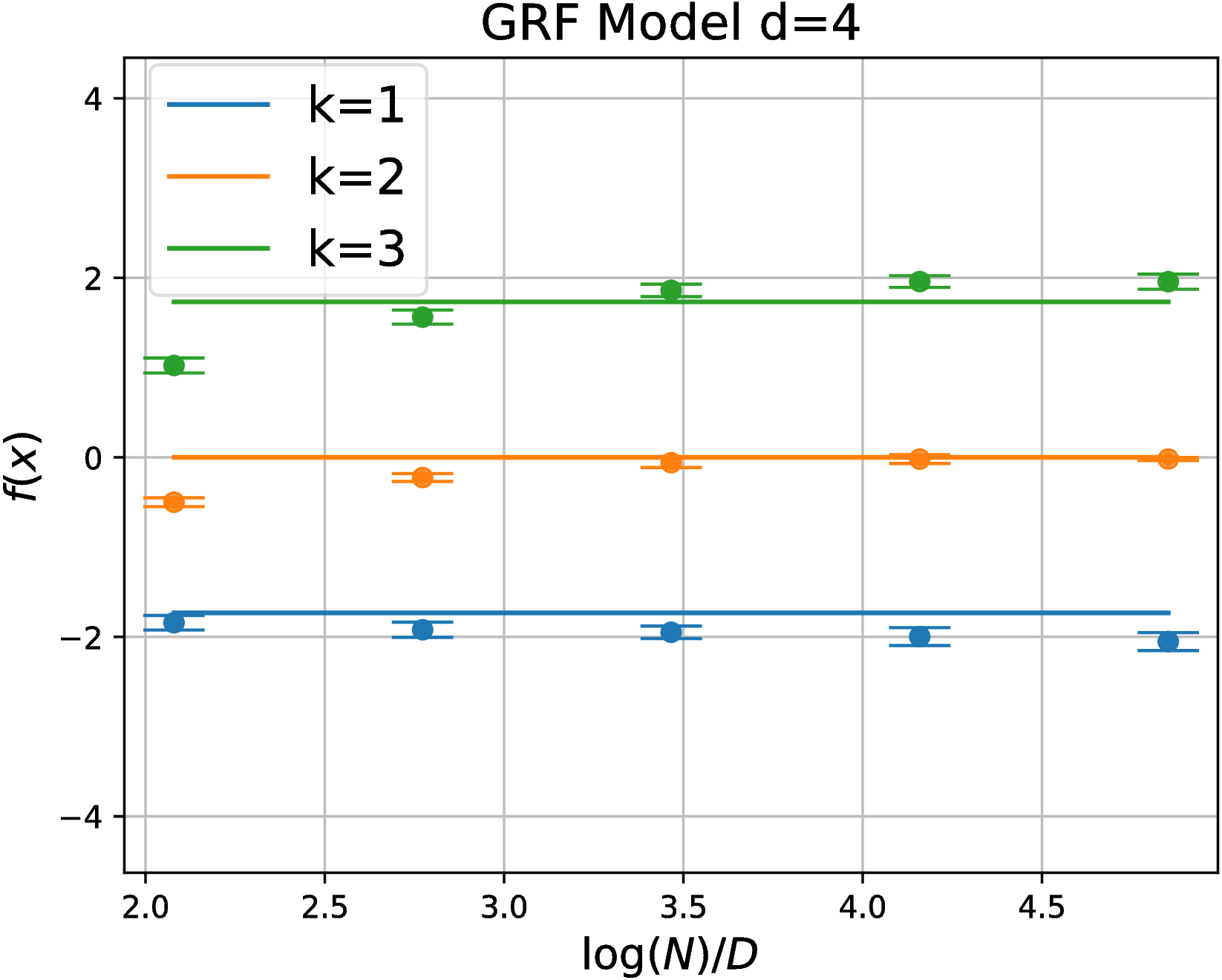}
         \caption{}
         \label{fig:gaussex4}
     \end{subfigure}
     \caption{\label{fig:stats} Statistics for the birth time of the giant $k$-cycles. For each model we repeated the simulations in order to estimate the mean and variance of the birth time of the first giant $k$-cycle. In each figure the x-axis is  $\log n$, and the y-axis represent the parameter value ($p,\lambda,$ or $\alpha$). The dots represent the mean value estimate, and the bars around them follows the standard deviation. The horizontal lines mark the corresponding zeros of the EC curve.
     (a)-(c) The random cubical complex. (d)-(f) The random permutahedral complex. (g)-(i) The Boolean model. (j)-(l) The Gaussian random field.}
\end{figure}
%%%%%

To give an alternative view on the error term $\Delta_k$ in Figure~\ref{fig:zero_vs_perc} we show a scatter plot of zeros of the EC curve and the individual appearance of the giant cycles for each of the four models. Note that each plot includes many different values of $n$, and the large spread in values is due to smaller values of $n$.
% Here we see that the overall spread is larger in higher dimensions, the small standard deviation in Figure~\ref{fig:stats} implies this is due to  samples at smaller grid sizes.

%%%%%
\begin{figure}[H]
	\centering
     \begin{subfigure}[t]{0.23\textwidth}
         \centering
         \includegraphics[width=\textwidth]{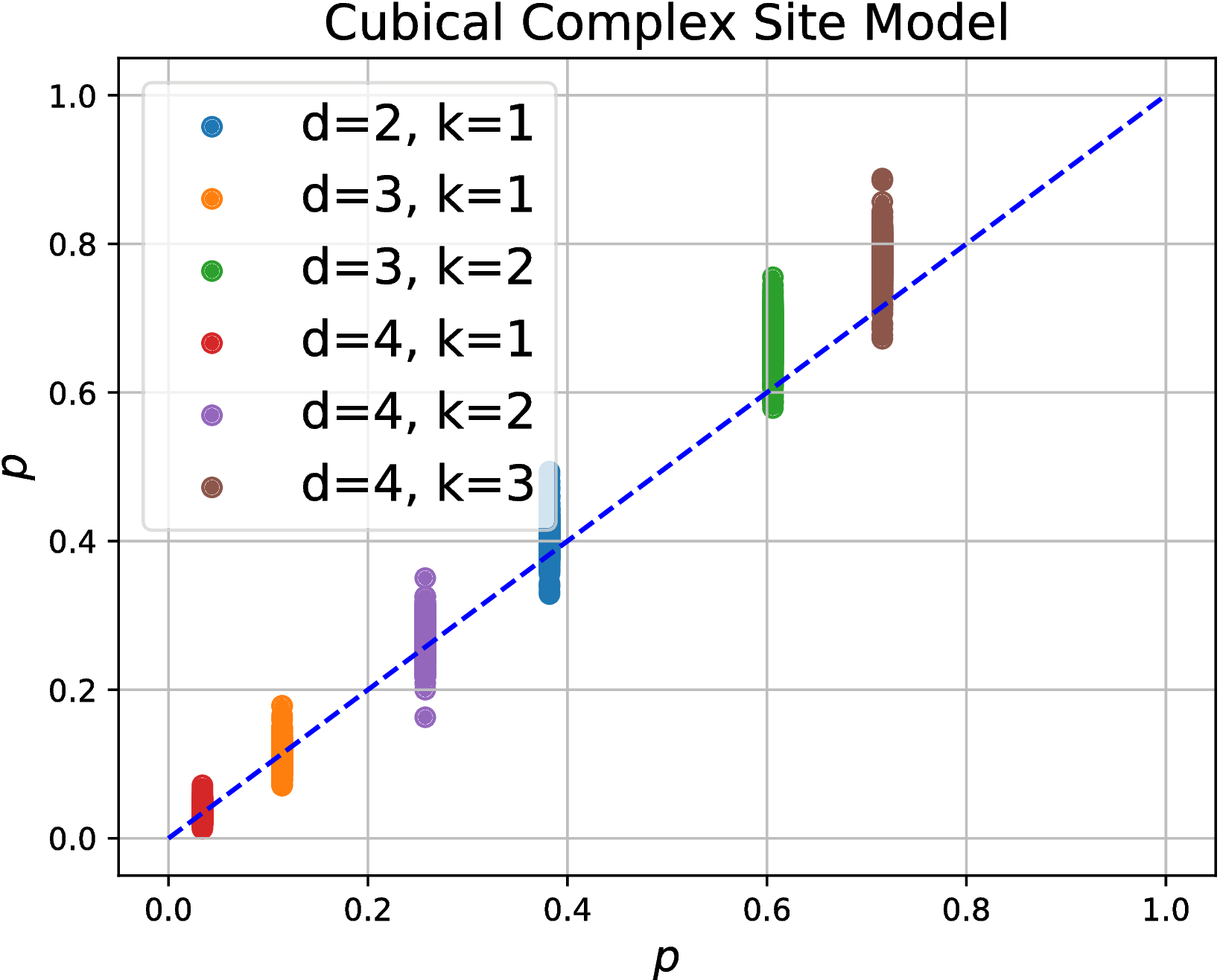}
         \caption{}
         \label{fig:puniform_all}
     \end{subfigure}
     \hfill
     \begin{subfigure}[t]{0.23\textwidth}
         \centering
				 \includegraphics[width=\textwidth]{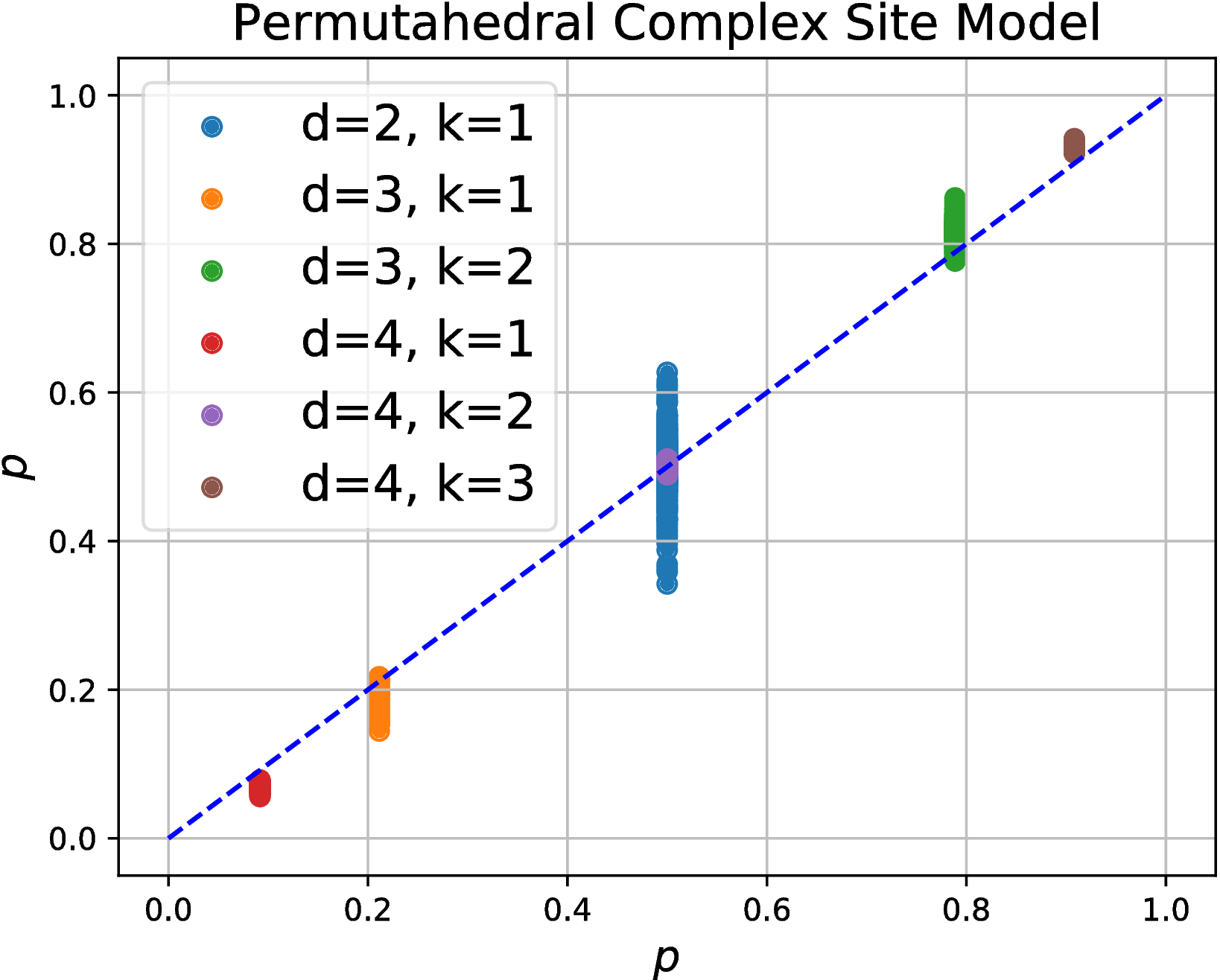}
         \caption{}
         \label{fig:pperm_all}
     \end{subfigure}
     \hfill
     \begin{subfigure}[t]{0.23\textwidth}
         \centering
				 \includegraphics[width=\textwidth]{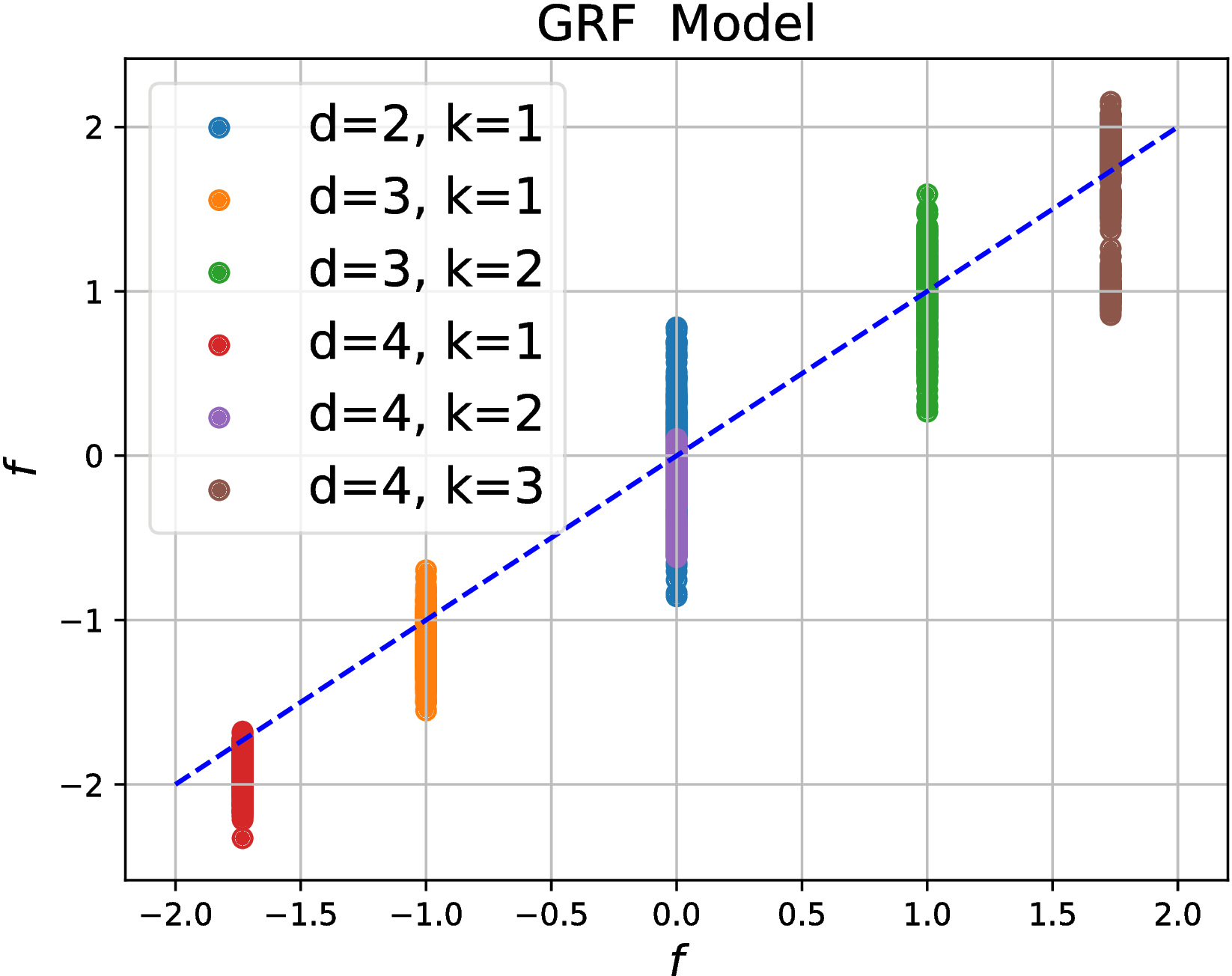}
         \caption{}
         \label{fig:pgaussian_all}
     \end{subfigure}
		 \hfill
     \begin{subfigure}[t]{0.23\textwidth}
         \centering
				 \includegraphics[width=\textwidth]{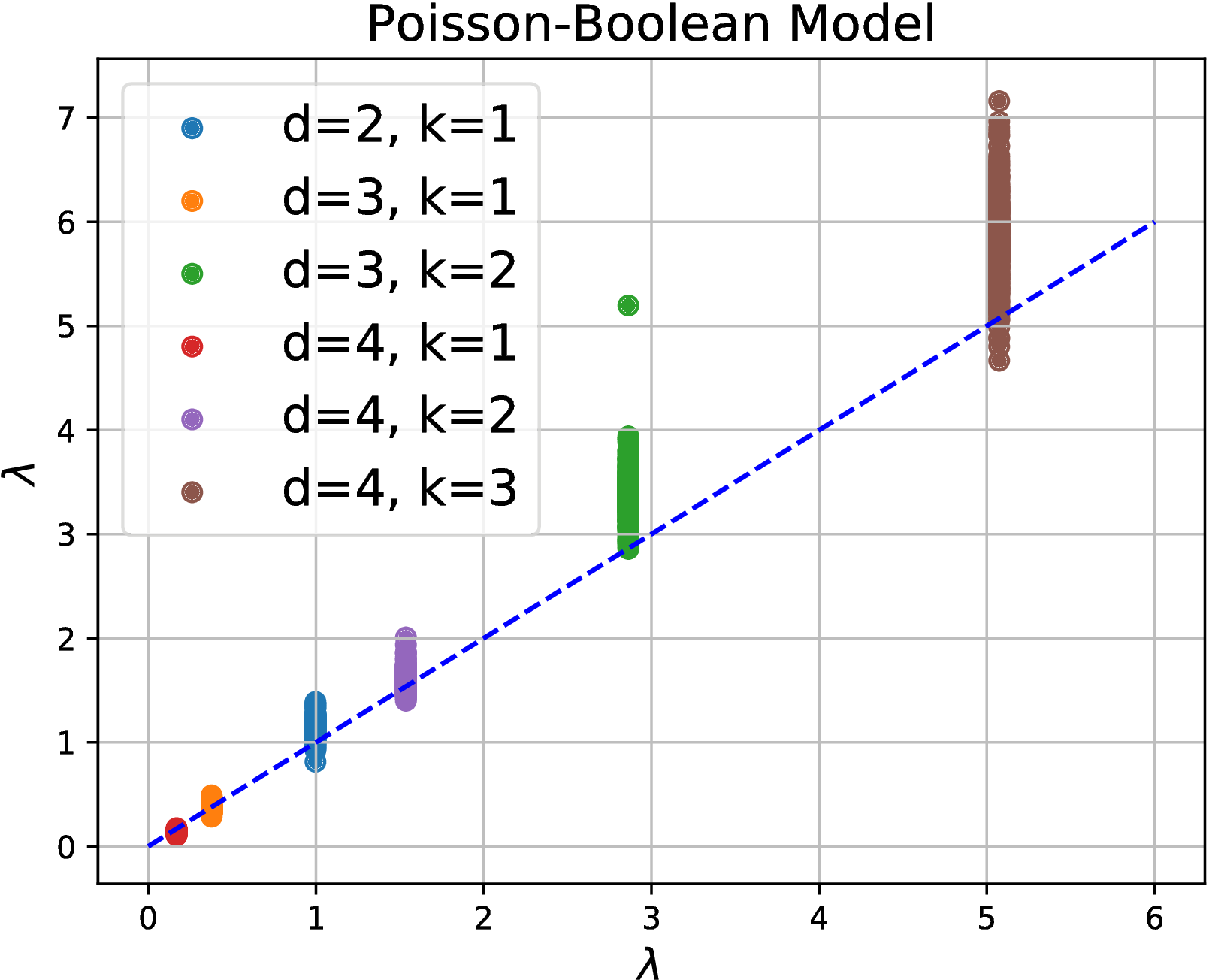}
         \caption{}
         \label{fig:pbool_all}
     \end{subfigure}
        \caption{Appearance of giant cycles vs.~zeros of the expected EC curve. The x-coordinate of each point is the corresponding zero of the expected EC curve, and the y-coordinate is the value when a giant $k$-cycle appears,
 (a) Cubical site model (b) Permutahedral site model (c) Gaussian random field (d) Poisson-Boolean.}
        \label{fig:zero_vs_perc}
\end{figure}
%%%%%

\begin{figure}[ht!]
	\centering
     \begin{subfigure}[b]{0.3\textwidth}
         \centering
				 \includegraphics[width=\textwidth]{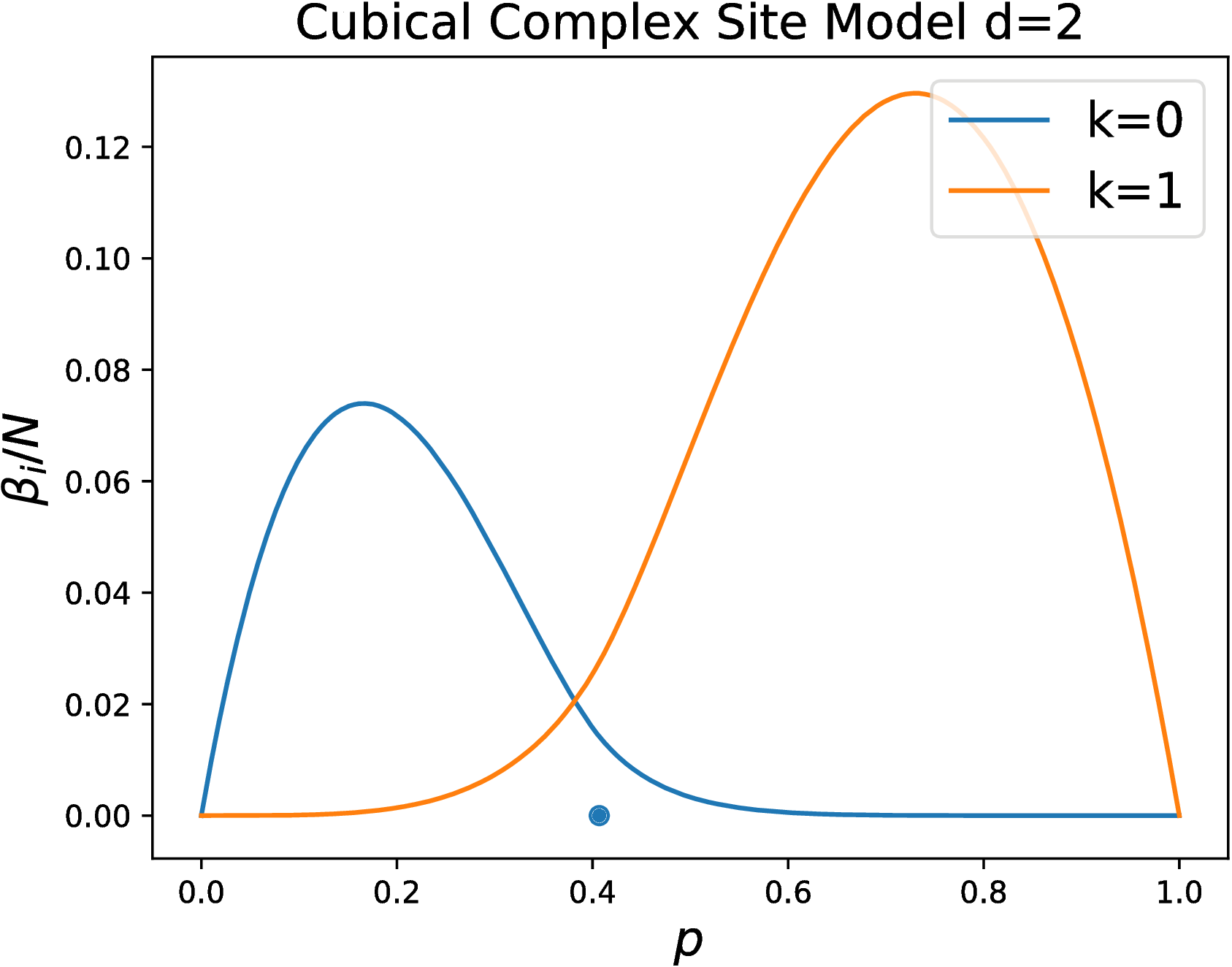}
				 \caption{}
         \label{fig:bettiuniform2}
     \end{subfigure}
     \hfill
     \begin{subfigure}[b]{0.3\textwidth}
         \centering
				 \includegraphics[width=\textwidth]{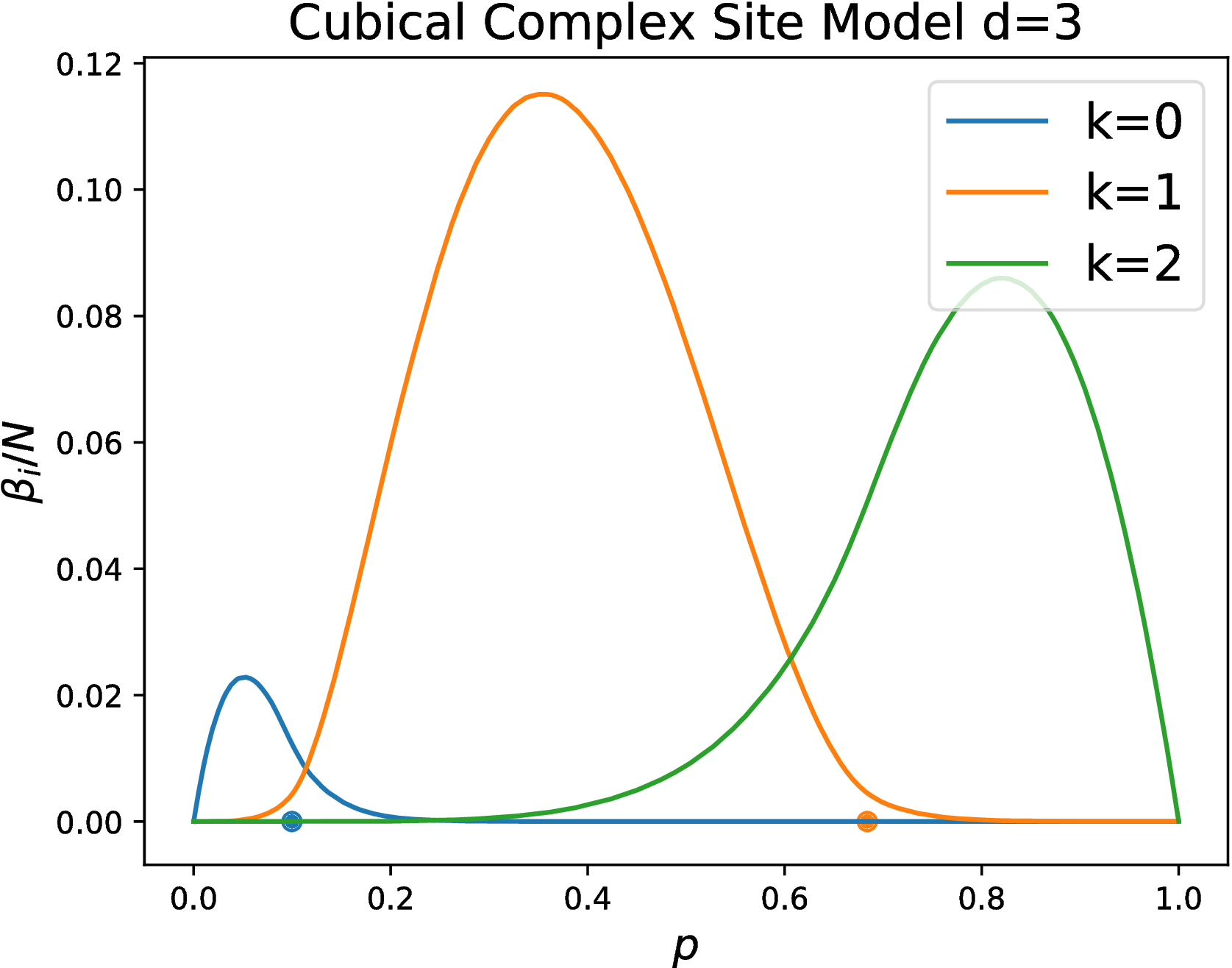}
				 \caption{}
         \label{fig:bettiuniform3}
     \end{subfigure}
     \hfill
     \begin{subfigure}[b]{0.3\textwidth}
         \centering

\includegraphics[width=\textwidth]{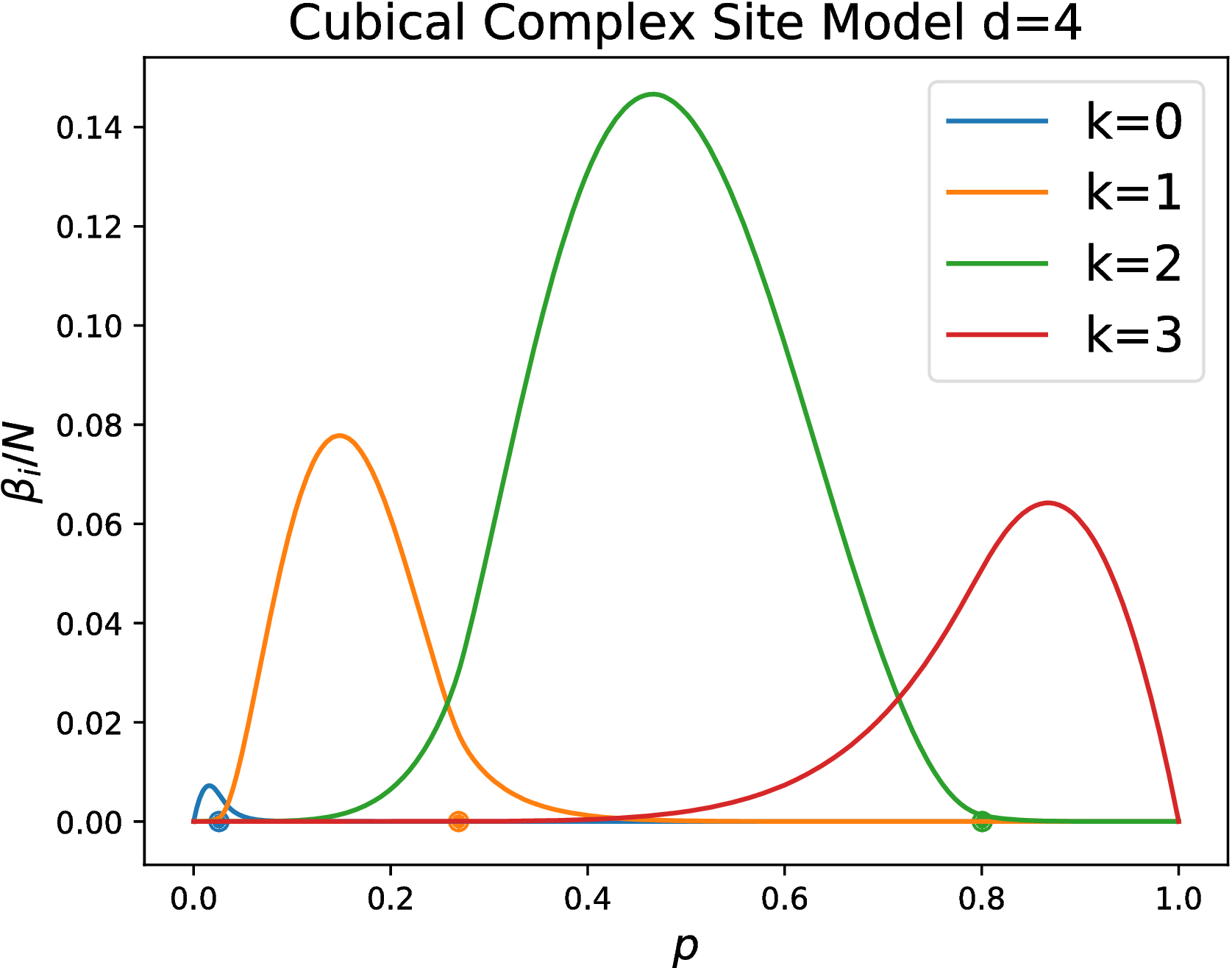}
				 \caption{}
         \label{fig:bettiuniform4}
     \end{subfigure}
     \hfill
     \begin{subfigure}[b]{0.3\textwidth}
         \centering
      				 \includegraphics[width=\textwidth]{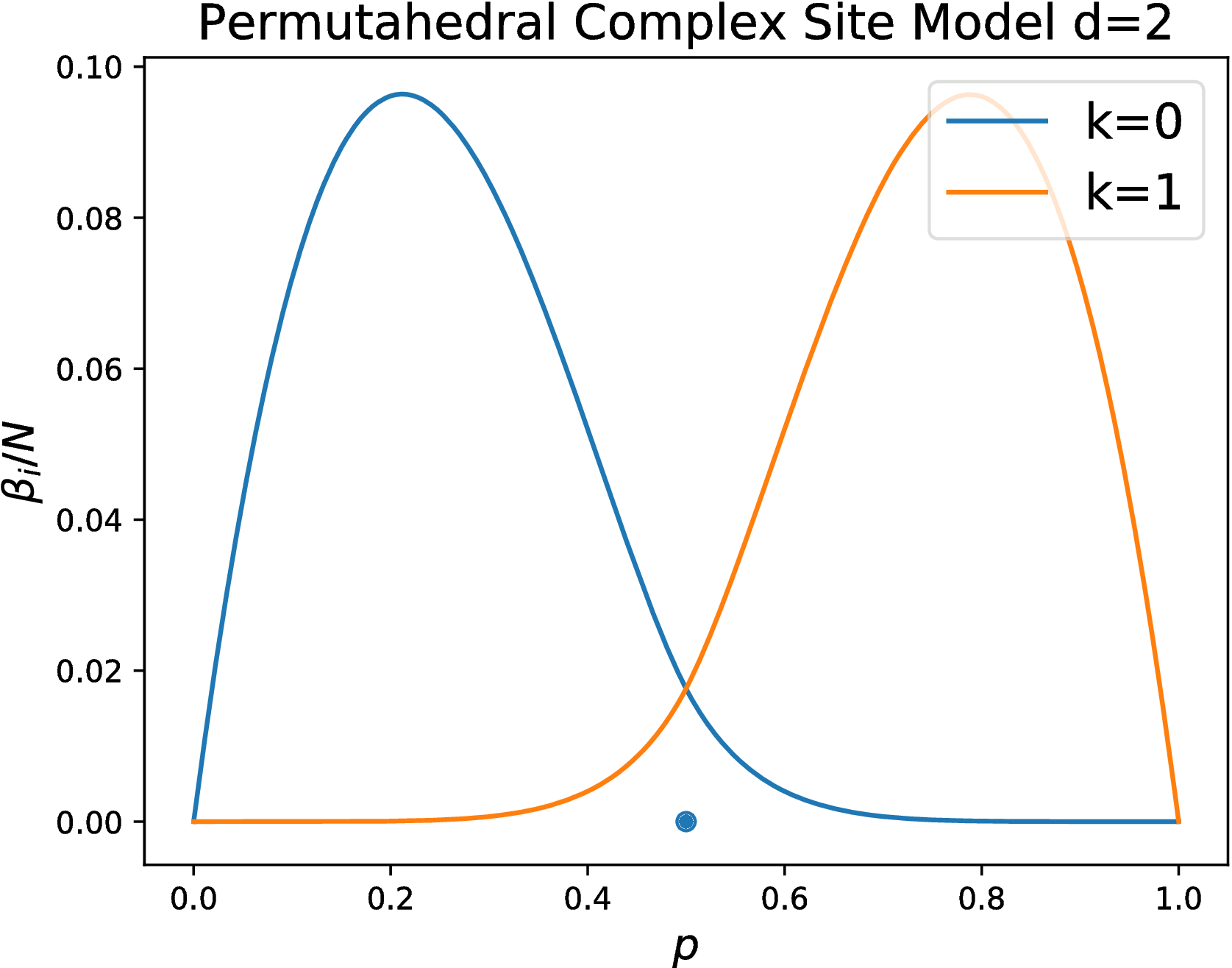}
				 \caption{}
         \label{fig:bettiperm2}
     \end{subfigure}
     \hfill
     \begin{subfigure}[b]{0.3\textwidth}
         \centering
				 \includegraphics[width=\textwidth]{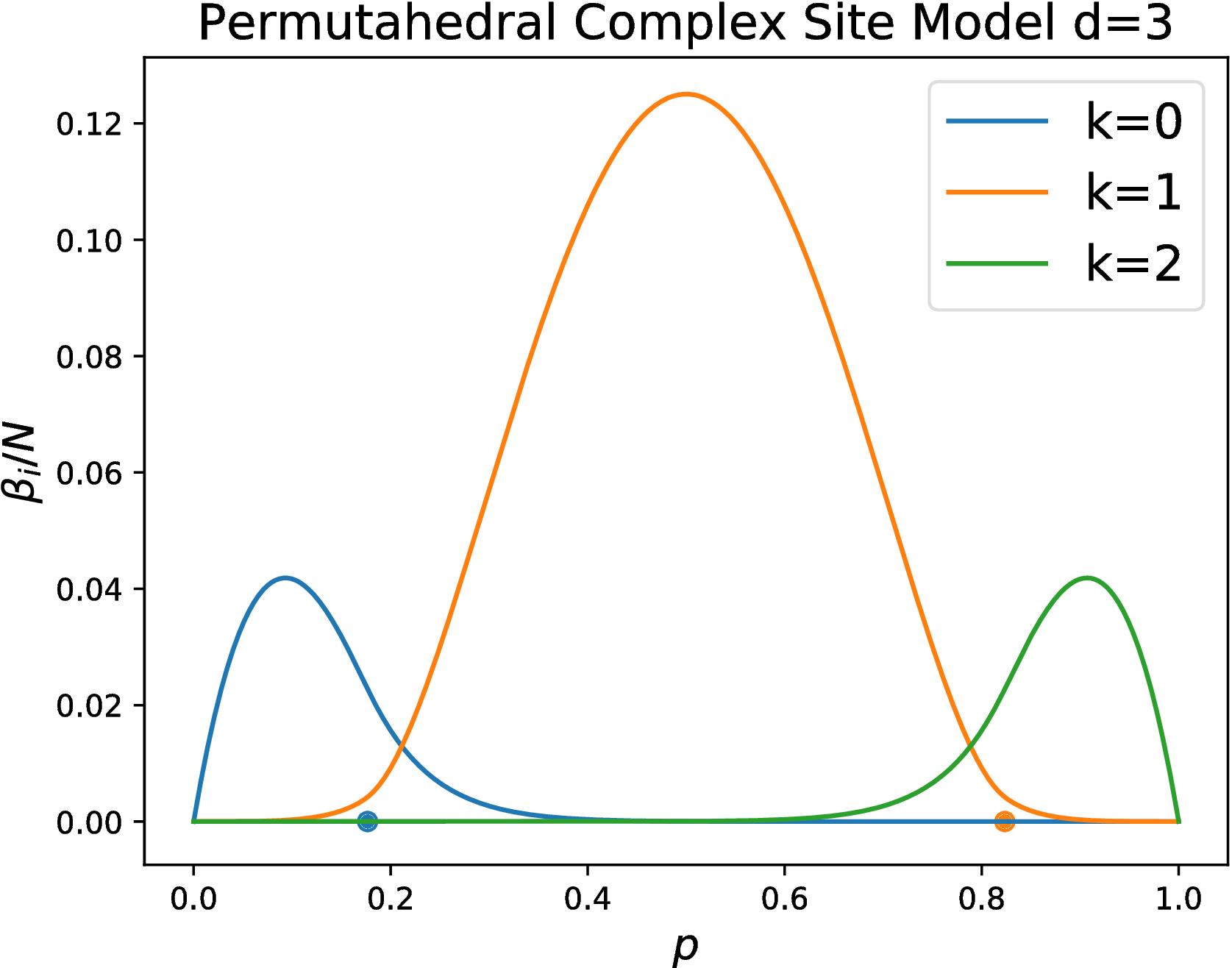}
         \caption{}
         \label{fig:bettiperm3}
     \end{subfigure}
     \hfill
     \begin{subfigure}[b]{0.3\textwidth}
         \centering
				 				 \includegraphics[width=\textwidth]{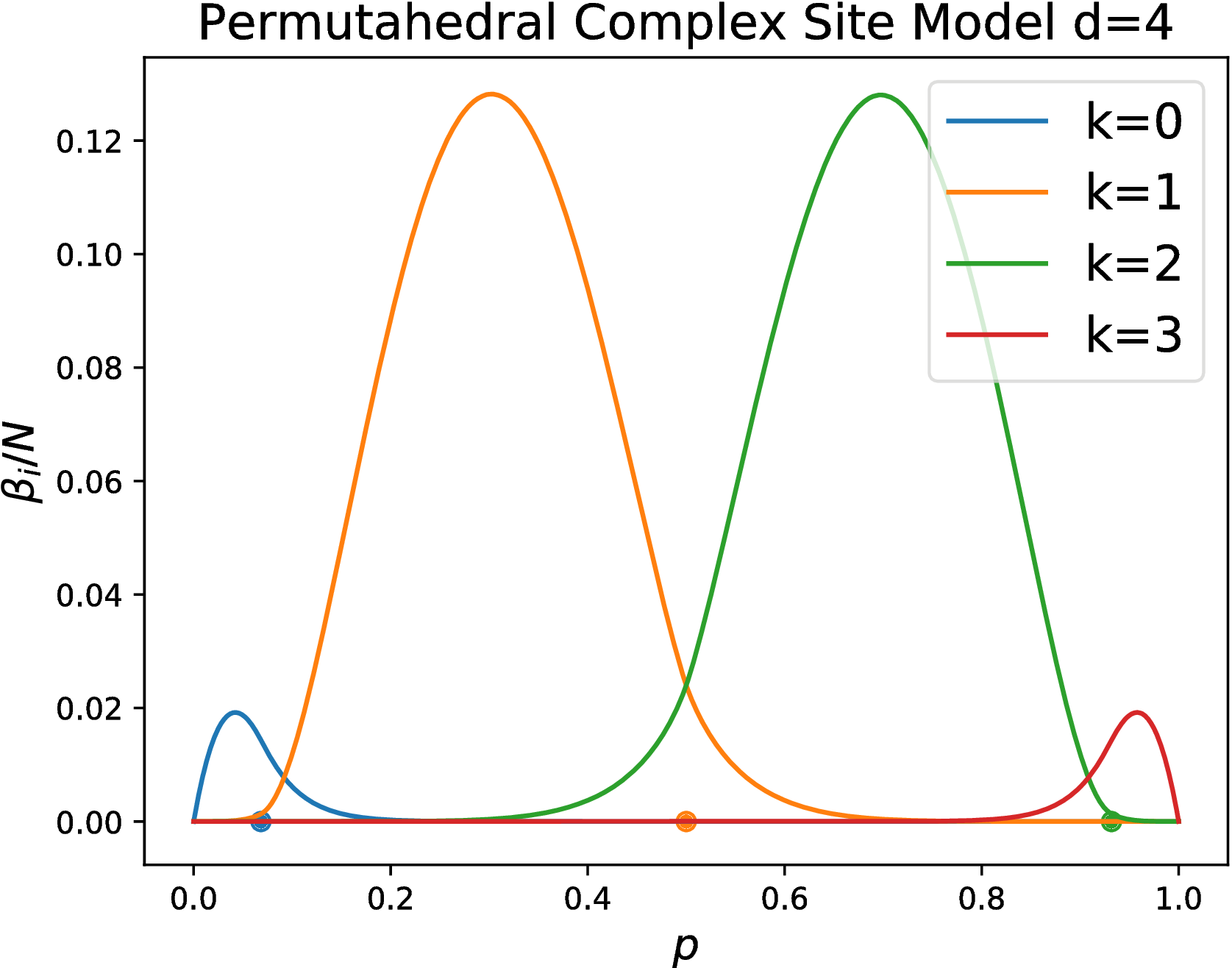}
 			   \caption{}
         \label{fig:bettiperm4}
     \end{subfigure}
     \begin{subfigure}[b]{0.3\textwidth}
         \centering
  				 \includegraphics[width=\textwidth]{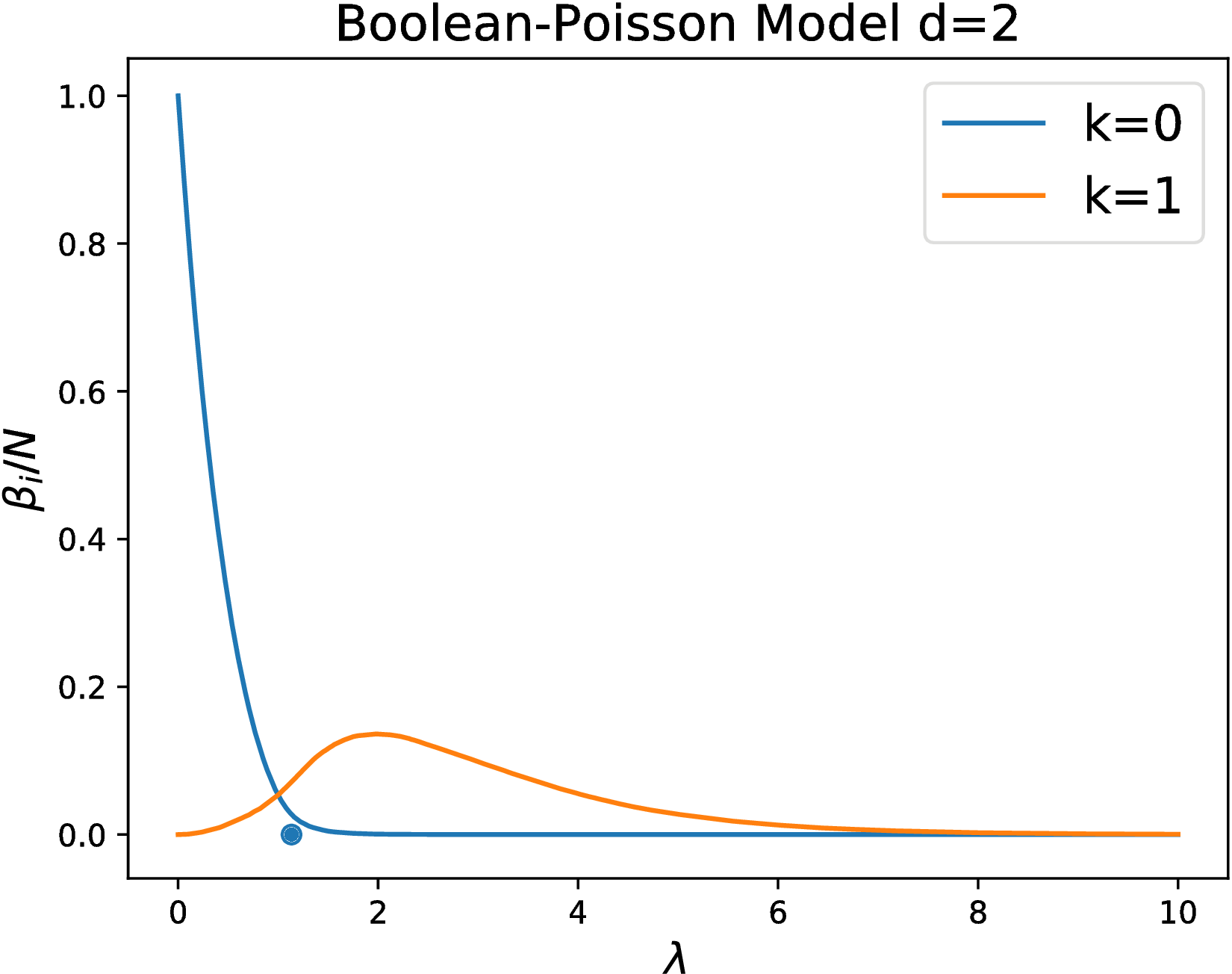}
				 \caption{}
         \label{fig:bettiboolean2}
     \end{subfigure}
     \hfill
     \begin{subfigure}[b]{0.3\textwidth}
         \centering
						 \includegraphics[width=\textwidth]{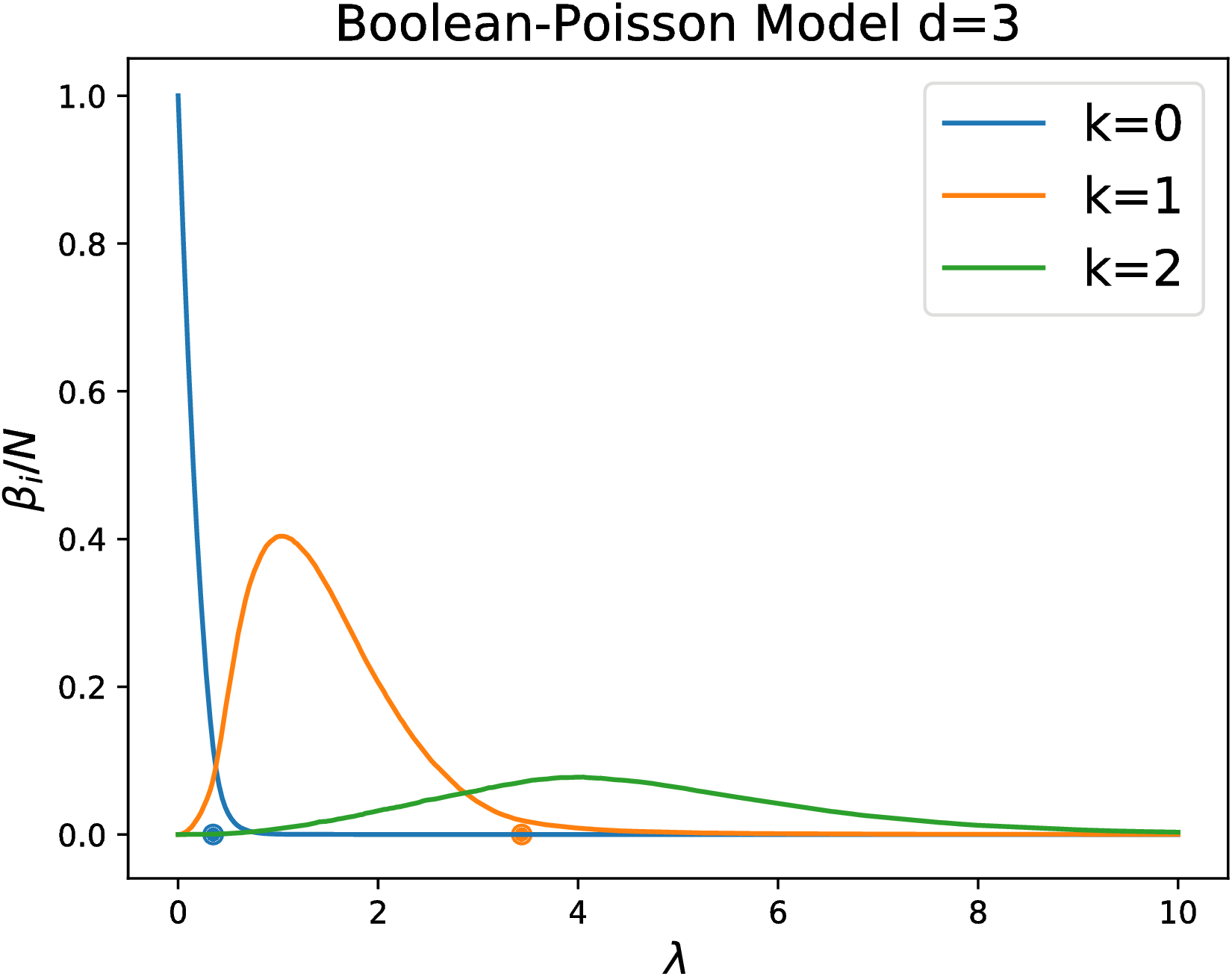}
				 \caption{}
         \label{fig:bettiboolean3}
     \end{subfigure}
     \hfill
     \begin{subfigure}[b]{0.3\textwidth}
         \centering
							 \includegraphics[width=\textwidth]{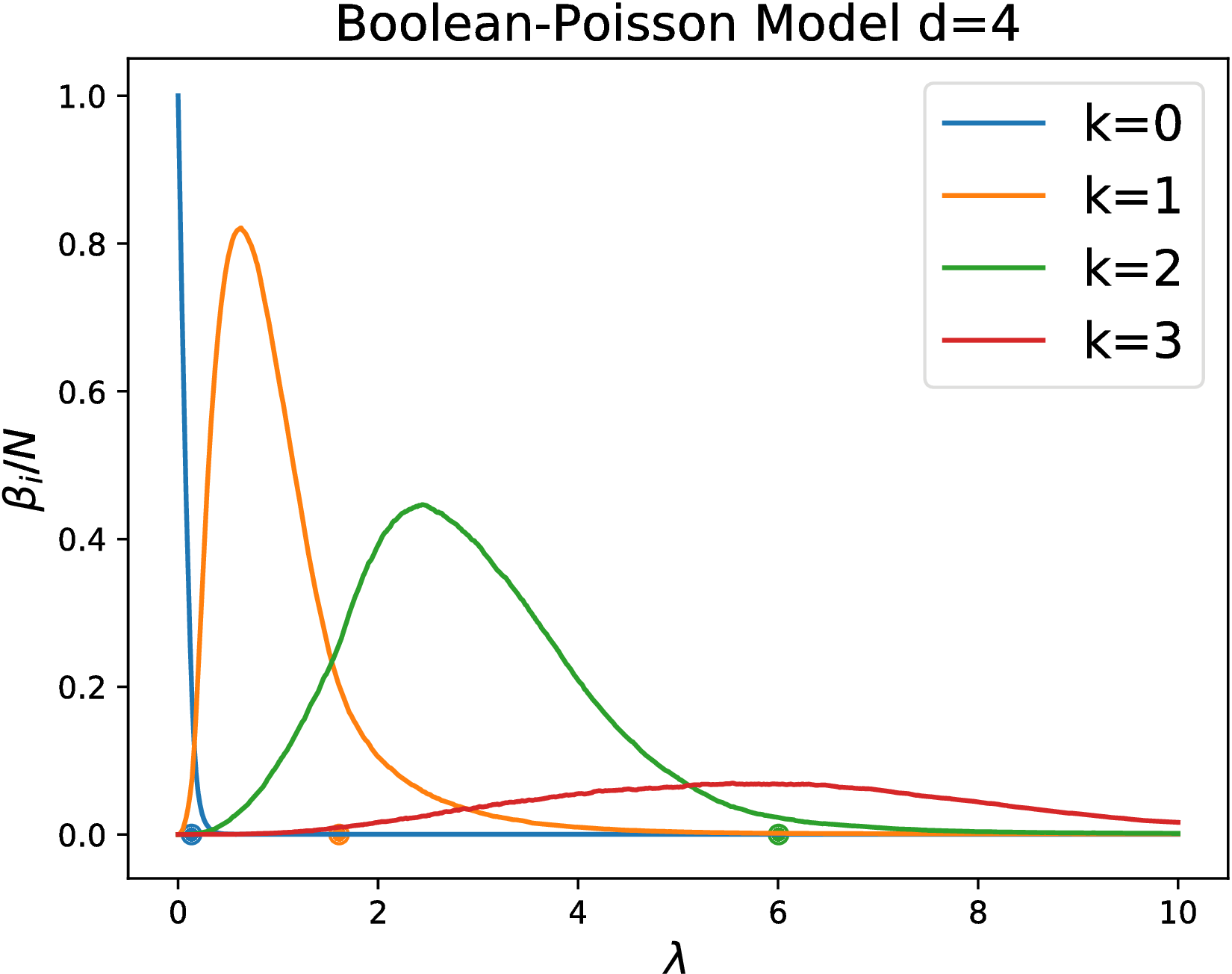}
         \caption{}
         \label{fig:bettiboolean4}
     \end{subfigure}
     \begin{subfigure}[b]{0.3\textwidth}
         \centering
				 \includegraphics[width=\textwidth]{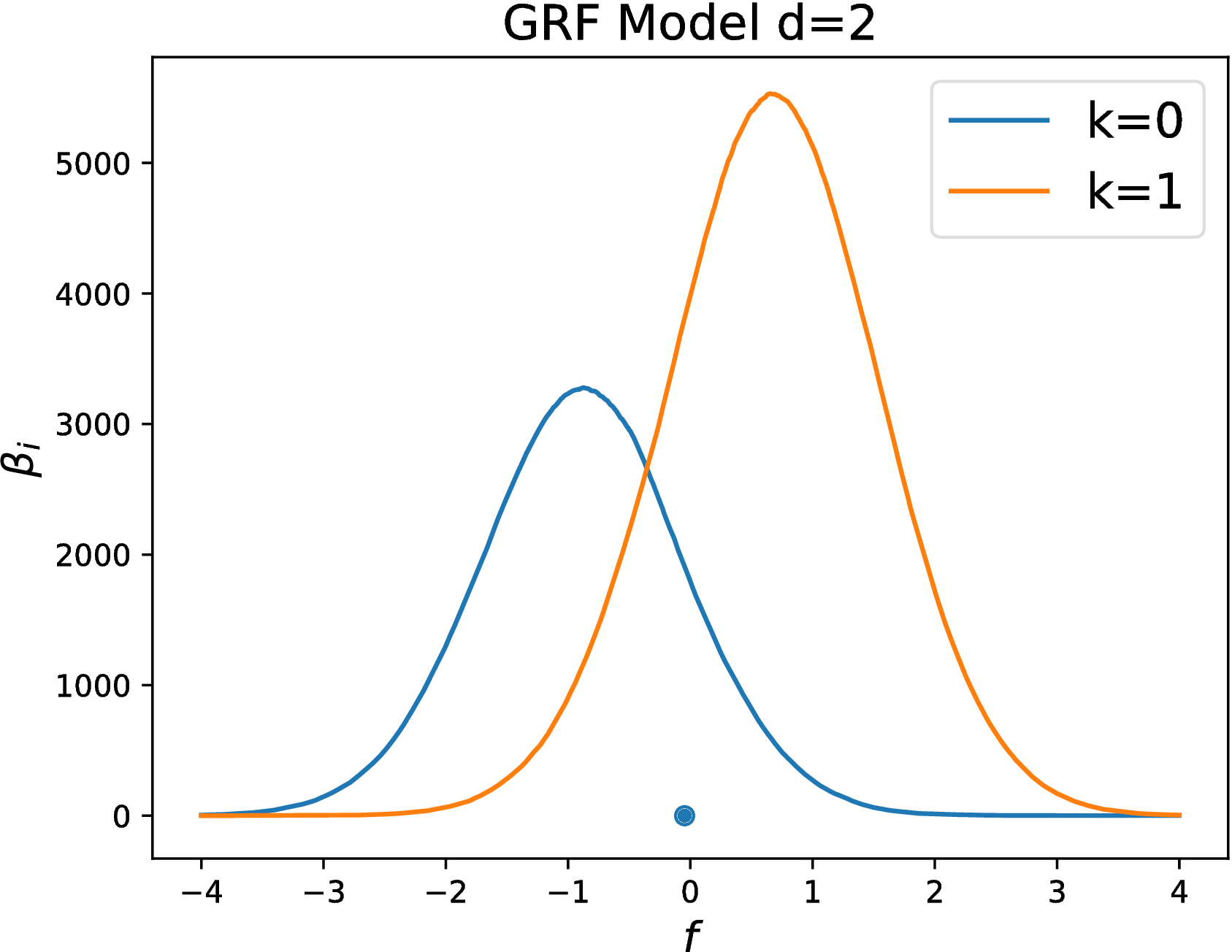}

				 \caption{}
         \label{fig:bettigaussian2}
     \end{subfigure}
     \hfill
     \begin{subfigure}[b]{0.3\textwidth}
         \centering
				 			 	 \includegraphics[width=\textwidth]{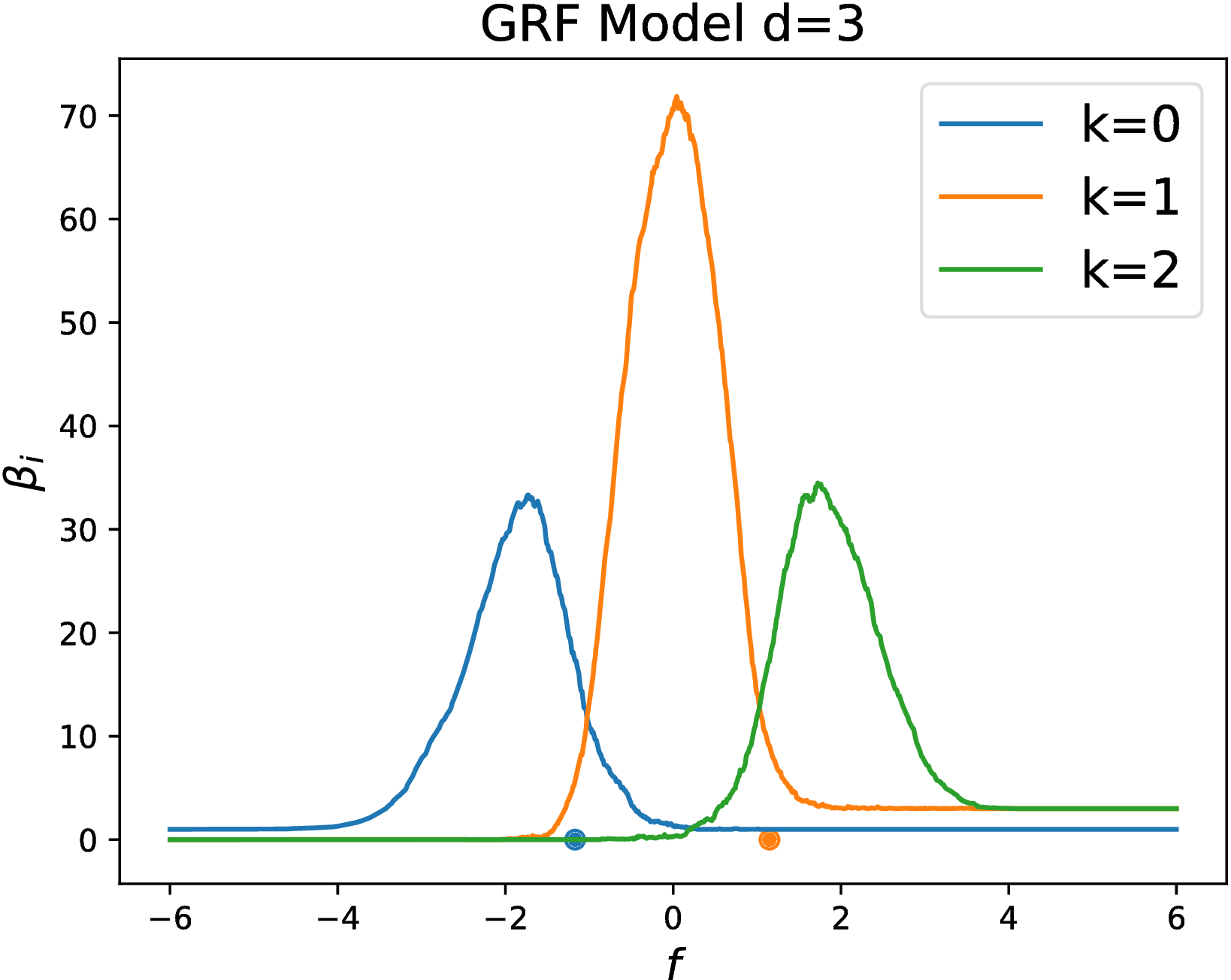}
				 \caption{}
         \label{fig:bettigaussian3}
     \end{subfigure}
     \hfill
     \begin{subfigure}[b]{0.3\textwidth}
         \centering

	 	 \includegraphics[width=\textwidth]{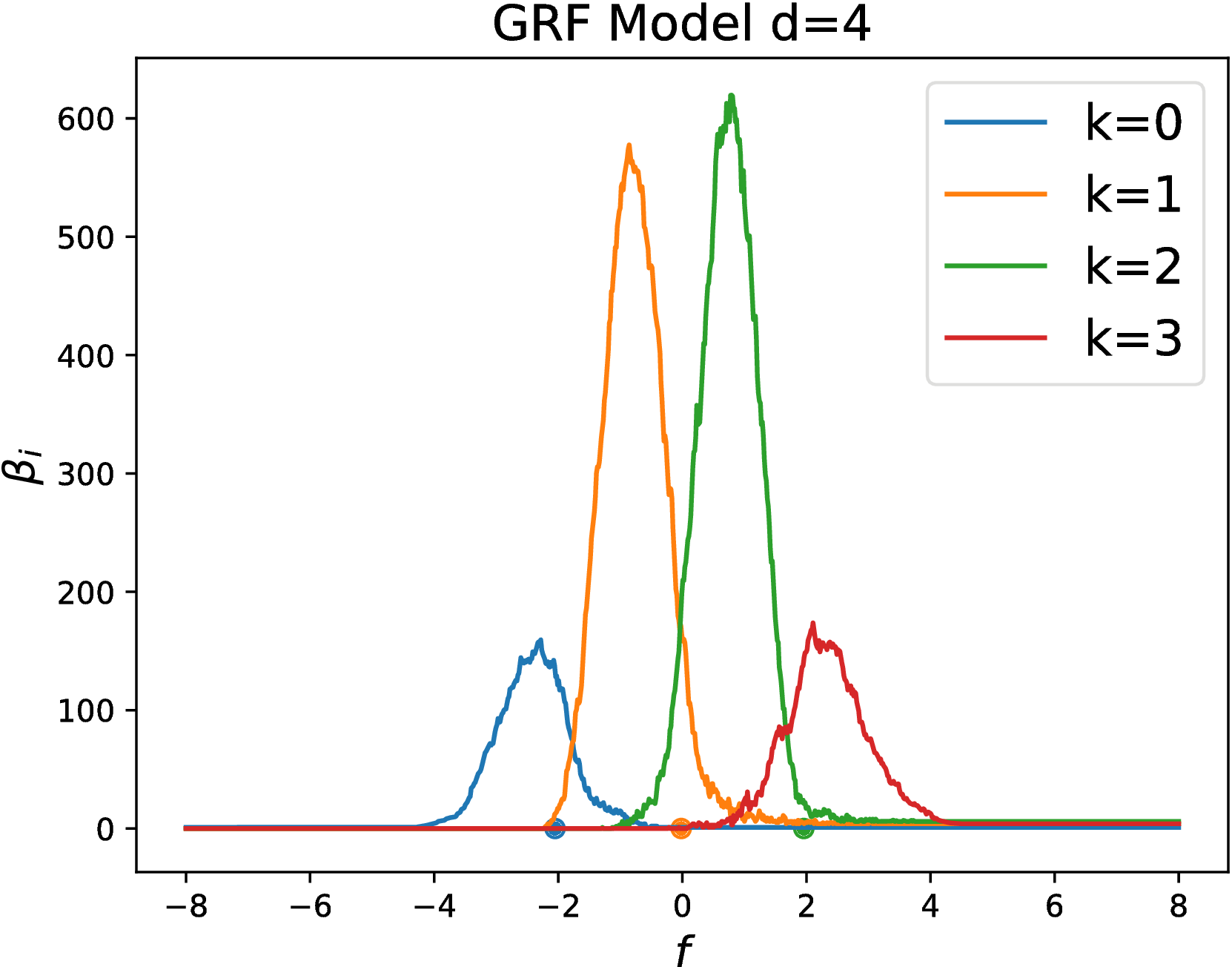}
				 \caption{}
         \label{fig:bettigaussian4}
     \end{subfigure}
     \caption{\label{fig:ec_betti_curves} The empirical Betti curves and the giant cycles.
     In each plot we draw the Betti curves (solid line), along with the birth time of the first giant $k$-cycle for $k=1,\ldots, d-1$. (a)-(c) The random cubical complex. (d)-(f) The random permutahedral complex. (g)-(i) The Boolean model. (j)-(l) The Gaussian random field. We simulated all the models on the $d$-dimensional torus, for $d=2,3,4$ (from left to right). }
\end{figure}
%%%%

%%%%
\begin{figure}
	\centering
     \begin{subfigure}[T]{0.23\textwidth}
         \centering
         \includegraphics[width=\textwidth]{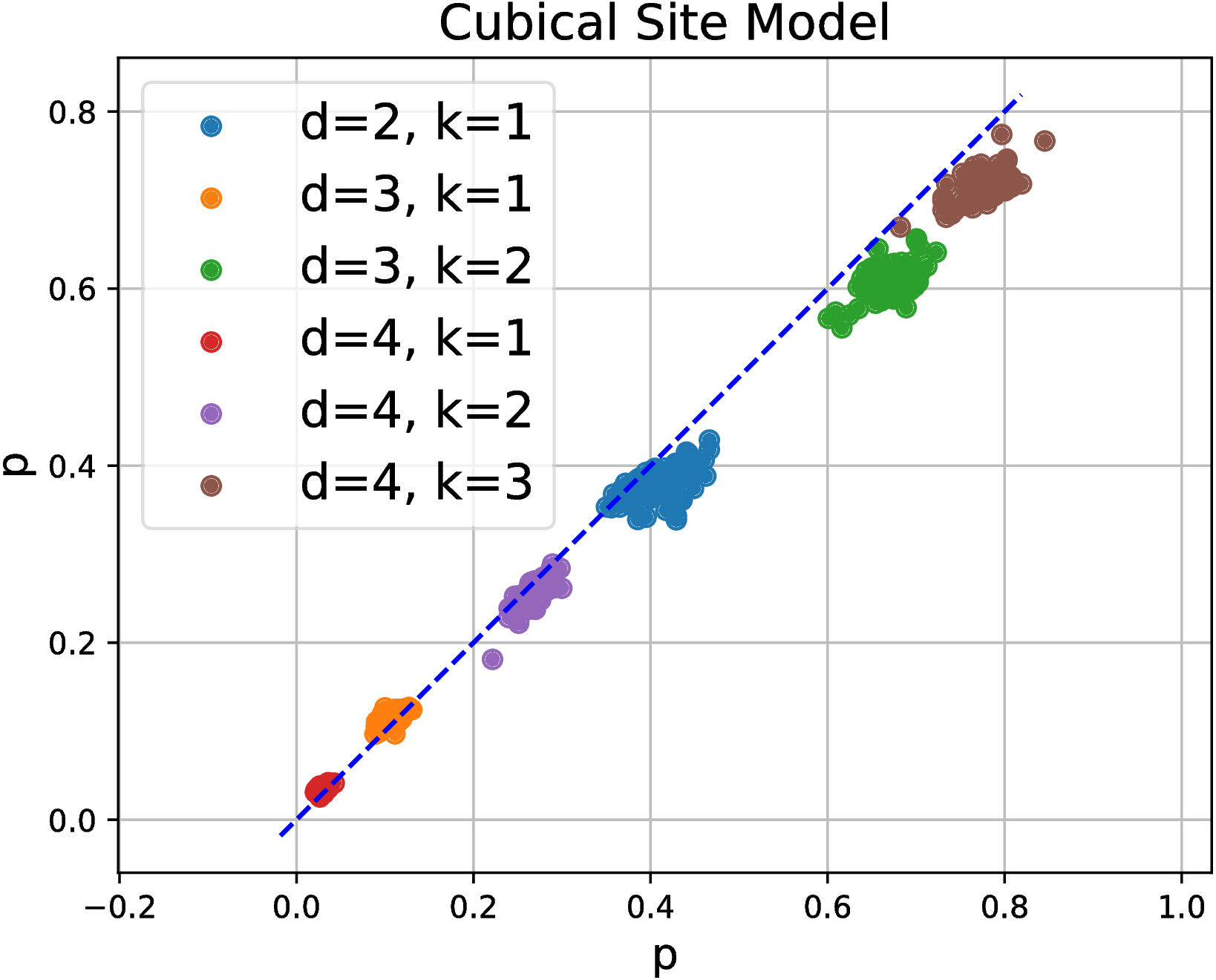}
         \caption{}
         \label{fig:uniform_betti_perc}
     \end{subfigure}\hfill%%%%
     \begin{subfigure}[T]{0.23\textwidth}
         \centering
				 \includegraphics[width=\textwidth]{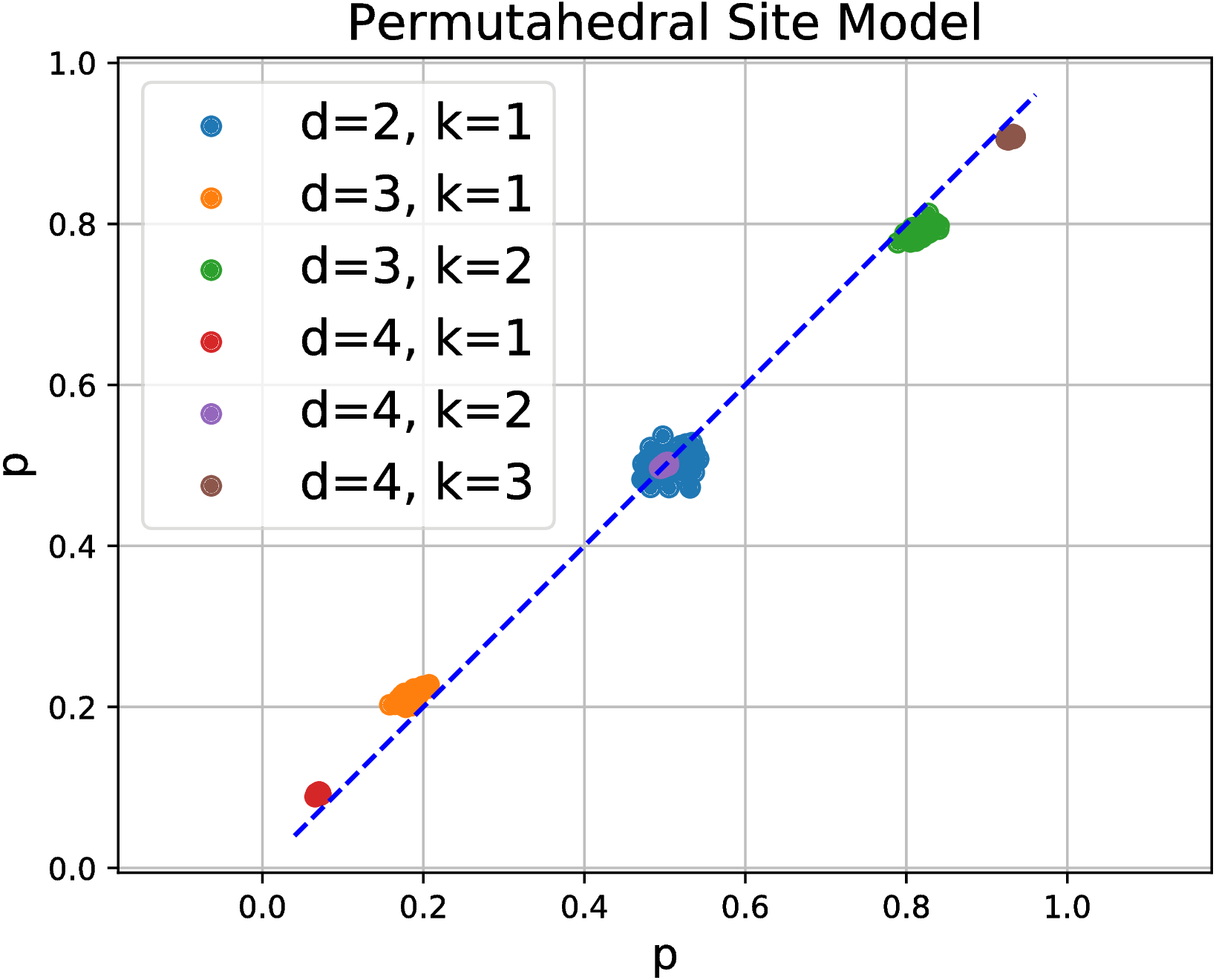}
         \caption{}
         \label{fig:perm_betti_perc}
     \end{subfigure}\hfill%%%%
     \begin{subfigure}[T]{0.23\textwidth}
				 \centering
				 \includegraphics[width=\textwidth]{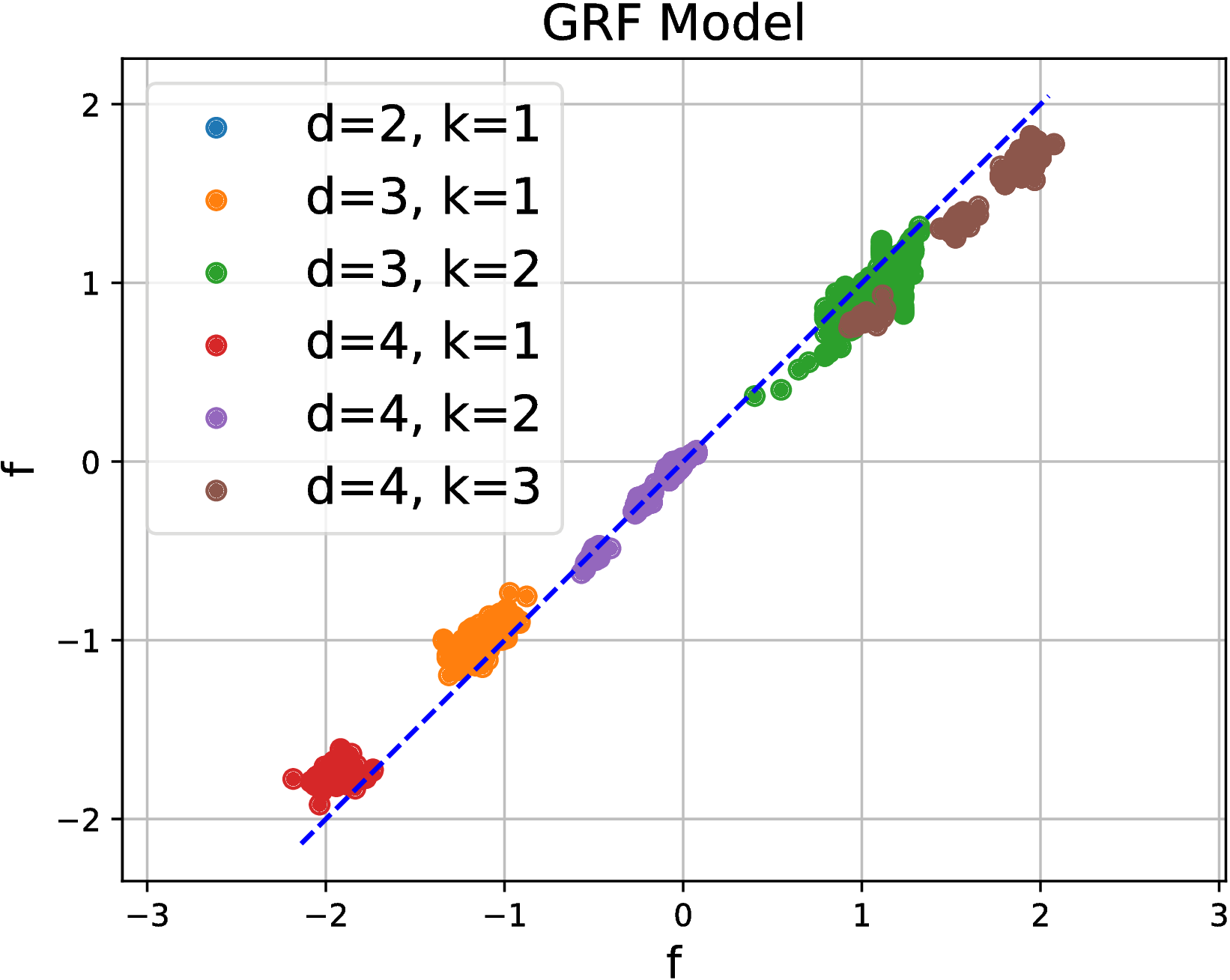}
         \caption{}
         \label{fig:gaussian_betti_perc}
     \end{subfigure}\hfill%%%%
     \begin{subfigure}[T]{0.23\textwidth}
         \centering
				 \includegraphics[width=\textwidth]{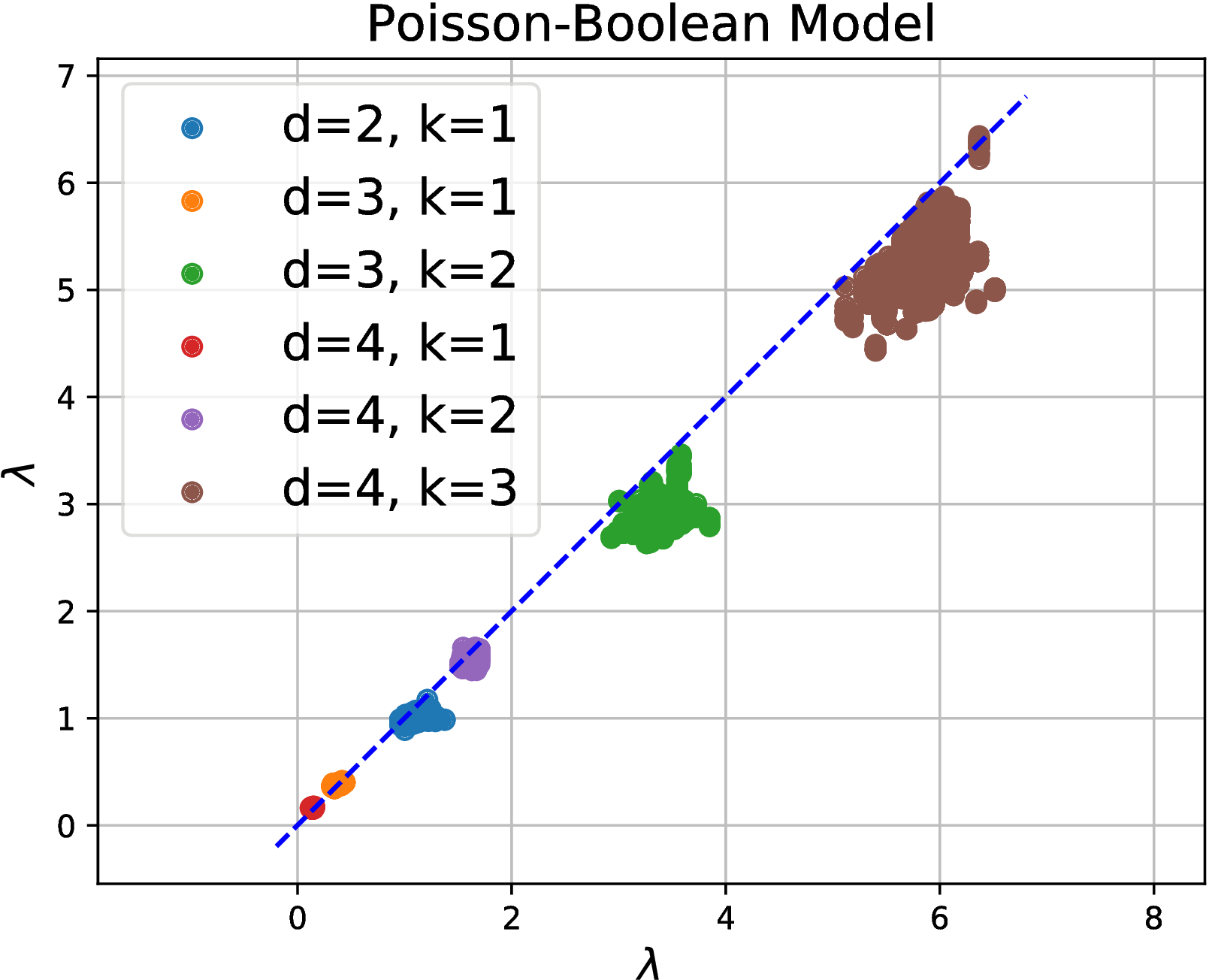}
         \caption{}
         \label{fig:bool_betti_perc}
     \end{subfigure}
        \caption{Appearance of giant cycles vs.~emerging of the Betti numbers.
       The x-coordinate of each point corresponds to the appearance time of the $k$-th giant cycle, while the y-axis is the time at which the equality $\beta_i = \beta_{i-1}$ occurs. The plots provide a strong experimental evidence that the appearance of higher dimensional cycles occurs before or this equality.
 (a) Cubical site model (b) Permutahedral site model (c) Gaussian random field (d) Poisson-Boolean.}
        \label{fig:betti_vs_perc}
\end{figure}

% \begin{figure}
% 	\centering
%      \begin{subfigure}[t]{0.2\textwidth}
%          \centering
%          \includegraphics[width=1.2\textwidth]{figures/uniform_all.png}
%          \caption{}
%          \label{fig:uniform_all}
%      \end{subfigure}
%      \hfill
%      \begin{subfigure}[t]{0.2\textwidth}
%          \centering
% 				 \includegraphics[width=1.2\textwidth]{figures/perm_all.png}
%          \caption{}
%          \label{fig:perm_all}
%      \end{subfigure}
%      \hfill
%      \begin{subfigure}[t]{0.2\textwidth}
%          \centering
% 				 \includegraphics[width=1.2\textwidth]{figures/gaussian_all.png}
%          \caption{}
%          \label{fig:gaussian_all}
%      \end{subfigure}
% 		 \hfill
%      \begin{subfigure}[t]{0.2\textwidth}
%          \centering
% 				 \includegraphics[width=1.2\textwidth]{figures/bool_all.png}
%          \caption{}
%          \label{fig:bool_all}
%      \end{subfigure}
%         \caption{The difference between the empirical expectation of the appearance of the giant cycles and the zeros of the Euler curve vs. the grid size for all dimensions and $k$-giant cycles. (a) Cubical site model (b) Permutahedral site model (c) Gaussian random field (d) Poisson-Boolean.}
%         \label{fig:errors}
% \end{figure}

Another quantity that is interesting to consider are the Betti curves (i.e. the evolution of $\beta_k$ over time).
It has been observed in the past that these curves exhibit a ``separation" phenomenon, where for each range of parameters a single Betti number dominates all the others (see Figure \ref{fig:ec_betti_curves}).
Therefore, if we consider $\beta_0,\ldots, \beta_{d-1}$, we can define
\[
	t_k^{\mathrm{betti}} := \inf \set{t : \beta_{k-1}(t) = \beta_k(t)}.
\]
Our simulations show that $t_k^{\mathrm{betti}}$ is tightly connected to $t_k^{\mathrm{perc}}$ and $t_k^{\mathrm{ec}}$.
Further, Figures \ref{fig:ec_betti_curves}-\ref{fig:betti_vs_perc}  suggest  the following conclusions:
\begin{enumerate}
\item The giant $k$-cycles appear after the peak in $\beta_{k-1}$ and before the peak in $\beta_k$.
\item The zeros of the EC curve are good approximations for the $t_k^{\mathrm{betti}}$.
\end{enumerate}
If the above could be shown, one potential path to understanding the difference $\Delta_k$ could be through investigating what percentage of $(k-1)$-cycles  must be filled before most $k$-simplicies create $k$-cycles (rather than destroying $(k-1)$-cycles), which is when we expect giant cycles to appear. This is related to the recent study of phase transitions in non-geometric models~\cite{linial_phase_2016}, and an object known as the ``giant shadow".

We conclude this section by noting that the  simulation results also support an open conjectures about the Betti curves being unimodal \cite{kahle_random_2011}. If the above holds over a large enough set of parameters, the shape of the Euler curve may be used to show unimodality of the Betti curves.

%%%%%
\section{Conclusion}
In this paper we discussed a new type of percolation phenomena we call ``homological percolation" where giant $k$-cycles appear in the homology of a random structure. Our results suggest a strong connection between the percolation thresholds $t_k^{\mathrm{perc}}$ and the zeros of the expected EC curve m $t_k^{\mathrm{ec}}$. We demonstrated this connection in four types of random percolation models across multiple dimensions.

The results in this paper are purely experimental and should serve as the basis for a deep theoretical study to prove the conjectures we made in this paper. Aside from the mathematical challenge of proving these conjectures, they can have significant implication in various fields. For example, for most models in percolation theory the exact thresholds are not known. Therefore, proving an explicit rigorous link  between the expected EC curve and the percolation thresholds, will allow us to approximate the thresholds in various models, and  maybe even find their exact values.

Another application is in the field of Topological Data Analysis (TDA). A significant effort in TDA is to identify significant topological features in data. If we consider the giant cycles to be significant (as they represent a feature of the true underlying shape), then our conjectures suggest that in order to locate these significant features, we can calculate the EC curve and search for cycles that appear around the corresponding zero of the EC. This heuristic should be further developed, once any of the conjectures is proved.

%write about intuition

\section*{Acknowledgement}
The authors are grateful to Yogeshwaran Dhandapani for suggesting the idea of checking for a connection between homological percolation and the Euler characteristic, and to Stephen Muirhead for discussions on percolation for Gaussian random field models.

\appendix
\section{Calculating the EC for site-percolation models}\label{sec:ec_calc}
Here we present the details of the calculation of the expected Euler characteristic curve.
Recall, that for the site percolation models the filtration parameter is $t=p$. To establish a formula for $\bar\chi(p)$, we use Equation~\eqref{eqn:ec_faces} and the linearity of expectation. Thus, we need to evaluate the expected number of $k$-faces in each of the site models.

Beginning with the cubical model, we observe that each $d$-dimensional cube has $2^{d-k}\binom{d}{k}$ $k$-faces on its boundary. In addition, since we consider $Q_n^d$ (the discretization of $\T^d$ into $n$ boxes),  each $k$-face is on the boundary of precisely $2^{d-k}$ $d$-dimensional boxes. Therefore, the total number of $k$-faces in $Q_n^d$ is exactly $n\binom{d}{k}$.

Now, for any $k$-face,  if it is included in $Q(n,p)$, then at least one of $d$-dimensional boxes that contains it must be open. Therefore, the probability of a $k$-faces to be in $Q(n,p)$ is $(1-(1-p)^{2^{d-k}})$.
Putting everything together, we have that
\[
	\mean{F_k(Q(n,p))} = n\binom{d}{k} (1-(1-p)^{2^{d-k}}),
\]
which then yields \eqref{eqn:ec_cubical},
\[
\bar\chi_Q(p) = \sum_{k=0}^d (-1)^k\mean{F_k(Q(n,p))} =  n \sum_{k=0}^d (-1)^k\binom{d}{k} (1-(1-p)^{2^{d-k}}).
\]

The calculation  for the permutahedral complex $P(n,p)$ is similar, where the only difference is the face counting. From~\cite{ziegler2012lectures}[p.18], each $k$-face in $P_n^d$ corresponds to a partition of the set $\{0,\ldots,d\}$ into $d+1-k$ nonempty parts. Therefore, the number of $k$-faces for each cell is given by a Stirling number of the second kind~\cite{graham1989concrete}[Section 6.1, p.258],
\begin{align*}
 F(P^d_1) = (d+1-k)!S(d+1,d+1-k) &= \sum\limits_{i=0}^{d+1-k}(-1)^i \binom{d+1-k}{i}(d+1-k-i)^{d+1} \\
	&= \sum\limits_{j=0}^{d+1-k}(-1)^{d+1-k-j} \binom{d+1-k}{j}j^{d+1}
\end{align*}
where the second equality follows from using the substitution $j=d+1-k-i$.

Now every $k$-face belongs to $(d+1-k)$ $d$-cells. This follows from the genericity of the corresponding Voronoi cells (see the proof of Lemma~\ref{lem:perm_equivalence} in the Appendix~\ref{sec:duality}). Hence, the total number of $k$-cells is
\[
F_k(P^d_n) = \frac{n}{d+1-k} \sum\limits_{j=0}^{d+1-k}(-1)^{d+1-k-j} \binom{d+1-k}{j}j^{d+1} = n\sum\limits_{j=0}^{d+1-k}(-1)^{d+1-k-j} \binom{d-k}{j}j^{d}
\]
Therefore,
\[ \mean{\chi_P(n,p)} = n\sum\limits_{k=0}^d (-1)^k\left(1- (1-p)^{d+1-k}\right) \sum\limits_{j=0}^{d+1-k}(-1)^{d+1-k-j} \binom{d-k}{j}j^{d+1}.
\]
Exchanging between $k$ and $(d-k)$ then yields \eqref{eqn:ec_perm}.

%%%%%
\section{The expected EC curve for the Gaussian random field}

As stated in Section \ref{sec:grf}, the expected EC is calculated via the Gaussian Kinematic Formula, developed in \cite{taylor_gaussian_2009}. Suppose that $M$ is a $d$-dimensional manifold, and let $f:M\to \R$ be a Gaussian random field with zero mean and unit variance (with some further smoothness conditions detailed in \cite{taylor_gaussian_2009}).
Let $D_u = [u,\infty)$, then $f^{-1}(D_u)$ is a super-level set of $f$. The GKF (Theorem 4.1 in \cite{taylor_gaussian_2009}) then states that
\[
\mean{\chi( f^{-1}(D_u))} = \sum_{j=0}^d (2\pi)^{-j/2} \cL_j(M) \cM_j(D_u),
\]
where $\cL_j(M)$ are geometric functionals of $M$ known as the \emph{Lipschitz-Killing curvatures}, and $M_j$ is slightly different object known is the \emph{Gaussian-Minkowski functional}. For the special case where $M=\T^d$ it can be shown that $\cL_j(\T^d) = 0$ for all $j < d$, and $\cL_d(\T^d) = 2/\omega_d$. In addition, in \cite{taylor_gaussian_2009} it is shown that $M_j(D_u) = (2\pi)^{-1/2}\cH_{j-1}(u)e^{-u^2/2}$, where $\cH_n(u)$ are the Hermite polynomial in \eqref{eqn:hermite}.
Thus, we have
\[
\mean{\chi( f^{-1}(D_u))} =
\frac{2}{\omega_d} (2\pi)^{-\frac{d+1}{2}} \cH_{d-1}(u) e^{-u^2/2}.
\]
Finally, recall that we defined $G(\alpha)$ as the sub-level sets of $f$. In addition, since $f$ is a zero-mean Gaussian field, we have that $f(x)$ and $\tilde f(x) := -f(x)$ have the same distribution. Since the sub-level sets of $f$ are the super-level sets of $\tilde f$, we have
\[
\bar\chi_G(\alpha) := \mean{\chi(G(\alpha)} = \meanx{\chi(\tilde f^{-1}(D_{-\alpha}))} = \meanx{\chi( f^{-1}(D_{-\alpha}))},
\]
and therefore,
\[
\bar\chi_G(\alpha) := \mean{\chi(G(\alpha)} = \frac{2}{\omega_d} (2\pi)^{-\frac{d+1}{2}} \cH_{d-1}(-\alpha) e^{-\alpha^2/2}.
\]
\section{Symmetry and Duality for the Permutahedral Complex}\label{sec:duality}
In this section we provide formal proofs for the statements on symmetry that are discussed in Section~\ref{sec:perm}. For the case of the site percolation on a hexagonal grid, the symmetry around
$p=1/2$ is well known. Here we extend it to arbitrary dimension, but note that the proofs assume some familiarity with algebraic topology.

The idea behind the proofs is to relate a subspace of a manifold (in this case, $d$-torus), with its complement. Informally, the topology of the manifold and a subspace determine the topology of the complement. The most well known example of this is Alexander duality, which relates the $k$-th homology of a subspace with the $d-k$ cohomology of the complement. In our setting, we consider the Betti numbers so there is no distinction between homology and cohomology.
\begin{figure}[h!]
\centering\includegraphics[width=0.5\textwidth]{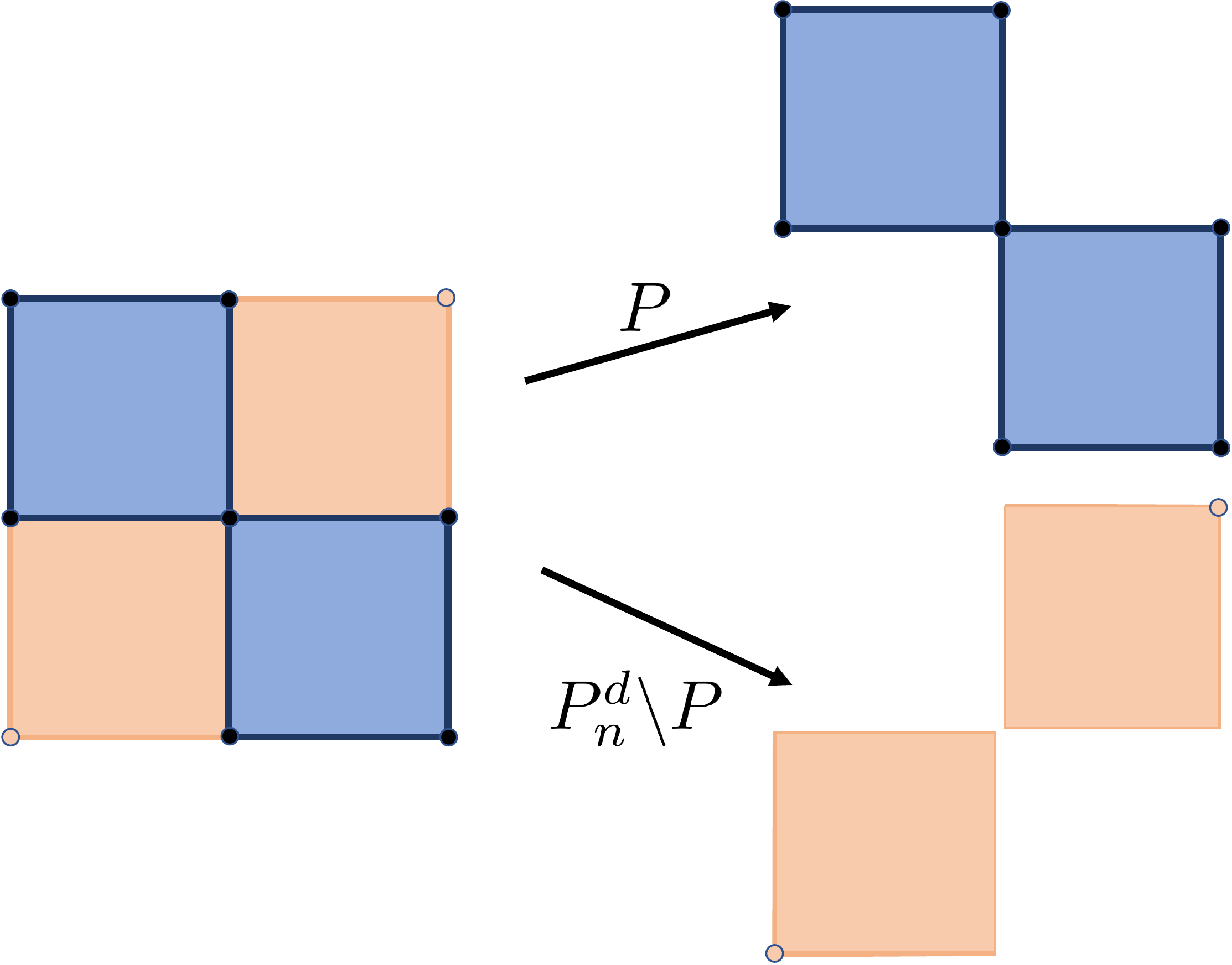}
\caption{\label{fig:complement} An example of a failure of symmetry for the cubical complex. The dark squares indicate open sites. The complement of the open sites is different (bottom right) than the union closed sites. In particular, the latter is connected (via the point in the middle), while the former is not.}
\end{figure}
Before getting to the duality, there is a technical obstacle to overcome. In the site models we consider, taking the complement of the open sites is not the same as considering the union of the closed sites, but rather it is equivalent to the \emph{closure} of the complement. This difference can change the topology as in the case of the cubical complex, as can be seen in Figure~\ref{fig:complement}. Hence, we first prove that the complement and the closure of the complement are equivalent.
As in Section~\ref{sec:perm}, let $P\subseteq P^d_n $ and $P^c = \mathrm{cl}(P^d_n \backslash P)$.
\begin{lem}\label{lem:perm_equivalence}
		For all $0\le k \le d$,
		 $$\Hg_k(P^c) \cong \Hg_k(P^d_n \backslash P). $$
\end{lem}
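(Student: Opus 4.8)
The plan is to show that the inclusion $P_n^d \setminus P \hookrightarrow P^c = \mathrm{cl}(P_n^d \setminus P)$ is a homotopy equivalence, or more precisely a deformation retraction in the relevant combinatorial sense, so that it induces isomorphisms on all homology groups. The geometric content is that $P^c$ is obtained from the open region $P_n^d \setminus P$ by adding only ``lower-dimensional boundary fluff'' coming from faces that are shared between an open cell of $P$ and a closed cell. The key structural fact I would invoke is the genericity of the permutahedral tessellation already referenced in Section~\ref{sec:perm} and in the EC computation in Appendix~\ref{sec:ec_calc}: every $k$-face of $P_n^d$ lies on the boundary of exactly $(d+1-k)$ top-dimensional permutahedra, and the local picture around each face is the standard one (no coincidences as can happen at the vertex in the cubical Figure~\ref{fig:complement}). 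This is exactly the property the cubical complex fails, which is why the cubical analogue of the lemma is false.

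Concretely, I would argue as follows. First, describe $P^c$ explicitly: a (closed) face $\tau$ of $P_n^d$ belongs to $P^c$ iff every neighbourhood of $\tau$ meets $P_n^d \setminus P$, which by genericity happens iff $\tau$ is a face of at least one \emph{closed} top-cell. So $P^c$ is a genuine subcomplex of $P_n^d$ (the subcomplex generated by the closed $d$-cells), and $P_n^d \setminus P$ is its interior: $P_n^d \setminus P = P^c \setminus (P^c \cap P)$, where $P^c \cap P$ is the subcomplex of faces that are shared between a closed and an open $d$-cell. Next, I would show that $P^c$ deformation retracts onto $P_n^d \setminus P$, or equivalently that $P^c \cap P$ can be collapsed away. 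The cleanest route: use the nerve/collar structure. Each face $\tau \in P^c \cap P$ has a well-defined ``star towards the open region'' — since $\tau$ sits generically on the boundary, there is a neighbourhood of $\tau$ in $P^c$ homeomorphic to $\tau \times \mathrm{cone}(\text{link})$, and the part of the link corresponding to open cells is nonempty and contractible (it is a single simplex-like piece because of genericity, exactly as $(d+1-k)$ cells meet in the standard way). Pushing each such $\tau$ slightly into the open cells gives a deformation retraction of $P^c$ onto $P_n^d \setminus P$. Since $P_n^d \setminus P$ is an open set whose closure deformation-retracts onto it, the inclusion induces isomorphisms on singular homology, giving $\Hg_k(P^c) \cong \Hg_k(P_n^d \setminus P)$ for all $k$.

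An alternative, perhaps more robust, route that avoids building an explicit retraction: show $P_n^d \setminus P$ is homotopy equivalent to $P^c$ by an excision/Mayer--Vietoris argument, or simply invoke that a finite CW-pair $(P^c, P^c \cap P)$ in which $P^c \cap P$ has a mapping-cylinder neighbourhood in $P^c$ that collapses into the open part — this is where genericity enters to guarantee the neighbourhood is a product. I would also note the special cases $k=0$ and $k=d$ to sanity-check: for $k=d$, $\Hg_d(P^c)$ and $\Hg_d(P_n^d\setminus P)$ are both trivial unless $P^c = P_n^d$, consistent with the claim; for $k=0$, the claim says adding boundary faces does not merge components of the open region, which is precisely the failure illustrated in Figure~\ref{fig:complement} for cubes and precisely what genericity rules out for permutahedra.

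The main obstacle is making the genericity argument rigorous: one must verify that in the permutahedral tessellation of $\T^d$ (obtained from the Voronoi cells of $\pi(A_d^*)$ with the appropriate gluing), no ``degenerate'' incidences occur after passing to the torus quotient — i.e.\ the local combinatorial structure at every face is the generic one, so that the link of each $k$-face decomposes cleanly into its ``open part'' and ``closed part'' each of which is contractible. This is essentially the same genericity statement used to count faces in Appendix~\ref{sec:ec_calc} (each $k$-face lies in exactly $d+1-k$ $d$-cells), so I would establish it once, carefully, for the permutahedral lattice on the torus and then reuse it. Once genericity is in hand, the deformation retraction is routine; the risk is entirely in the lattice-combinatorics bookkeeping and in checking that the torus gluing (identifying points of $\pi(A_d^*)$) does not create coincidences, for which I would cite \cite{conway2013sphere,baek2009some,bogdanov:hal-01224549}.
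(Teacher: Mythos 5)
Your proposal is correct in substance but takes a genuinely different route from the paper. You argue directly that the inclusion $P^d_n \setminus P \hookrightarrow P^{\mathrm{c}}$ is a homotopy equivalence, by identifying $P^{\mathrm{c}}$ as the subcomplex generated by the closed $d$-cells, identifying $P^d_n\setminus P$ as what remains after deleting the shared faces $P^{\mathrm{c}}\cap P$, and pushing those shared faces into the adjacent closed cells; the local input is that a $k$-face meets exactly $d+1-k$ top cells arranged generically, so the portion of its link spanned by any proper nonempty subset of those cells is a contractible (simplex-like) piece. The paper instead proves the same stronger homotopy-equivalence statement via the Nerve Lemma applied twice: it covers $P^{\mathrm{c}}$ by open neighbourhoods of its sites (a good cover by convexity), covers $P^d_n\setminus P$ by the restriction of the same cover, shows the restricted cover is still good because the partially-open faces remain star-shaped, and observes — using the same face/cell incidence correspondence you invoke — that the two nerves coincide, so both spaces are homotopy equivalent to one and the same nerve. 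The trade-off is clear: the nerve argument localizes all the work into ``each intersection is contractible'' and never requires constructing a homotopy, whereas your retraction is more geometric and makes the mechanism (and the contrast with the cubical failure in Figure~\ref{fig:complement}) vivid, but the global step — defining the push-in coherently across faces of all codimensions, e.g.\ via a regular-neighbourhood or barycentric-spine argument, and noting that one only gets a homotopy equivalence rather than a strong deformation retraction onto the non-closed set $P^d_n\setminus P$ — is exactly the part you leave at sketch level and would need to be carried out carefully. Both routes rest on the same genericity of the permutahedral tessellation (and its survival under the torus gluing), which, as you note, is also what underlies the face counts in Appendix~\ref{sec:ec_calc}.
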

\begin{proof}
	To prove this lemma, we prove a stronger statement, namely that the $P^c$ and $P^d_n \backslash P$ are homotopy equivalent. First, consider the open cover induced by the sites in $P^c$, denoted by $\mathcal{U}$. That is, each element in the cover is an open neighborhood of each site. Since the sites are convex, it follows $\mathcal{U}$ is a good cover and hence $P^c$ is homotopy equivalent to the nerve of the cover, $\mathcal{N}\mathcal{U}$.

	As noted in Section~\ref{sec:perm}, each site is a permutahedron of order $d+1$.  The interior of each site corresponds to the top dimensional cell of the permutahedra which are the same for $P^d_n \backslash P$ and $P^c$. The two differ in that $P^d_n \backslash P$ is does not have lower dimensional faces (of the sites) which are adjacent to sites both in $P^c$ and in $P$.
	Taking the same open cover as above, but on $P^d_n \backslash P$, denoted by $\mathcal{U}'$.
	We show that this is a good cover \emph{and} that the nerves are the same, which implies the result.

	 First, we note that there is a one-to-one correspondence between the $(d-k)$-faces of permutahedron and $k$-simplices of the nerve ~\cite{choudhary2019polynomial}[Proposition 7]. That is, each intersection of $(k+1)$ cells corresponds to a $(d-k)$-face of the permutahedron. For any $(d-k)$-face $\tau$ in  $P^d_n \backslash P$,  it must be adjacent to $(k+1)$ sites in $P^d_n \backslash P$ and so cannot be adjacent to any sites in $P$. Note that  lower dimensional faces of $\tau$ (which are in the closure of $\tau$) may be missing from $P^d_n \backslash P$ and so it is not convex. It however remains star-shaped and hence the $(k+1)$ intersection of cover elements is contractible, implying that $\mathcal{U}'$ is a good cover.

	 The same argument also shows that any face which is in $P^c$ but not $P^d_n \backslash P$,
	  does not affect the nerve as the corresponding interesction remains non-empty.  Note that in the 2$D$ cubical complex a pairwise intersection may correspond to a vertex rather than an edge which breaks the argument above.

		Hence the $\mathcal{N}\mathcal{U} =\mathcal{N}\mathcal{U}'$ completing the proof.
\end{proof}

The above lemma allows us to use $P^c$ and $P^d_n \backslash P$ interchangeably. We can now prove Lemma~\ref{lem:dual_betti}.
\begin{lem}[\ref{lem:dual_betti}]For $0\le k \le d$,
\[
\cB_k(P) + \cB_{d-k}(P^\c) = \beta_k(\T^d).
\]
\end{lem}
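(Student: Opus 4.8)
The plan is to deduce Lemma~\ref{lem:dual_betti} from Alexander duality on the torus, carefully tracking how the ``giant'' (essential) cycles on each side correspond under the duality isomorphism. First I would fix field coefficients and recall that, since $\T^d$ is a closed orientable $d$-manifold, Poincar\'e--Lefschetz duality and Alexander duality give, for a nice compact subcomplex $P\subseteq \T^d$ with complement $U = \T^d\setminus P$, a natural isomorphism $\Hg_k(P)\cong \Hg^{d-k}(\T^d, U)$, and hence (by the long exact sequence of the pair and the universal coefficient theorem over a field) a way to relate $\Hg_k(P)$ and $\Hg_{d-k}(U)$. By Lemma~\ref{lem:perm_equivalence} we may freely replace $U = P_n^d\setminus P$ by $P^\c$, which is what makes the argument work for the permutahedral complex (and fails for the cubical one).

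The key step is to identify the \emph{images} in $\Hg_\bullet(\T^d)$. Consider the commutative ladder relating the long exact sequences of the pair $(\T^d, U)$ and the inclusions $P\hookrightarrow \T^d$, $U\hookrightarrow \T^d$. The map $i_*:\Hg_k(P)\to\Hg_k(\T^d)$ factors through $\Hg_k(P)\cong\Hg^{d-k}(\T^d,U)\to\Hg^{d-k}(\T^d)$, and the latter map sits in the exact sequence $\Hg^{d-k}(\T^d,U)\to\Hg^{d-k}(\T^d)\to\Hg^{d-k}(U)$. Dualizing (finite-dimensional vector spaces over a field), $\Hg^{d-k}(\T^d)\to\Hg^{d-k}(U)$ is the dual of $j_*:\Hg_{d-k}(U)\to\Hg_{d-k}(\T^d)$. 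So $\operatorname{rank}$ of the image of $i_{*}$ equals $\dim\Hg_{k}(\T^d)$ minus the rank of the image of $\Hg^{d-k}(\T^d)\to\Hg^{d-k}(U)$, and the latter rank equals $\operatorname{rank}(j_*) = \cB_{d-k}(U)$ by duality. Using the identification $\Hg^{d-k}(\T^d)\cong\Hg_k(\T^d)$ (again Poincar\'e duality, with $\dim = \binom{d}{k} = \binom{d}{d-k}$), this yields exactly $\cB_k(P) + \cB_{d-k}(P^\c) = \beta_k(\T^d)$.

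Concretely, the steps in order are: (1) set up coefficients and cite Lemma~\ref{lem:perm_equivalence} to pass to $P^\c$; (2) write down the Alexander/Poincar\'e--Lefschetz duality isomorphism $\Hg_k(P)\cong\Hg^{d-k}(\T^d,U)$ and the analogous one for $U$; (3) assemble the commutative diagram linking the two long exact sequences with the maps induced by inclusion into $\T^d$; (4) run a diagram chase (or rank count) to show that the natural map $\Hg^{d-k}(\T^d,U)\to\Hg^{d-k}(\T^d)$ has image complementary, in the appropriate exact-sequence sense, to the image of $\Hg^{d-k}(\T^d)\to\Hg^{d-k}(U)$; (5) dualize over the field to translate cohomological images of inclusions back into the homological quantities $\cB_{d-k}(P^\c)$; (6) apply Poincar\'e duality on $\T^d$ to match $\dim\Hg^{d-k}(\T^d)$ with $\beta_k(\T^d)$ and conclude.

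\textbf{Main obstacle.} The routine part is invoking the duality isomorphisms; the delicate point is step~(4)--(5): verifying that the duality isomorphisms are \emph{compatible} with the inclusion-induced maps so that ``image of $i_*$'' really does correspond under duality to ``cokernel of the dual of $j_*$'' — i.e., that all the squares in the ladder genuinely commute (up to sign, which is irrelevant for dimensions). This is where naturality of cap products with the fundamental class must be used carefully, and where one must be sure the compact/closed hypotheses (so that $\T^d$ is a closed manifold and $P, P^\c$ are ``nice'' enough — e.g.\ CW or ENR pairs) are actually satisfied; Lemma~\ref{lem:perm_equivalence} is precisely what guarantees this in the permutahedral setting. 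A secondary subtlety is handling the boundary cases $k=0$ and $k=d$, where one should check the statement against $\cB_0(P)$ (number of essential $0$-cycles, i.e.\ whether $P$ surjects onto $\Hg_0(\T^d)=\F$) and the top class directly.
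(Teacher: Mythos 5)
Your proposal is correct and takes essentially the same route as the paper: both replace $P_n^d\setminus P$ by $P^\c$ via Lemma~\ref{lem:perm_equivalence}, assemble the commutative ladder between the homology long exact sequence of the pair $(\T^d,P)$ and the cohomology long exact sequence of $(\T^d,P^\c)$ using Lefschetz and Poincar\'e duality, and then use exactness together with field-coefficient duality to identify $\rank(i_*)+\rank(j^*)$ with $\beta_k(\T^d)$. Your rank count at $\Hg^{d-k}(\T^d)$ (image of the relative map equals kernel of restriction, the restriction being dual to $j_*$) is just a rephrasing of the paper's decomposition $\Hg_k(P_n^d)\cong \im i_* \oplus \coker i_*$ with $\coker i_* \cong \im j^*$.
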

\begin{proof}
	In this proof, we use $P^c$ in place of the complement of $P$ as they are equivalent by Lemma~\ref{lem:perm_equivalence}.
	There exists a commutative diagram where the rows are exact, due to the long exact sequence for relative (co)homology.
\begin{equation}\label{eq:diagram}
	\begin{tikzcd}
		\Hg_{k}(P) \arrow[r,"i_*"] & \Hg_{k}(P^d_n) \arrow[r] & \Hg_{k}(P^d_n, P) \arrow[r,"\delta_k"]  &		\Hg_{k-1}(P)\\
		\Hg^{d-k}(P^d_n,P^c) \arrow[u,"\cong"] \arrow[r] & \Hg^{d-k}(P^d_n)\arrow[u,"\cong"] \arrow[r,"j^*"] &\Hg^{d-k} (P^c) \arrow[r,"\delta^{d-k}"]\arrow[u,"\cong"] &\Hg^{d-k+1}(P^d_n,P^c)\arrow[u,"\cong"]
	\end{tikzcd}
\end{equation}
	The leftmost  isomorphism follows from the Lefschetz duality, the second from Poincare duality, and the third from the Five Lemma. Note that a detailed proof can be found in  \cite{hatcher_algebraic_2002}[Theorem 3.44]. We can decompose the full space as
\begin{equation}\label{eq:exact}
	\Hg_{k}(P^d_n) \cong \im i_* \oplus\coker i_*,
\end{equation}
	 and by exactness and a diagram chase, we have  that $\coker i_* \cong \im j^*$. We observe that
\begin{align*}
	\cB_k(P)  &= \dim(i_*),\\
	\cB_{d-k}(P^c) &= \dim(j^*),
\end{align*}
where the second equality follows from the equivalence for ranks of homology and cohomology over fields. Substituting into Equation \eqref{eq:exact}, we obtain the result,
$$ \beta_k(\T^d) = \beta_{k}(P^d_n) =  \cB_k(P) + \cB_{d-k}(P^c)$$
\end{proof}

We conclude by proving the symmetry of the Euler curve.
\begin{lem}
For the permutahedral complex we have the following symmetry,
\[\chi_P(p) = (-1)^d\chi_P(1-p).\]
\end{lem}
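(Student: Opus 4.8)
The plan is to reduce the claimed symmetry to a comparison between $P$ and its complement $P^\c$ at a \emph{fixed} parameter, and then extract the factor $(-1)^d$ from a grading-reversing duality. Recall from Section~\ref{sec:perm} that $P^\c(n,p)$ has the same law as $P(n,1-p)$, so that $\bar\chi_P(1-p) = \mean{\chi(P(n,1-p))} = \mean{\chi(P^\c(n,p))}$. Reading the stated identity as the identity of expected curves $\bar\chi_P(p) = (-1)^d \bar\chi_P(1-p)$, it therefore suffices to show, at a fixed $p$, that $\mean{\chi(P)} = (-1)^d\mean{\chi(P^\c)}$. By linearity of expectation it is enough to establish the pathwise relation $\chi(P) = (-1)^d \chi(P^\c)$ for a generic configuration, and Lemma~\ref{lem:perm_equivalence} lets me replace $P^\c = \mathrm{cl}(P_n^d\setminus P)$ by the open complement $P_n^d\setminus P$ throughout, so that none of the pinch-point pathologies of the cubical case (Figure~\ref{fig:complement}) obstruct the argument.

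The sign is produced by the same grading-reversing duality that underlies Lemma~\ref{lem:dual_betti} and the diagram~\eqref{eq:diagram}. Poincar\'e--Lefschetz duality on the closed $d$-manifold $\T^d$ identifies the homology of $P$ in degree $k$ with a cohomology group of $P^\c$ in the complementary degree $d-k$, and over a field there is no distinction between ranks of homology and cohomology. Passing to alternating sums, the change of index $k\mapsto d-k$ in $\chi(P)=\sum_k(-1)^k\beta_k(P)$ turns the grading $k$ into $d-k$; since $(-1)^{d-k}=(-1)^d(-1)^k$, it factors out exactly $(-1)^d$ and reassembles $\chi(P^\c)=\sum_k(-1)^k\beta_k(P^\c)$, giving $\chi(P)=(-1)^d\chi(P^\c)$. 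To turn this heuristic into a proof I would run the long exact sequence of the pair exactly as in~\eqref{eq:diagram}, now at the level of the whole space $\Hg_\ast(P_n^d)$ rather than the image of $i_*$, and combine it with additivity of the Euler characteristic, $\chi(\T^d)=\chi(P)+\chi(P^\c)-\chi(P\cap P^\c)$, together with $\chi(\T^d)=\sum_k(-1)^k\binom dk=0$.

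The main obstacle is exactly the sign bookkeeping across this duality. The interface $P\cap P^\c$ is a closed $(d-1)$-dimensional subcomplex, and its Euler characteristic has to be tracked precisely: it is the term that behaves differently in even and odd dimension, and the whole point is to show that, after taking the alternating sum, its contribution is consistent with the single global factor $(-1)^d$. Concretely, I would check that the connecting maps $\delta_k$ in the long exact sequence cancel in the alternating sum, so that only the grading-reversal sign survives. The permutahedral genericity guaranteed by Lemma~\ref{lem:perm_equivalence} is what makes the Lefschetz and Poincar\'e isomorphisms of~\eqref{eq:diagram} applicable here — it ensures $P$ and $P^\c$ meet along a genuine codimension-one interface — and is what keeps this bookkeeping clean. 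Once the alternating sum is assembled and the relation $\chi(P)=(-1)^d\chi(P^\c)$ is in hand, combining it with $\bar\chi_P(1-p)=\mean{\chi(P^\c)}$ yields the stated symmetry $\chi_P(p)=(-1)^d\chi_P(1-p)$.
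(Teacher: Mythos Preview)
Your reduction matches the paper exactly: use $P^\c(n,p)\sim P(n,1-p)$ to reduce to the pathwise identity $\chi(P)=(-1)^d\chi(P^\c)$, invoke Lemma~\ref{lem:perm_equivalence} to pass to the open complement, and then appeal to the diagram~\eqref{eq:diagram}. The paper carries out that last step by writing, from exactness of each row, $\beta_k(P)=\dim(\im i_*(k))+\dim(\im\delta_{k+1})$ and $\beta_{d-k}(P^\c)=\dim(\im j^*(d-k))+\dim(\im\delta^{d-k})$, then taking alternating sums and using Lemma~\ref{lem:dual_betti} (that is, $\dim(\im i_*(k))+\dim(\im j^*(d-k))=\binom{d}{k}$) together with $\dim(\im\delta^{d-k})=\dim(\im\delta_k)$ from the vertical isomorphisms; the $\delta$-contributions telescope and the $\binom{d}{k}$-terms sum to $\chi(\T^d)=0$.

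Two cautions about your write-up. First, your heuristic that duality ``identifies $\Hg_k(P)$ with a cohomology group of $P^\c$ in degree $d-k$'' is not literally correct: Lefschetz duality gives $\Hg_k(P)\cong\Hg^{d-k}(\T^d,P^\c)$, not $\Hg^{d-k}(P^\c)$, and in general $\beta_k(P)\ne\beta_{d-k}(P^\c)$ (take $P$ to be a single cell). This is precisely why the paper keeps the connecting homomorphisms $\delta$ in the bookkeeping rather than asserting a direct grading-reversing Betti identification --- the ``naive'' alternating-sum argument you sketch would otherwise give the wrong answer. Second, the inclusion--exclusion route through $\chi(P\cap P^\c)$ is unnecessary: the paper never touches the interface, and the telescoping of the $\delta$-terms in the exact sequence is exactly what replaces the interface analysis you anticipate as the main obstacle.
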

\begin{proof}
	Since $P^c(n,1-p)\sim P(n,p)$, it suffices to show that $\chi(P) = (-1)^d \chi(P^c)$ for some $P$.
Consider the diagram~\eqref{eq:diagram}. By exactness,
\begin{align*}
	\beta_k(P) = \dim(\im i_*(k)) + \dim(\im \delta_{k+1}),\\
	\beta_{d-k}(P^c) = \dim(\im j^*(d-k)) + \dim(\im \delta^{d-k}).
\end{align*}
Note that we have added the dimension to the notation of the corresponding morphisms, i.e. $ i_*(k):  \Hg_k(P) \rightarrow \Hg_k(P^d_n)$.
Furthermore, the diagram implies $\dim(\im \delta^{d-k}) = \dim(\im \delta_{k})$.
Computing the Euler characteristic, yields
\begin{align*}
	\chi(P) &= \sum\limits_{k=0}^d (-1)^k \beta_k(P) = \sum\limits_{k=0}^d (-1)^k \left(\dim(\im i_*(k) + \dim(\im \delta_{k+1})\right)\\
	&= \sum\limits_{k=0}^d (-1)^k \left(\binom{d}{d-k} -  \dim(\im j^*(d-k)) + \dim(\im \delta_{d-k-1})\right)\\
	&= \sum\limits_{k=0}^d (-1)^k \left(  \dim(\im \delta^{d-k-1})-\dim(\im j^*(d-k))\right)\\
	&= \sum\limits_{k=0}^d (-1)^k \left(  \dim(\im \delta^{d-k-1}) - \beta_{d-k}(P^c) + \dim(\im \delta^{d-k})\right)\\
		&= (-1)^d\sum\limits_{k=0}^d \beta_{k}(P^c) +  \sum\limits_{k=0}^d (-1)^k \left(  \dim(\im \delta^{d-k-1}) + \dim(\im \delta^{d-k})\right)\\
			&= (-1)^d\chi(P^c)  +  \dim(\im \delta^{d}) - \dim(\im \delta^{-1}) = (-1)^d\chi(P^c)
	 \end{align*}
where the last inequality, we use that $\dim(\im \delta^{d}) = \dim(\im \delta^{-1}) = 0 $.
\end{proof}

% &= \sum\limits_{k=0}^d (-1)^k \binom{d}{d-k} + \sum\limits_{k=0}^d (-1)^k \left( \dim(\im \delta^{d-k+1}) -  \dim(\im j^*(d-k)) \right)\\
% &=\sum\limits_{k=0}^d (-1)^k \left( \dim(\im \delta^{d-k+1}) -  \dim(\im j^*(d-k)) \right)\\
% &=\sum\limits_{k=0}^d (-1)^k \left( \beta_{d-k+1} - \dim(\im j^*(d-k+1))  -  \dim(\im j^*(d-k)) \right)\\
% &=\sum\limits_{\ell=0}^d (-1)^{d-\ell} \left( \beta_{\ell+1} - \dim(\im j^*(\ell+1))  -  \dim(\im j^*(\ell)) \right)\\
% &=(-1)^d\sum\limits_{\ell=0}^d (-1)^{\ell} \left( \beta_{\ell+1} - \dim(\im j^*(\ell+1))  -  \dim(\im j^*(\ell)) \right)\\
% &=(-1)^d\left(\sum\limits_{\ell=0}^d (-1)^{\ell} \beta_{\ell+1} - \sum\limits_{\ell=0}^d (-1)^{\ell} \left(\dim(\im j^*(\ell+1))  + \dim(\im j^*(\ell)) \right)\right)\\
% &=(-1)^d\chi(P^c) - \beta_0(P^c) - j^*(0)

\section{Simulation details}\label{sec:sim_details}
Here we present some implementational details of the simulations. The persistence diagrams and hence Betti curves and appearance of the giant cycles were computed with GUDHI\cite{maria2014gudhi}. For the cubical site model, a $n^d$-grid with periodic connectivity was used  for varying values of $n$. After generating a uniform random function taking values in $[0,1]$, with the values assigned to the top dimensional cells. Persistence was then computed directly on the resulting cubical complex.

We note that the Gaussian random field was also approximated on a cubical grid according to the method described in \cite{wood1994simulation}. The resulting GRF was always generated on the unit torus, with $\sigma^2 = 10^{-3}$.

The permutahedral complex was built by constructing a set of points in the $A^*_d$ grid  embedded in $\R^{d+1}$. The 1-skeleton was then built by choosing an appropriate radius, so that all the neighbors were connected. Note that this can be thought of as the 1-skeleton of the Delaunay complex of the pointset. The points, along wiht their adjacent edges, were then identified with the outgoing edges appropriately identified. This embeds the 1-skeleton in $\T^d$. The full complex was then computed using  clique completion to higher dimensions. Again a random function was assigned to the sites, which correspond to the vertices of the resulting complex.
Persistent homology of the \emph{lower-star filtration} on this complex was then computed.

Finally for the Boolean model, in dimensions 2 and 3, the $\alpha$-filtration~\cite{edelsbrunner2010alpha} on the torus was used , whereas for dimension 4, for each sample point set, the connectivity threshold was first computed and was then used to compute the threshold for the construction of the \v Cech filtration.

%===============================================================================

\bibliographystyle{plain}
\bibliography{zotero}

%===============================================================================
\end{document}